\documentclass{article}
\usepackage[utf8]{inputenc}
\usepackage[english]{babel}
\usepackage[a4paper, total={6in, 8in}]{geometry}
\usepackage{xcolor}
\usepackage{xspace}
\usepackage{soul}
\usepackage{indentfirst}
\usepackage{url}
\usepackage{amsmath}
\usepackage{amssymb}
\usepackage{amsthm}
\usepackage{bm}
\usepackage{bbm}
\usepackage{subcaption}
\usepackage{cleveref}
\captionsetup[subfigure]{subrefformat=simple,labelformat=simple}

\usepackage{listings}
\usepackage{verbatim}
\usepackage{color} 
\definecolor{mygreen}{RGB}{28,172,0} 
\definecolor{mylilas}{RGB}{170,55,241}
\usepackage{tikz}
\usetikzlibrary{trees}

\usepackage{algorithm}
\usepackage{algpseudocode}

\newtheorem{assumption}{Assumption}

\newtheorem{lemma}{Lemma}

\newtheorem{proposition}{Proposition}

\DeclareMathOperator*{\argmin}{arg\,min}
\DeclareMathOperator*{\diag}{diag}

\newcommand{\Cov}{\mathrm{Cov}}

\newcommand{\btheta}{\boldsymbol{\theta}}
\newcommand{\bTheta}{\boldsymbol{\Theta}}

\def\beq{ \begin{equation} }
\def\eeq{ \end{equation} }

\def\square{\vcenter{\vbox{\hrule height .4pt
  \hbox{\vrule width .4pt height 5pt \kern 5pt
        \vrule width .4pt} \hrule height .4pt}}}

\def\sqz{\kern-0.2em}

\newcommand{\E}{\mathbb{E}}
\newcommand{\indi}{\mathbbm{1}}

\newcommand{\tb}{\color{blue}}


\usepackage{mathtools}

\NewDocumentCommand{\expect}{ e{^} s o >{\SplitArgument{1}{|}}m }{%
  \operatorname{E}
  \IfValueT{#1}{{\!}^{#1}}
  \IfBooleanTF{#2}{
    \expectarg*{\expectvar#4}%
  }{
    \IfNoValueTF{#3}{
      \expectarg{\expectvar#4}%
    }{
      \expectarg[#3]{\expectvar#4}%
    }%
  }%
}
\NewDocumentCommand{\expectvar}{mm}{%
  #1\IfValueT{#2}{\nonscript\;\delimsize\vert\nonscript\;#2}%
}
\DeclarePairedDelimiterX{\expectarg}[1]{[}{]}{#1}

\title{A statistical framework for detecting therapy-induced resistance from drug screens}
\author{Chenyu Wu, Einar Bjarki Gunnarsson, Jasmine Foo, Kevin Leder}
\date{\today}

\graphicspath{{Plots/}}

\begin{document}
\lstset{language=Matlab,%
    breaklines=true,%
    morekeywords={matlab2tikz},
    keywordstyle=\color{blue},%
    morekeywords=[2]{1}, keywordstyle=[2]{\color{black}},
    identifierstyle=\color{black},%
    stringstyle=\color{mylilas},
    commentstyle=\color{mygreen},%
    showstringspaces=false,
    numbers=left,%
    numberstyle={\tiny \color{black}},
    numbersep=9pt, 
    emph=[1]{for,end,break},emphstyle=[1]\color{red}, 
}

\maketitle

\begin{abstract}
    Resistance to therapy remains a significant challenge in cancer treatment, often due to the presence of a stem-like cell population that drives tumor recurrence post-treatment. 
    Moreover, many anticancer therapies induce plasticity, converting initially drug-sensitive cells to a more resistant state, e.g. through epigenetic processes and de-differentiation programs.  Understanding the balance between therapeutic anti-tumor effects and induced resistance is critical for identifying treatment strategies.     In this study, we introduce a robust statistical framework, based on multi-type branching process models of the evolutionary dynamics of tumor cell populations, to detect and quantify therapy-induced resistance phenomena from high throughput drug screening data. 
    Through comprehensive \textit{in silico} experiments, we show the efficacy of our framework in estimating parameters governing population dynamics and drug responses in a heterogeneous tumor population where cell state transitions are influenced by the drug. 
    Finally, using recent \textit{in vitro} data from multiple sources, we demonstrate that our framework is effective for analyzing real-world data and generating meaningful predictions.
\end{abstract}

\section{Introduction}

Cellular plasticity, the ability to reversibly transition between phenotypic states, is a fundamental cell feature. 
In the context of cancer treatment, cellular plasticity often manifests as a transition from a therapy-sensitive state to a therapy-resistant state, a phenomenon known as therapy-induced resistance \mbox{\cite{shi2023tumor, boumahdi2020great}}.
This phenomenon has been observed not only in pharmaceutical treatments but also in other modalities such as radiotherapy \mbox{\cite{kyjacova2015radiotherapy,iborra2023chemotherapy,shi2023tumor}}, reducing the efficacy of a broad class of anti-cancer therapies. 
A critical phenotypic state transition related to therapy-induced resistance is the de-differentiation of cancer non-stem-like cells (CNSCs).
This process involves mature cancer cell phenotypes (CNSCs) reverting to immature, cancer stem-like phenotypes (CSCs), which have unlimited proliferation potential and are inherently drug-resistant  \mbox{\cite{dingli2006successful}}.
Understanding drug-induced transitions from CNSCs to CSCs can help shed light on tumor heterogeneity and lead to improved treatment plans \mbox{\cite{leder2014mathematical}}.

In a recent study \cite{padua2023high}, ciclopirox olamine (CPX-O), originally an antifungal agent, was identified for its ability to induce the production of gastric CSCs from gastric CNSCs. 
The study utilized the SORE-GFP reporter system coupled with fluorescence-activated cell sorting (FACS) to distinguish between gastric CSCs and CNSCs. 
A high-throughput screening (HTS) technique was employed to assess the effects of various potential medications on a homogeneous population of gastric CNSC to explore drug-induced plasticity. 
One key assumption of this study is the effective separation of gastric CSCs and CNSCs using the FACS technique and the SORE-GFP reporter system, which may limit the scope of the findings. 
Additionally, the study noted CPX's significant cytotoxic impact on gastric CNSCs.
This complicates assessment of the level of drug-induced plasticity, since increased abundance of CSCs relative to CNSCs can both be explained by drug-induced plasticity and a selective advantage over CNSCs.
Consequently, the empirical inference of drug-induced plasticity presents significant challenges. As an alternative approach, we propose using a mathematical model to understand how the drug affects the proliferation of both CNSCs and CSCs, and to tease apart the drug's effect on cell proliferation vs.~transitions from CNSCs to CSCs.


Mathematical modeling has increasingly played a crucial role in understanding and treating cancer \cite{yin2019review,mathur2022optimizing}, offering advantages in modeling complex biological dynamics through manageable mathematical models. 
For instance, recent studies have applied the multi-type branching process model to depict diverse tumor populations undergoing phenotypic switching \cite{jilkine2019mathematical,gunnarsson2023statistical,gunnarsson2020understanding}. 
This approach abstracts the cell division process into `branching events', allowing each cell to generate descendants across any subtype described in the model. 
As a result, the model can naturally describe processes such as self-proliferation, differentiation, and de-differentiation of CSCs and CNSCs.

On the other hand, several recent studies \cite{kohn2023phenotypic,wu2024using} have proposed frameworks for deconvoluting subpopulation structure directly from HTS bulk cell count data, avoiding the need for a reporter system coupled with FACS to separate distinct subpopulations.
Instead, these studies distinguish subpopulations based on their distinct responses to a given drug. 
For instance, these works have successfully identified Imatinib-sensitive and -resistant Ba/F3 cells from a population mixture by examining heterogeneous drug response curves at various concentration levels.
However, these frameworks assume that each subpopulation can only generate its own subtype, thereby limiting their ability to capture the differentiation and de-differentiation dynamics of CSCs and CNSCs.

In this study, we have developed a novel statistical framework that integrates the multi-type branching process with a drug response model to simulate heterogeneous cell growth dynamics under drug influence. 
This innovative framework serves as a mathematical tool for investigating drug-induced plasticity between CNSCs and CSCs directly from HTS bulk cell count data.
Given our focus on drug-induced plasticity between CNSCs and CSCs and the inherent resistance of CSCs, we will use the term drug-induced plasticity and the term drug-induced resistance interchangeably.

The paper is organized as follows.
In Section \ref{sec:Methodology}, we introduce our newly proposed framework. 
Based on this framework, we conduct multiple \textit{in silico} experiments to validate our model predictions in Section \ref{sec:In silico}. 
Subsequently, in Section \ref{sec:In vitro}, we validate our framework using \textit{in vitro} data obtained from \cite{padua2023high} and \cite{comandante2020phenotype}. 
Finally, in Section \ref{sec:Conclusion}, we conclude by discussing the advantages and limitations of our framework.
A detailed derivation of the methods and the experimental settings are provided in the Appendix.

\section{Methodology}

\label{sec:Methodology}


In this section, we start by introducing a general model to describe heterogeneous tumor growth dynamics, capable of accommodating an arbitrary number of distinct phenotypes.
Then, we develop a drug-effect model using the classical Hill equation, which is widely employed to represent dose-response relationships. 
By integrating these two models, we propose a novel statistical framework for inferring underlying tumor dynamics under the treatment, including homogeneous proliferation and phenotypic transition, from HTS data. 
Finally, we tailor this framework to a specific case involving two subpopulations within the tumor, CSCs and CNSCs, which will be the focus of our analysis in the remainder of the paper.
In what follows, we denote vectors and matrices in boldface letters and all the vectors are row vectors.

\subsection{Asymmetrical birth multi-type branching process model}

\label{sec:Cell dynamic model}


Assume there are $K$ subpopulations of phenotypes within a tumor, each exhibiting its own growth dynamics and drug response. 
These subpopulations are indexed by $\mathcal{K} = \{1,\cdots, K\}$. 
Increasing evidence indicates that asymmetric cell division is a key mechanism for both differentiation \cite{hitomi2021asymmetric,morrison2006asymmetric} and de-differentiation \cite{song2021asymmetric}. 
Therefore, we assume that each phenotype can transition to one or more other phenotypes via asymmetric cell division.
Our framework can also readily accommodate phenotypic transitions that occur independently of cell divisions, as discussed for example in \cite{gunnarsson2023statistical}.

We employ an asymmetric birth multi-type branching process model in continuous time \cite{athreya2004branching}.
In the model, a type-$i$ cell divides into two type-$i$ cells at rate $\alpha_i$, experiences natural death at rate $\beta_i$, and asymmetrically divides into one type-$i$ cell and one type-$j$ cell at rate $\nu_{ij}$ (see Figure \ref{fig:Multi-type branching illustrative}). 
Specifically, within a given infinitesimal time interval $\Delta t > 0$, a type-$i$ cell has a probability $\alpha_i \Delta t$ of dividing into two type-$i$ cells, $\beta_i \Delta t$ of dying, and $\nu_{ij}\Delta t$ of asymmetrically dividing into one type-$i$ and one type-$j$ cell. 
The net growth rate of type-$i$ cells, $\kappa_i$, is defined as $\kappa_i := \alpha_i - \beta_i$.
The dynamics of the model are encoded in the  \textit{infinitesimal generator matrix} {\bf A}, which is the $K \times K$ matrix
\begin{equation}
\label{eq: generator matrix}
\mathbf{A} := \begin{bmatrix}
\kappa_1 & \nu_{12} & \cdots & \nu_{1K}\\
\nu_{21} & \kappa_2 & \cdots & \nu_{2K}\\
\vdots & \vdots & \ddots & \vdots\\
\nu_{K1} & \nu_{K2} & \cdots & \kappa_K
\end{bmatrix}.
\end{equation}
In this matrix, the $(i,j)$-th element is the net rate at which a type-$i$ cell produces a type-$j$ cell.
For all $K$ phenotypes, we assume a strictly positive birth rate ($\alpha_i > 0, i \in \mathcal{K}$), a non-negative death rate ($\beta_i \geq 0, i\in \mathcal{K}$), and a non-negative asymmetric birth rate ($\nu_{i,j} \geq 0, i\neq j$).
We furthermore assume that $\kappa_i>0$ for all $i\in \mathcal{K}$, i.e.~the net birth rate is positive, though transition and death may not occur for all phenotypes. 

For simplicity, we assume that the multi-type branching process is \textit{irreducible}, meaning that each subpopulation can eventually produce descendants in any other subpopulation, possibly through intermediate types. Mathematically, this means that the \textit{infinitesimal generator matrix} $\mathbf{A}$ cannot be transformed into a block upper triangular matrix via simultaneous row or column permutation.



\begin{figure}[ht]
\begin{subfigure}{.33\textwidth}
  \centering
  \includegraphics{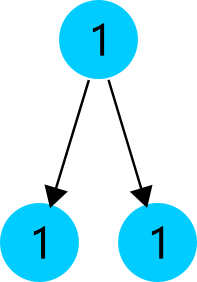}  
  \caption{Symmetric division (rate: $\alpha_1$)}
  \label{fig:Sy_birth}
\end{subfigure}
\begin{subfigure}{.33\textwidth}
  \centering
  \includegraphics{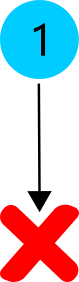}  
  \caption{Death (rate: $\beta_1$)}
  \label{fig:Death}
\end{subfigure}
\begin{subfigure}{.33\textwidth}
  \centering
  \includegraphics{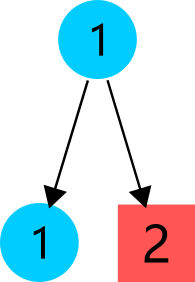}
  \caption{Asymmetric division (rate: $\nu_{12}$)}
  \label{fig:Asy_birth}
\end{subfigure}
\caption{Illustration of three independent events in the asymmetrical birth multi-type branching process model. Different shapes represent different phenotypes.}
\label{fig:Multi-type branching illustrative}
\end{figure}

We denote the cell counts of each phenotype as a random vector: $\mathbf{B}(t) = [B_{1}(t),B_{2}(t),\cdots, B_{K}(t)]$, where $B_{i}(t)$ is the number of type-$i$ cells at the time $t$. 
For the special case where the process is started by a single type-$i$ cell, the cell count vector at time $t$ is denoted by $\mathbf{B}^{(i)}(t) = [B^{(i)}_{1}(t),B^{(i)}_{2}(t),\cdots, B^{(i)}_{K}(t)]$, where $\mathbf{B}^{(i)}(0) = \mathbf{e}_i$ is the $i$-th identity vector.
The \textit{mean vector} and \textit{covariance matrix} for $\mathbf{B}^{(i)}(t)$ are denoted by
\begin{align*}
    \mathbf{m}^{(i)}(t) &:= \expect*{\mathbf{B}^{(i)}(t)},\\
    \mathbf{\Xi}^{(i)}(t)&:= \expect*{\left(\mathbf{B}^{(i)}(t) - \mathbf{m}^{(i)}(t)\right)^\intercal \left(\mathbf{B}^{(i)}(t) - \mathbf{m}^{(i)}(t)\right)},
\end{align*}
where $t\geq 0$ and $\left(\mathbf{B}^{(i)}(t) - \mathbf{m}^{(i)}(t)\right)^{\intercal}$ represents the transpose of the vector $\left(\mathbf{B}^{(i)}(t) - \mathbf{m}^{(i)}(t)\right)$. 
To explicitly compute the \textit{mean vector}, we define the {\em mean matrix} using the matrix exponential:
\[\mathbf{M}(t) = \exp{(t \mathbf{A})} = \sum_{j = 0}^{\infty} \left(\frac{t^j}{j!} \right) \mathbf{A}^{j}.\]
The \textit{mean vector} $\mathbf{m}^{(i)}(t)$ can be computed as the $i$-th row of the {\em mean matrix},
\[\mathbf{m}^{(i)}(t) = \mathbf{e}_i \mathbf{M}(t),\]
and the covariance matrix $\mathbf{\Xi}^{(i)}(t)$ can be computed as shown in Proposition \ref{prop: finite CLT} below.

\subsection{Drug-effect Model}

\label{sec:drug-effect model}
Before modeling the drug effect, we introduce a fixed drug dose $d$ and denote the cell count at time $t$ and concentration level $d$ as $\mathbf{B}(t,d)$. The \textit{mean matrix} $\mathbf{M}(t,d)$ and the covariance matrix $\mathbf{\Xi}^{(i)}(t,d)$ follow. Note that we use concentration level and dose interchangeably.

\subsubsection{Base model}
To model the drug effect, we define the well-known Hill equation with parameters $(b,E,m)$ as
\[H(d;b,E,m) = b + \frac{1-b}{1 + (d/E)^m}.\]
\begin{sloppypar}
The Hill equation is a classic sigmoidal function used to describe a dose-response \cite{prinz2010hill}. 
In the equation, the parameter $b$ represents the maximum drug effect, since 
$H(0;b,E,m) = 1$ and
$\lim_{d \rightarrow \infty} H(d;b,E,m) = b.$
For $b\in (0,1)$, the Hill equation is strictly decreasing, while it is strictly increasing for $b>1$ (Figure \ref{fig:Hill equations}).
The parameter $E$ indicates the concentration at which 50 percent of the maximum effect is attained, known as the inflection point. 
The last parameter $m$ is the Hill coefficient, which controls the steepness of the Hill equation around the inflection point. 
As a simplification, we fix the Hill parameter $m = 1$ for all drug effects when conducting our {\em in silico} experiments in Section \ref{sec:In silico}. 
We then reintroduce this parameter when modeling \textit{in vitro} data in Section \ref{sec:In vitro}. 
\end{sloppypar}

\begin{figure}[ht]
\begin{subfigure}{.5\textwidth}
  \centering
  \includegraphics[width = \linewidth]{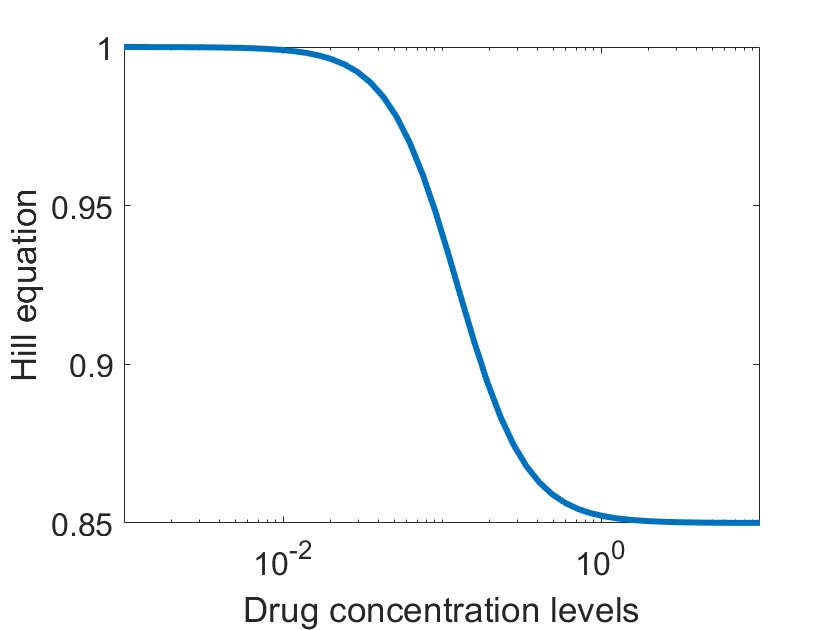}  
  \caption{Hill equation when $b \in (0,1)$}
  \label{fig:Hill(0,1)}
\end{subfigure}
\begin{subfigure}{.5\textwidth}
  \centering
  \includegraphics[width = \linewidth]{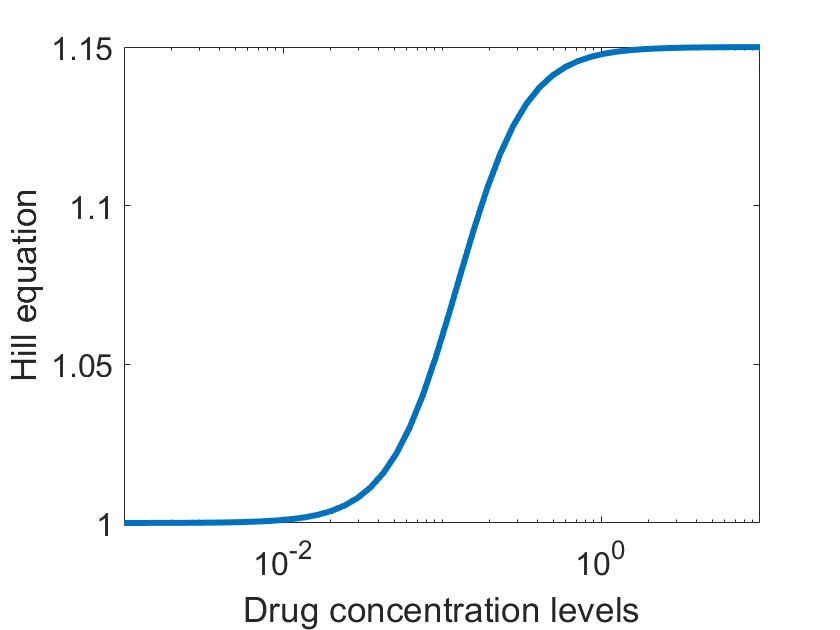}  
  \caption{Hill equation when $b > 1$}
  \label{fig:Hill(1,2)}
\end{subfigure}
\caption{Hill equation illustrations}
\label{fig:Hill equations}
\end{figure}



We initially assume the effects of a single drug are time-homogeneous.
We distinguish between two types of drug responses, drug toxicity response and drug-induced plasticity response, thereby capturing multiple effects from a single drug.



\begin{itemize}

\item \textit{Drug toxicity response (cytotoxic effect)}: The parameters related to this response are denoted as $(b_{i,\beta},E_{i,\beta})$ for type $i$. 
We assume that $b_{i,\beta} \in (0,1)$ and that the death rate of type-$i$ under dose $d$ is given by (for the case $m=1$):
\begin{equation}
\label{eq: Drug affected symmetric birth rate}
\beta_i(d) = \beta_i - \log(H(d;b_{i,\beta},E_{i,\beta})) = \beta_i - \log\left(b_{i,\beta} + \frac{1 - b_{i,\beta}}{1 + (d/E_{i,\beta})} \right).
\end{equation}
The net growth rate of a type $i$ cell, $\kappa_i$, is therefore negatively affected by increasing drug concentration, i.e.~$\kappa_i(d) = (\alpha_i - \beta_i) + \log(H(d;b_{i,\beta},E_{i,\beta}))$.
In this scenario, the drug is assumed to act through a cytotoxic mechanism, meaning that higher doses result in increased rates of cell death. 
We note that our framework can also easily incorporate cytostatic effects, where higher doses lead to a lower cell division rate, using a similar Hill function.


\item \textit{Drug-induced plasticity response}: The parameters related to this response are denoted as $(b_{i,\nu},E_{i,\nu})$ for type-$i$. 
For simplicity, we assume here that the drug effect on type-$i$ cell transitions is independent of the target phenotype $j$.
Drug-induced phenotypic transition is a widely observed effect of cancer treatment \cite{pisco2013non,goldman2015temporally,su2017single}. 
Particularly, drug-sensitive phenotypes may be induced to develop resistance when exposed to the drug. 
When modeling the drug effect on the asymmetric birth rate, we assume $b_{i,\nu} \geq 1$ to account for elevated transitions between subpopulations caused by the therapeutic environment. 
We furthermore assume that the asymmetric birth rate of type $i$ to any other type $j$  under dose $d$ is given by (for the case $m=1$):
\begin{equation}
\label{eq: Drug affected asymmetric birth rate}
\nu_{ij}(d)= \nu_{ij} + \log(H(d;b_{i,\nu},E_{i,\nu})) = \nu_{ij} + \log\left(b_{i,\nu} + \frac{1 - b_{i,\nu}}{1 + (d/E_{i,\nu})}\right).
\end{equation}
The assumption $b_{i,\nu} \geq 1$ incorporates the situation where the drug does not impact phenotypic transitions, when $b_{i,\nu} = 1$, $\nu_{ij}(d) = \nu_{ij}$ for all $d \geq 0$. 
It is worth mentioning that our framework can also model drug inhibition of the asymmetric birth rate by letting $b_{i,\nu} \in (0,1)$.

\end{itemize}

\subsubsection{Logistic time-delayed drug response}
\label{sec:time-delayed model}

Our framework can be extended to accommodate a more complex drug-effect model, which may involve time-inhomogeneous drug effects.
In particular, for the {\em in vitro} dataset from \cite{padua2023high} investigated in Section \ref{sec:In vitro}, there appear to be delayed drug effects, which have been observed in other studies as well \cite{harris2016unbiased, russo2022modified, yang2022mathematical}. 
To address this behavior, we allow for the possibility of a time-inhomogeneous drug effect, which is modeled using a two-parameter logistic function.
In this case, we assume that the maximum drug effect parameter $b$ depends on time $t$ through
$$ |1-b(t)| = \frac{1}{1 + \exp(k(t_0 - t))}|1-b|,$$
where $k,t_0$ are logistic function parameters.
In this way, we can model the drug's maximum potential $b(t)$ gradually reaching its ultimate value as time increases.
Absolute values are taken to allow both the cytotoxic effect $b(t) < 1$ and the drug-induced plasticity effect $b(t) > 1$ to be time-inhomogeneous.

Notably, incorporating this time-dependent drug effect model introduces a time-dependent \textit{infinitesimal generator matrix} $\mathbf{A}(t,d)$, complicating the computation of the \textit{mean matrix} and the \textit{covariance matrix}. 
Detailed implementation can be found in Secion \ref{sec: Deterministic approximation and time-delayed drug-effect model}.



\subsection{Long-run behavior}


\label{sec: Long-run behavior}

As the HTS experiments being considered are run for a longer time duration,
it becomes important to understand the long-run behavior of the process $\mathbf{B}(t,d)$. 
According to the Perron-Frobenius Theorem, 
the irreducible \textit{infinitesimal generator matrix} $\mathbf{A}(d)$ has a strictly positive largest eigenvalue $\lambda_1(d)$,
which corresponds to the largest eigenvalue $e^{\lambda_1(d) t}$ of the mean matrix $\mathbf{M}(t,d)$. 
Additionally, the corresponding left eigenvector $\bm{\pi}(d)$ is strictly positive. 
A well-known limiting result reviewed in \cite{athreya2004branching} states that there exists, almost surely, a non-negative numerical random variable $W$ such that 
\[\lim_{t\rightarrow \infty} \mathbf{B}(t,d) e^{-\lambda_1(d) t} = W\bm{\pi}(d).\]
This result characterizes two aspects of the long-run behavior of the process $\mathbf{B}(t,d)$.
First, the stochastic process $\mathbf{B}(t,d)$ will grow with a deterministic exponential \textit{long-run growth rate} $\lambda_1(d)$.
Second, when the eigenvector $\bm{\pi}(d)$ is normalized so that $\sum_{i=1}^K \pi_i(d) = 1$, then $\pi_i(d)$ describes the long-run proportion of type-$i$ cells in the population. 
Therefore, we refer to the normalized version of $\bm{\pi}(d)$ as the {\em stable proportion} between phenotypes.

\subsection{Statistical Model}

\label{sec:Statistical model}



Now, we construct a statistical framework to infer heterogeneous tumor growth dynamics and drug responses using HTS live-cell imaging bulk data. 
Here, `bulk data' refers to aggregated total cell counts across all subpopulations within the heterogeneous tumor.
Our framework is designed to accommodate data collected via live-cell imaging techniques, allowing efficient capture of bulk cell counts from a single sample across multiple time points \cite{nketia2017analysis}. 

Assume the bulk dataset $\mathbf{X}$ is collected across a set of drug concentration levels $\mathcal{D} = \{d_1,\cdots, d_{N_D}\}$ and a set of time points $\mathcal{T} = \{t_1,\cdots,t_{N_T}\}$, where $0< t_1 <\cdots<t_{N_T}$.
For each drug dose $d \in \mathcal{D}$, $N_R$ samples are cultivated, and live-cell imaging technique is used to collect bulk cell count data at time points $t_1,\ldots,t_{N_T}$.
Let $\mathbf{x}_{\mathcal{T},d,r} = (x_{1,d,r},\ldots,x_{N_t,d,r})$ denote the data collected for the $r$-th replicate under drug concentration $d$, 
where $x_{k,d,r}$ is the bulk cell count at time point $t_k$.
Taking into account experimental measurement error, we propose the statistical model
\begin{equation}
    \label{eq: Data model}
    \mathbf{x}_{\mathcal{T},d,r} = \mathbf{Y}_{d,r}(\mathcal{T}) +\mathbf{Z}_{d,r},
\end{equation}
where $\mathbf{Y}_{d,r}(\mathcal{T})$ denotes bulk cell counts at the time points ${\cal T}$ under the drug-affected evolution dynamics model outlined in Sections \ref{sec:Cell dynamic model} and \ref{sec:drug-effect model}, and $\mathbf{Z}_{d,r}\sim N(0,c^2 \mathbf{I})$ are independent multivariate normally distributed noise terms.
In other word, $\mathbf{Y}_{d,r}(\mathcal{T})$ is a random vector with $N_T$ elements, each obtained by summing the random vector $\mathbf{B}(t_i,d)$ for $i = 1,\cdots,N_T $.
Assuming that $n_i$ is the starting number of cells of type $i$ in each experiment,
we can write
\begin{equation}
    \label{eq: Total cell time vector}
    \mathbf{Y}_{d,r}(\mathcal{T}) := \sum_{i = 1}^{K} \sum_{j = 1}^{n_i} \mathbf{y}_j^{(i)}(\mathcal{T},d,r),
\end{equation}
where $\mathbf{y}_j^{(i)}(\mathcal{T},d,r)$ denotes the size of the clone started by the $j$-th initial type $i$ cell at the time points ${\cal T}$ under the stochastic model.
Let $n$ denote the initial total cell count and $\mathbf{p} = [p_1,p_2,\cdots,p_K]$ denote the initial subpopulation distribution, with $n_i = np_i$ for all $i$.
In this study, we assume that $p_i$ is independent of $n$ for all $i$, implying that $n_i \rightarrow \infty$ as $n \rightarrow \infty$.

\subsubsection{Central limit theorem}

For simplicity, we temporarily omit the explicit notation for drug concentration levels and experimental replicates and write $\mathbf{Y}(\mathcal{T})$ and $\mathbf{y}_j^{(i)}(\mathcal{T})$. 
When bulk cell counts are observed as opposed to individual subpopulation counts, the stochastic process $\mathbf{Y}(t)$ is no longer a Markov process.
Therefore, the exact probability distribution for $\mathbf{Y}(\mathcal{T})$ becomes overly complex \cite{wu2024using}.
Given that drug screening experiments are usually started by a relatively large number of cells (in \cite{dravid2021optimised} for example, 2500 cells were deployed in each well), it is natural to focus on the asymptotic behavior of $\mathbf{Y}(\mathcal{T})$ as the initial cell number increases ($n\to\infty$). 
For that, we need to understand the asymptotic behavior of $\sum_{j = 1}^{n_i} \mathbf{y}_j^{(i)}(\mathcal{T})$.
We denote the expectation
\[\expect*{\mathbf{y}_j^{(i)}(\mathcal{T})} := \bm{\mu}^{(i)}(\mathcal{T}) = [\langle \mathbf{m}^{(i)}(t_1),\mathbf{1}\rangle, \cdots, \langle \mathbf{m}^{(i)}(t_{N_T}),\mathbf{1}\rangle],\]
where $\mathbf{1}$ is a size $K$ all ones vector, $\langle \cdot,\cdot \rangle$ is the inner product, and ${\bf m}^{(i)}(t)$ denotes the mean vector described in Section \ref{sec:Cell dynamic model}. We furthermore define a centered and normalized process for each $i\in \mathcal{K}$ by
\[\mathbf{W}_{n_i}^{(i)}(\mathcal{T}) := \frac{1}{\sqrt{n_i}} \sum_{j = 1}^{n_i} \left(\mathbf{y}^{(i)}_{j}(\mathcal{T}) - \bm{\mu}^{(i)}(\mathcal{T}) \right).\]
A direct application of the multivariate central limit theorem gives the following result, where `$\Rightarrow$' denotes converge in distribution. 
\begin{proposition}
    \label{prop: finite CLT}
    As $n_i \rightarrow \infty$:
    \[\mathbf{W}_{n_i}^{(i)}(\mathcal{T}) \Rightarrow N(\bm{0},\mathbf{V}^{(i)}(\mathcal{T}))\]
    where
    \[\mathbf{V}^{(i)}_{a,b}(\mathcal{T}) = \mathbf{1} \mathbf{\Xi}^{(i)}(t_{a}) \mathbf{M}(t_{b} - t_{a}) \mathbf{1}^{\intercal} \quad \text{ for } 1\leq a\leq b\leq N_T.\]
    Here, $\mathbf{M}(t)$ is the \textit{mean matrix}, and $\mathbf{\Xi}^{(i)}(t_{a})$ is a covariance matrix (Section \ref{sec:Cell dynamic model}) given by
    \[\begin{split}
    \mathbf{\Xi}^{(i)}(t)
    &= \diag(\mathbf{m}^{(i)}(t)) - \mathbf{m}^{(i)}(t)^{\intercal} \mathbf{m}^{(i)}(t) + \int_{0}^{t} \left(\mathbf{M}(t-\tau) \right)^{\intercal} \mathbf{C}^{(i)}(\tau) \left(\mathbf{M}(t-\tau) \right) d\tau,
    \end{split}\]
    where
    \[\mathbf{C}^{(i)}_{jk}(\tau) = \begin{cases}
    \nu_{kj}\mathbf{M}_{ik}(\tau) + \nu_{jk}\mathbf{M}_{ij}(\tau) & j\neq k\\
    2 \alpha_{j} \mathbf{M}_{ij}(\tau) & j = k.
    \end{cases}\]
\end{proposition}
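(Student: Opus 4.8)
The statement splits into two essentially independent pieces: the distributional convergence (a routine multivariate central limit theorem) and the identification of the limiting covariance $\mathbf{V}^{(i)}(\mathcal{T})$, where the real work lies. For the convergence, observe that the clones $\mathbf{y}_1^{(i)}(\mathcal{T}),\dots,\mathbf{y}_{n_i}^{(i)}(\mathcal{T})$ are i.i.d.\ copies of a single $\mathbb{R}^{N_T}$-valued vector $\mathbf{y}^{(i)}(\mathcal{T})=[\langle\mathbf{B}^{(i)}(t_1),\mathbf{1}\rangle,\dots,\langle\mathbf{B}^{(i)}(t_{N_T}),\mathbf{1}\rangle]$, since distinct initial cells generate independent subtrees by the branching property. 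Each coordinate $\langle\mathbf{B}^{(i)}(t_k),\mathbf{1}\rangle$ has a finite second moment for fixed $t_k$ (the per-cell rates are bounded, so the second moments of the process solve a linear ODE and grow at most like $e^{2\lambda_1(d)t_k}$), and this is the only integrability hypothesis the theorem requires. The multivariate Lindeberg--L\'evy CLT then yields $\mathbf{W}_{n_i}^{(i)}(\mathcal{T})\Rightarrow N(\mathbf{0},\mathbf{V}^{(i)}(\mathcal{T}))$ with $\mathbf{V}^{(i)}(\mathcal{T})=\operatorname{Cov}(\mathbf{y}^{(i)}(\mathcal{T}))$, so it remains only to compute this covariance entrywise.

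For the cross-time structure, fix $1\le a\le b\le N_T$ and write $S_a=\mathbf{B}^{(i)}(t_a)\mathbf{1}^\intercal$. Conditioning on the natural filtration $\mathcal{F}_{t_a}$ and using the branching property, each cell alive at $t_a$ launches an independent clone whose mean is a row of the mean matrix, so $\mathbb{E}[\mathbf{B}^{(i)}(t_b)\mid\mathcal{F}_{t_a}]=\mathbf{B}^{(i)}(t_a)\mathbf{M}(t_b-t_a)$. Because $S_a$ is $\mathcal{F}_{t_a}$-measurable, the conditional covariance $\operatorname{Cov}(S_a,S_b\mid\mathcal{F}_{t_a})$ vanishes, and the law of total covariance collapses to $\operatorname{Cov}(S_a,S_b)=\operatorname{Cov}\!\big(\mathbf{B}^{(i)}(t_a)\mathbf{1}^\intercal,\,\mathbf{B}^{(i)}(t_a)\mathbf{M}(t_b-t_a)\mathbf{1}^\intercal\big)$. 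Bilinearity of covariance in these two linear forms of the random row vector $\mathbf{B}^{(i)}(t_a)$ turns this into $\mathbf{1}\,\mathbf{\Xi}^{(i)}(t_a)\,\mathbf{M}(t_b-t_a)\mathbf{1}^\intercal$, which is exactly $\mathbf{V}^{(i)}_{a,b}(\mathcal{T})$. This reduces the whole computation to the single-time covariance matrix $\mathbf{\Xi}^{(i)}(t)$.

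The main obstacle is deriving the stated closed form for $\mathbf{\Xi}^{(i)}(t)$, which I would obtain by an ancestral (most-recent-common-ancestor) decomposition of $\mathbb{E}[B_j^{(i)}(t)B_k^{(i)}(t)]$. Two cells present at time $t$ either coincide — contributing $\delta_{jk}\,m_j^{(i)}(t)$, hence the $\diag(\mathbf{m}^{(i)}(t))$ term once $\mathbf{m}^{(i)}(t)^\intercal\mathbf{m}^{(i)}(t)$ is subtracted to pass from the second moment to the covariance — or they split at some branching event at a time $\tau\in[0,t]$, after which the two daughter lineages evolve independently and propagate to types $j,k$ via $\mathbf{M}(t-\tau)$. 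Tabulating the rate at which a branching event at $\tau$ creates an ordered pair of lineages of types $(p,q)$ produces precisely the kernel $\mathbf{C}^{(i)}(\tau)$: a symmetric division of type $j$ (rate $\alpha_j$, expected multiplicity $M_{ij}(\tau)$) spawns two type-$j$ daughters and so contributes $2\alpha_j M_{ij}(\tau)$ on the diagonal, while an asymmetric division contributes the two symmetric off-diagonal terms $\nu_{jk}M_{ij}(\tau)$ and $\nu_{kj}M_{ik}(\tau)$, and death events contribute nothing to the pair term. Integrating against $\mathbf{M}(t-\tau)^\intercal(\cdot)\mathbf{M}(t-\tau)$ and assembling the pieces gives the formula. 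The delicate points are the combinatorial bookkeeping of ordered versus unordered daughter pairs (the source of the factor $2$) and the correct placement of the asymmetric off-diagonal terms; as a cross-check I would verify the single-type birth--death case, where the formula must reduce to the classical variance $\tfrac{\alpha+\beta}{\alpha-\beta}\,e^{\kappa t}(e^{\kappa t}-1)$, and note that the same $\mathbf{\Xi}^{(i)}(t)$ can be obtained independently from the second-moment ODE generated by the infinitesimal generator and solved by variation of parameters.
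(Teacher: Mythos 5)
Your proposal is correct, and its skeleton --- independence of the $n_i$ clones via the branching property, finiteness of second moments, the multivariate CLT for i.i.d.\ vectors, and identification of $\mathbf{V}^{(i)}_{a,b}$ by conditioning on the population at the earlier time so that $\E[\mathbf{B}^{(i)}(t_b)\mid \mathbf{B}^{(i)}(t_a)]=\mathbf{B}^{(i)}(t_a)\mathbf{M}(t_b-t_a)$ and the law of total covariance collapses --- is exactly the paper's argument; the cross-time computation appears essentially verbatim in the proof of Proposition~\ref{prop: Lindeberg CLT} in Appendix~\ref{appx: Lindeberg CLT}, of which Proposition~\ref{prop: finite CLT} is the fixed-time-point special case. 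Where you genuinely diverge is in deriving the single-time covariance $\mathbf{\Xi}^{(i)}(t)$: you propose a most-recent-common-ancestor decomposition of the second factorial moment, tallying the rate at which a division at time $\tau$ creates an ordered pair of subsequently independent lineages, which directly produces the kernel $\mathbf{C}^{(i)}(\tau)$ and the integral $\int_0^t \mathbf{M}(t-\tau)^\intercal\mathbf{C}^{(i)}(\tau)\mathbf{M}(t-\tau)\,d\tau$. The paper instead (Appendix~\ref{appx: Covariance}) differentiates the Kolmogorov forward equation for the probability generating function twice, obtains the Lyapunov matrix ODE $\tfrac{d}{dt}\mathbf{D}^{(i)}(t)=\mathbf{A}^\intercal\mathbf{D}^{(i)}(t)+\mathbf{D}^{(i)}(t)\mathbf{A}+\mathbf{C}^{(i)}(t)$ for the second factorial moments, and solves it by variation of parameters --- the route you mention only as a cross-check. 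The two derivations are equivalent (your MRCA sum is precisely the variation-of-parameters solution of that Lyapunov equation), but yours is more probabilistic and makes the provenance of the factor $2$ and of the two off-diagonal $\nu$-terms transparent, whereas the paper's is more mechanical and reuses machinery (the generating function and the offspring functions $u^{(k)}$) that it needs anyway for the moment bounds of Lemma~\ref{lemma: Finite moment}; your single-type birth--death sanity check is indeed consistent with the stated formula.
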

This result can be derived from a more general result, which is stated as Proposition \ref{prop: Lindeberg CLT} in Appendix \ref{appx: Lindeberg CLT}. According to Proposition \ref{prop: finite CLT}, we can approximate $\sum_{j = 1}^{n_i}\mathbf{y}_j^{(i)}(\mathcal{T})$ as
\[\sum_{j = 1}^{n_i}\mathbf{y}_j^{(i)}(\mathcal{T}) \approx n_i \bm{\mu}^{(i)}(\mathcal{T}) + N(0, n_i \mathbf{V}^{(i)}(\mathcal{T}))\]
for sufficiently large $n_i$. We can approximate $\mathbf{Y}(\mathcal{T})$ accordingly. It is important to note that this approximation holds true for each fixed drug concentration level $d$.







\subsubsection{Maximum likelihood estimation (MLE) framework}

Now we reintroduce the drug concentration level $d\in \mathcal{D}$ and use the approximation
\[\mathbf{Y}_{d,r}(\mathcal{T}) \approx \sum_{i = 1}^{K} n_i \bm{\mu}^{(i)}(\mathcal{T},d) + N^{(r)}(0,n_i \mathbf{V}^{(i)}(\mathcal{T},d)),\]
where $N^{(r)}(0,n_i \mathbf{V}^{(i)}(\mathcal{T},d))$ represents independent and identical copies of a random vector with distribution $N(0,n_i \mathbf{V}^{(i)}(\mathcal{T},d))$.
We then formulate our newly proposed statistical model as
\begin{equation}
    \label{eq: stat model}
    \mathbf{x}_{\mathcal{T},d,r} \approx \sum_{i = 1}^{K} n_i \bm{\mu}^{(i)}(\mathcal{T},d) + N^{(r)}(0,n_i \mathbf{V}^{(i)}(\mathcal{T},d)) + N^{(r)}(0,c^2 \mathbf{I}),
\end{equation}
where the final term captures experimental measurement error as before.
Based on this statistical model, we compute the likelihood function $L(\btheta(\mathcal{K})|\mathbf{X})$, which represents the probability of obtaining the observed dataset $\mathbf{X}$ given the set $\btheta(\mathcal{K})$ of all parameters in the model:
\[\btheta(\mathcal{K}) = \{(\alpha_i,\beta_i,(\nu_{ij})_{j\neq i},b_{i,\beta},E_{i,\beta},b_{i,\nu},E_{i,\nu})_{ i \in \mathcal{K}}, c\}.\]
We then apply the maximum likelihood estimation framework to obtain the parameter set that is most likely to explain the observed dataset.
The estimated parameter set $\hat{\btheta}(\mathcal{K})$ is computed by minimizing the negative log-likelihood, which is equivalent to maximizing the likelihood function:
\begin{equation}
    \label{eq: minimize negative log likelihood}
    \hat{\btheta}(\mathcal{K}) = \argmin_{\btheta(\mathcal{K}) \in \hat{\bTheta}} - \log(L(\btheta(\mathcal{K})|\mathbf{X})),
\end{equation}
where $\hat{\bTheta}$ is the set of feasible parameters. 
In Table \ref{tab:notions}, we provide an overview of notation.

{\tiny
\begin{table}[ht]
    \centering
    \resizebox{\textwidth}{!}{\begin{tabular}{c|c|l|c}
    \hline 
    \hline
        Notation & Dimension &  Description & Definition/Range \\
         \hline

        $\alpha_i$ & $1$ & Subtype $i$ symmetric division rate & $\alpha >0$\\

        $\beta_i(d)$ & $1$ & Subtype $i$ death rate &  $\beta(0) \geq 0$\\

        $\nu_{ij}(d)$ & $1$ & Subtype $i$ asymmetric division rate to subtype $j$ & $\nu(0) \geq 0$\\

        $\lambda_1(d)$ & $1$ & \textit{Long-run growth rate} & $\lambda_1(0) > 0$\\

        $\mathcal{K}$ & $1\times K$ & Subpopulation index set  & $|\mathcal{K}| = K$\\

        $\mathbf{B}(t,d)$ & $1 \times K$ & Vector of subpopulation counts  & $\mathbf{B}(t,d) \geq \bm{0}$\\

        $\mathbf{B}^{(i)}(t,d)$ & $1\times K$ & $\mathbf{B}(t)$ generated from single type $i$ cell & $\mathbf{B}^{(i)}(t,d) \geq \bm{0}$\\

        $\mathbf{m}^{(i)}(t,d)$ & $1\times K$ & Expected value of $\mathbf{B}^{(i)}(t,d)$ & $\mathbf{m}^{(i)}(t,d) \geq \bm{0}$\\

        $\mathbf{\Xi}^{(i)}(t,d)$ & $K\times K$ & Covariance matrix of $\mathbf{B}^{(i)}(t,d)$ & $det(\mathbf{\Xi}^{(i)}(t,d)) >0$\\

        $\mathbf{A}(d)$ & $K\times K$ & \textit{infinitesimal generator matrix} & $\mathbf{A}_{ii}(d) = \alpha_i - \beta_i(d), \mathbf{A}_{ij} = \nu_{ij}(d)$\\

        $\mathbf{M}(t,d)$ & $K\times K$ & Mean matrix & $\lim_{t \rightarrow 0} \mathbf{M}(t,d) = \mathbf{I}$\\

        $\mathbf{I}$ & $K\times K$ & Identity matrix & $\mathbf{I}$\\

        $\mathbf{C}^{(i)}(t,d)$ & $K\times K$ & Factor matrix for covariance & see Proposition \ref{prop: finite CLT}\\
        \hline
        $b$ & $1$ & Maximum drug effect parameter & $b > 0$\\

        $E$ & $1$ & Half maximum drug effect parameter & $E >0$\\

        $m$ & $1$ & Drug effect steepness parameter & $n>0$\\
        \hline
        $n$ & $1$ &  Initial total cell count & $n \in [1000,10000]$\\
        
        $c$ & $1$ & Standard deviation of the i.i.d. observation noise & $c\in (0,0.1n)$\\

        $\mathbf{p}$ & $1\times K$ & Vector of initial proportion $p_i$ for subtype $i$ & $\sum_{i = 1}^{K} p_i = 1, n_i = np_i$\\
        
        $\bm{\pi}(d)$ & $1\times K$ & \textit{Stable proportion} & $\bm{\pi}(d)\mathbf{A}(d) = \lambda_1(d) \bm{\pi}(d)$\\
        
        $\mathcal{D}$ & $1\times N_{D}$ & Set of concentration levels applied & $|\mathcal{D}| = N_D$\\

        $\mathcal{T} $ & $1\times N_T$ & Set of time points & $|\mathcal{T}| = N_T$\\

        $\mathbf{W}_{n_i}^{(i)}(\mathcal{T},d)$ & $1\times N_T$ & Centered and normalized process of total cell counts & $\mathbf{W}_{n_i}^{(i)} (\mathcal{T},d) = \frac{1}{\sqrt{n_i}} \sum_{j = 1}^{n_i} \left(\mathbf{y}^{(i)}_{j}(\mathcal{T},d) - \bm{\mu}^{(i)}(\mathcal{T},d) \right)$\\

        $\bm{\mu}^{(i)}(\mathcal{T},d)$ & $1\times N_T$ & Expected value of the total cell counts & $\mu^{(i)}(\mathcal{T},d)\in \mathbb{R}^{1\times N_T}$\\

        $\mathbf{V}^{(i)}(\mathcal{T},d)$ & $N_T\times N_T$ & Covariance matrix of $\mathbf{W}_{n_i}^{(i)}(\mathcal{T},d)$ & $\mathbf{V}^{(i)}(\mathcal{T},d) \in \mathbb{R}^{N_T \times N_T}$\\



        \hline
        
    \end{tabular}}
    \caption{Table of definitions in Section \ref{sec:Cell dynamic model}, Section \ref{sec:drug-effect model}, Section \ref{sec: Long-run behavior}, and Section \ref{sec:Statistical model}.}
    \label{tab:notions}
\end{table}
}

\subsubsection{Deterministic approximation and time-delayed drug-effect model}

\label{sec: Deterministic approximation and time-delayed drug-effect model}

The statistical model \eqref{eq: stat model} utilizes the central limit theorem approximation derived in Proposition \ref{prop: finite CLT}. Alternatively, one can employ a law of large numbers type approximation to derive a simpler statistical model.
Specifically, we can formulate the statistical model as follows:
\begin{equation}
    \label{eq: det stat model}
    \mathbf{x}_{\mathcal{T},d,r} \approx \sum_{i = 1}^{K} n_i \bm{\mu}^{(i)}(\mathcal{T},d) + N^{(r)}(0,c^2 \mathbf{I}),
\end{equation}
where the variability within the data is solely attributed to the observation noise with a standard deviation of $c$.
We note that this model assumes deterministic cell growth dynamics, as $\mathbf{\mu}^{(i)}(\mathcal{T},d)$ is a deterministic function of time and dosage. 
In \cite{greene2019mathematical}, Greene et.al.~have studied a similar model; however, they did not specify the dose-response relationship using the nonlinear Hill equation and instead suggested a more restrictive linear relationship. 

The deterministic model allows for easier computation of the likelihood function; we therefore use this model when analyzing data with a time-delayed drug response.
Specifically, we define the time-inhomogeneous mean behavior of the asymmetrical division multi-type branching process as:
\[\Tilde{m}^{(i)}(t,d) = \expect*{\mathbf{B}^{(i)}(t,d)} = \mathbf{e}_i \exp\left(\int_{0}^{t}\mathbf{A}(\tau,d) d\tau\right),\]
where the \textit{infinitesimal generator matrix} $\mathbf{A}(t,d)$ depends on both time and concentration levels, as described in the logistic time-delayed drug effect model proposed in Section \ref{sec:time-delayed model}.
Although the corresponding covariance for $\mathbf{B}^{(i)}(t,d)$ can be computed in a similar manner, calculating the variance within a realistic time frame is challenging.
Therefore, we employ the simplified statistical model \eqref{eq: det stat model} with $\bm{\mu}^{(i)}(\mathcal{T},d) := [\langle \Tilde{m}^{(i)}(t_1,d),\mathbf{1}\rangle, \cdots, \langle \Tilde{m}^{(i)}(t_{N_T},d),\mathbf{1}\rangle ]$ when incorporating the logistic time-delayed drug effect later in Section \ref{sec:In vitro}.


\subsection{Simplified model of drug effect on CSCs and CNSCs mixture}

\label{sec:CSC assumptions}
In our computational analysis of Sections \ref{sec:In silico} and \ref{sec:In vitro}, we will investigate the performance of our newly proposed framework on a tumor population consisting of two phenotypes: 1. CSCs, 2. CNSCs. Our primary focus is on identifying drug-induced plasticity, that is drug-induced transitions from the CNSC phenotype to the CSC phenotype. Given that we only have two subpopulations, we identify the parameters for the CSC resistance subpopulation with an $r$ subscript and those for the CNSC sensitive subpopulation with a $s$ subscript, as shown in Table \ref{tab:parameters table}:

\begin{table}[ht]
    \centering
    \begin{tabular}{|c|c|c|c|c|c|c|c|c|}
    \hline
    Cell type & Initial proportion & $\alpha$ rate & $\beta$ rate & $\nu$ rate & $b$ for $\beta$ & $E$ for $\beta$ & $b$ for $\nu$ & $E$ for $\nu$\\
    \hline
    CSCs & $p_{r}$& $\alpha_r$ & $\beta_r$ & $\nu_{rs}$ & $b_{r,\beta}$ & $E_{r,\beta}$ & $b_{r,\nu}$ & $E_{r,\nu}$\\
    \hline
    CNSCs& $p_{s}$& $\alpha_s$ & $\beta_s$ & $\nu_{sr}$ & $b_{s,\beta}$ & $E_{s,\beta}$ & $b_{s,\nu}$ & $E_{s,\nu}$ \\
    \hline
    \end{tabular}
    \caption{CSCs and CNSCs subpopulation parameters}
    \label{tab:parameters table}
\end{table}

To investigate heterogeneous drug effects on CSCs and CNSCs, we adopt assumptions inspired by the experimental work \cite{padua2023high}. First, we specify assumptions related to the cell growth dynamics:

\begin{assumption}
    \label{assump:Cell dynamic}
    CNSCs proliferate with less potential than CSCs, while CSCs have a non-negative natural proliferation rate and a non-negative differentiation rate, expressed as $\kappa_r \geq \kappa_s \geq 0, \nu_{rs} \geq 0$.
\end{assumption}

\begin{assumption}
    \label{assump:Cell plasticity}
    CNSCs do not exhibit natural plasticity, implying no transition from CNSCs to CSCs in the absence of drug.
\end{assumption}
\noindent
We note that under Assumption \ref{assump:Cell plasticity}, the irreducibility condition of Section \ref{sec:Cell dynamic model} is violated in the absence of drug. However our framework can still be applied under the following non-extinction assumption, as discussed further in Appendix \ref{appx: Treatment under potential violation}.


\begin{assumption}
    \label{assump:no extinct CSCs}
    CSCs do not go extinct, ensuring the persistence of the CSC subpopulation.
\end{assumption}


Additionally, we constrain the pharmacologic dynamics of cells as follows:
\begin{assumption}
    \label{assump:Drug effect}
    The drug does not affect CSCs, reflecting a resistant behavior in CSCs.
\end{assumption}
\noindent
We note that if the starting population of CSCs is sufficiently large and CSCs have a positive net growth rate, $\kappa_r>0$, then Assumption 3 will follow from Assumption 4, since a large starting population of proliferating cells unaffected by the drug is unlikely to go extinct.

The final assumption pertains to the initial distribution between CSCs and CNSCs. 
In \cite{padua2023high}, the researchers experimentally separated CSCs and CNSCs, and they performed assays on isolated subpopulations of CNSCs. 
However, assuming the \textit{stable proportion} between CSCs and CNSCs at the start (Section \ref{sec: Long-run behavior}) -- without employing a separation technique -- is more natural.
Hence, we begin with Assumption \ref{assump:Initial structure} and later relax it in Section \ref{sec:relaxed ip assumption}.

\begin{assumption}
    \label{assump:Initial structure}
    The initial proportion of CSCs and CNSCs in the absence of the drug is the stable proportion.
\end{assumption}
\noindent
By adopting this assumption, there is no need to estimate the initial proportions $p_r$ and $p_s$; these parameters are inferred directly from the cell dynamic parameters $(\alpha,\beta,\nu)$ for each subpopulation.


The corresponding mathematical formulations for the above assumptions are summarized in Table \ref{tab:assumption table}.

\begin{table}[ht]
    \centering
    \begin{tabular}{|c|c|}
        \hline
        Assumptions & Reflection in parameter \\
        \hline
        Assumption \ref{assump:Cell dynamic} &  $\kappa_r \geq \kappa_s \geq 0 , \nu_{sd} \geq 0$\\
        \hline
        Assumption \ref{assump:Cell plasticity} & $\nu_{sr} = 0$\\
        \hline
        Assumption \ref{assump:Drug effect} & $b_{r,\beta} = b_{r,\nu} = 1$\\
        \hline
        Assumption \ref{assump:Initial structure} & $\mathbf{p} = \bm{\pi}(0)$\\
        \hline
    \end{tabular}
    \caption{Assumptions and corresponding parameter constraints}
    \label{tab:assumption table}
\end{table}


\section{Results (\textit{in silico})}

\label{sec:In silico}

To evaluate our framework's performance in analyzing \textit{in silico} data, we utilized the Gillespie algorithm \cite{gillespie1976general} to generate computer-simulated data based on the asymmetrical birth multi-type branching process model.
Each simulation started with an initial total cell count of $n = 1000$, and we conducted $N_R = 20$ replicates across various concentration levels $\mathcal{D}$ and time points $\mathcal{T}$, where
    \[\mathcal{D} = \{0,0.0313,0.0625,0.125,0.25,0.375,0.5,1.25,2.5,3.75,5\},\]
    \[\mathcal{T} = \{0,3,6,9,12,15,18,21,24,27,30,33,36\}.\]
Finally, Gaussian observation noise was added to obtain the \textit{in silico} dataset. Further details on the data generation can be found in Appendix \ref{appx: data generation and optimization implementation.}.

In these \textit{in silico} experiments, we derive two types of estimation results: 1. point estimation (PE) obtained for each experiment through the MLE process (see Appendix \ref{appx: data generation and optimization implementation.} for implementation details), 2. confidence intervals (CIs) obtained using the bootstrapping technique (see Appendix \ref{appx: bootstrapping} for implementation details), specifically for selected \textit{in silico} experiments.

Given our focus on a specialized case involving two subpopulations, as outlined in Section \ref{sec:CSC assumptions}, we set $K =2$. It is worth noting that one can treat $K$ as a variable and employ model selection techniques to estimate the number of subpopulations exhibiting distinct phenotypic drug responses. Similar approaches were explored in \cite{kohn2023phenotypic} using a simpler statistical framework, but a detailed exploration of this topic is outside the scope of the current study.


\subsection{Deconvolution of drug effects (base experiment)}


\subsubsection{ Illustrative example}

\label{sec:Illustrative example}

We begin with a single illustrative example. Table \ref{tab:True parameter I1} shows the true parameter set used to generate the {\em in silico} dataset and the corresponding PEs obtained using our statistical framework. 
In Figure \ref{fig:illustrative}, we display both the PEs and CIs for this example, focusing on two key metrics: the \textit{stable proportion} $\bm{\pi}(d)$ between phenotypes under each concentration level $d$, and the $GR_{50}$ dose for the CNSC growth rate (cytotoxic effect) and de-differentiation rate (plasticity effect), respectively.
The $GR_{50}$ dose is the concentration level at which the drug achieves half its maximal observed effect on the growth rate.
It is a useful summary metric of the drug effect which incorporates both primary drug effect parameters $E$ and $b$, as is further discussed in Appendix \ref{appx: Parameters of interest}.

\begin{table}[ht]
    \centering

    \begin{tabular}{|c|c|c|c|c|c|c|c|c|c|}
        \hline
        & $\alpha_r$ & $\beta_r$ & $\nu_{rs}$ & $\alpha_s$ & $\beta_s$  & $b_{s,\beta}$ & $E_{s,\beta}$ & $b_{s,\nu}$ & $E_{s,\nu}$\\
        \hline
        $\btheta^*$ & 0.5407 & 0.5055 & 0.2929 & 0.2280 & 0.2280 &  0.8536 & 0.7073 & 1.0827 & 1.2285\\
        \hline
        $\hat{\btheta}$ & 0.5407 & 0.5036 & 0.2988 & 0.2412 & 0.2419  & 0.8612 & 0.6468 & 1.0773 & 1.2336\\
        \hline
    \end{tabular}
    \caption{True parameter set $\btheta^*$ and point estimates $\hat{\btheta}$ in Figure \ref{fig:illustrative}}
    \label{tab:True parameter I1}
\end{table}

\begin{figure}[ht]
    \centering
    \includegraphics[width = \linewidth]{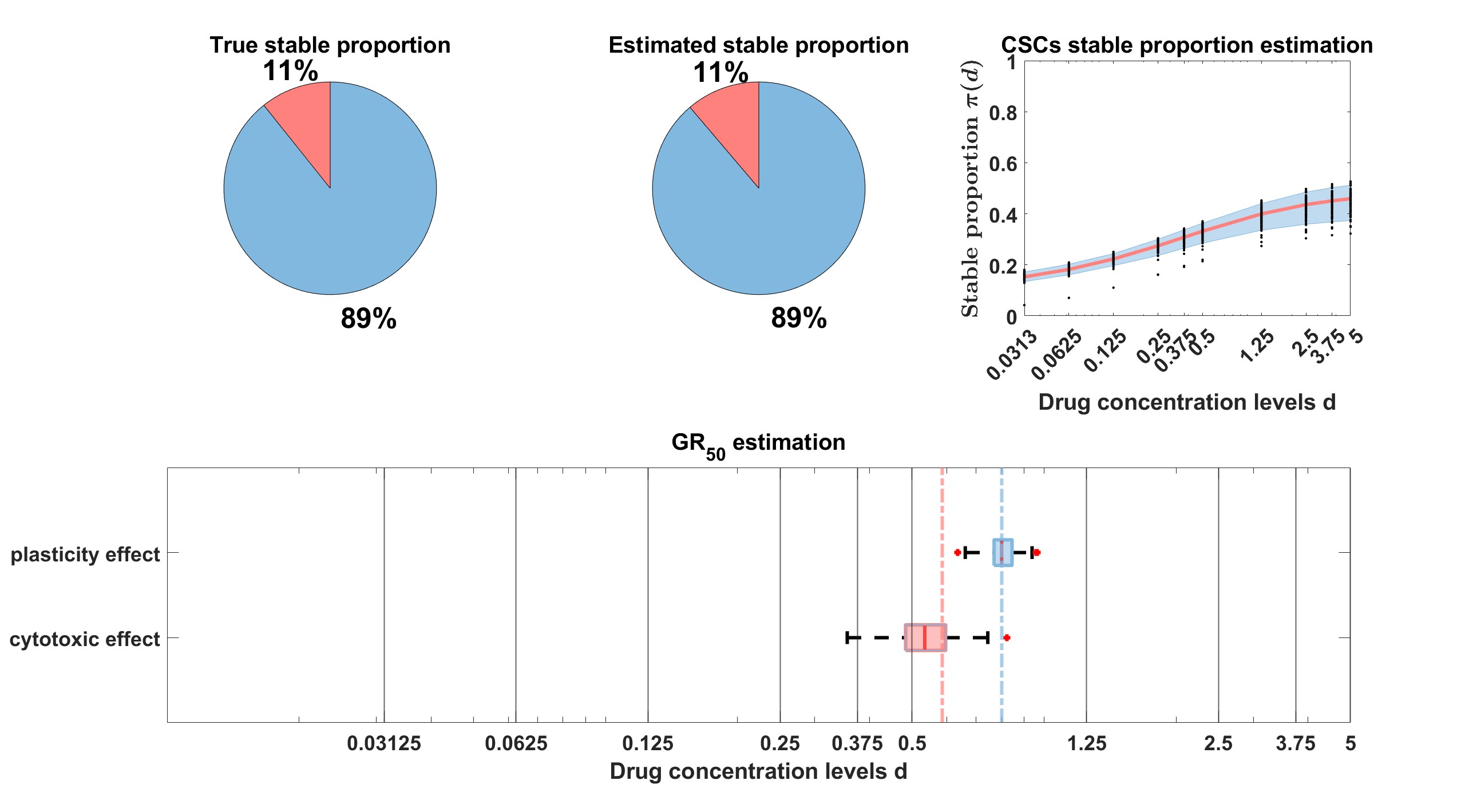}
    \caption{Estimation of \textit{stable proportion} and $GR_{50}$ based on the data generated by the true parameter set in Table \ref{tab:True parameter I1}. The pie chart demonstrates the PE of the \textit{stable proportion} between CSCs and CNSCs under a no-drug environment. The upper-right plot shows 100 bootstrapped PEs as scatter points and the corresponding $90\%$ CIs for the \textit{stable proportion} of CSCs at each drug concentration level, with the red line representing the true \textit{stable proportion} of CSCs in the same plot. The boxplots illustrate the CIs of $GR_{50}$ of both the drug-induced plasticity effect and cytotoxic effect, with the dashed colored lines representing the corresponding true $GR_{50}$ values. Specifically, the red dashed line and red boxplot correspond to cytotoxic effect, while the blue dashed line and blue boxplot correspond to drug-induced plasticity effect.}
    \label{fig:illustrative}
\end{figure}

Figure \ref{fig:illustrative} illustrates that our newly proposed framework accurately estimates the \textit{stable proportion} in the absence of drug (pie charts in top panel). 
Additionally, the varying \textit{stable proportions} of CSCs at different drug concentrations are captured by the CIs constructed from the bootstrapping estimations (line plot in top panel).
The CIs also encompass the correct $GR_{50}$ values for both the cytotoxic and plasticity effects (bottom panel).

In this example, the true $GR_{50}$ values for the two drug effects fall between two concentration levels applied in the experiment. 
In our previous work \cite{wu2024using}, we used a similar statistical framework to investigate a tumor with two subpopulations, where each was affected by an anti-cancer drug to a varying degree and there were no transitions between the subpopulations.
There, we observed that close $GR_{50}$ values for the cytotoxic effects on each subpopulation often resulted in poor estimations of the $GR_{50}$ values. 
In the current study, where we consider two distinct types of drug effects, the difference in $GR_{50}$ is no longer critical for distinguishing these effects.


\subsubsection{Estimation across a wide range of true parameter sets}
\label{sec:wide range parameter sets}
To assess estimation performance across a wide range of biologically realistic parameter sets, we examined the relative errors (RE) from 100 datasets, each generated using a different true parameter set. The RE between an estimator $\hat{x}$ and the true value $x^*$ was calculated as
\begin{equation}
    Er(\hat{x};x^*) =  \left\lvert \frac{\hat{x}-x^*}{x^*}\right\rvert.
\end{equation}
The results for all datasets are summarized using boxplots in Figure \ref{fig:RE_100}. 
Note that we recorded the RE for $1 - b_{s,\beta}$ and $b_{s,\nu} - 1$ rather than $b_{s,\beta}$ and $b_{s,\nu}$ directly.
These new measurements of REs ensure symmetry when measuring the maximum drug effect between cytotoxic and plasticity effects, since $b_{s,\beta} \in (0.5,1)$ and $b_{s,\nu}\in (1,1.5)$ (see Appendix \ref{appx: data generation and optimization implementation.}).
Additionally, we included a horizontal line at RE $=0.2$ as a threshold to indicate reasonably well estimated parameters. 

\begin{figure}[ht]
    \centering
    \includegraphics[width = \textwidth]{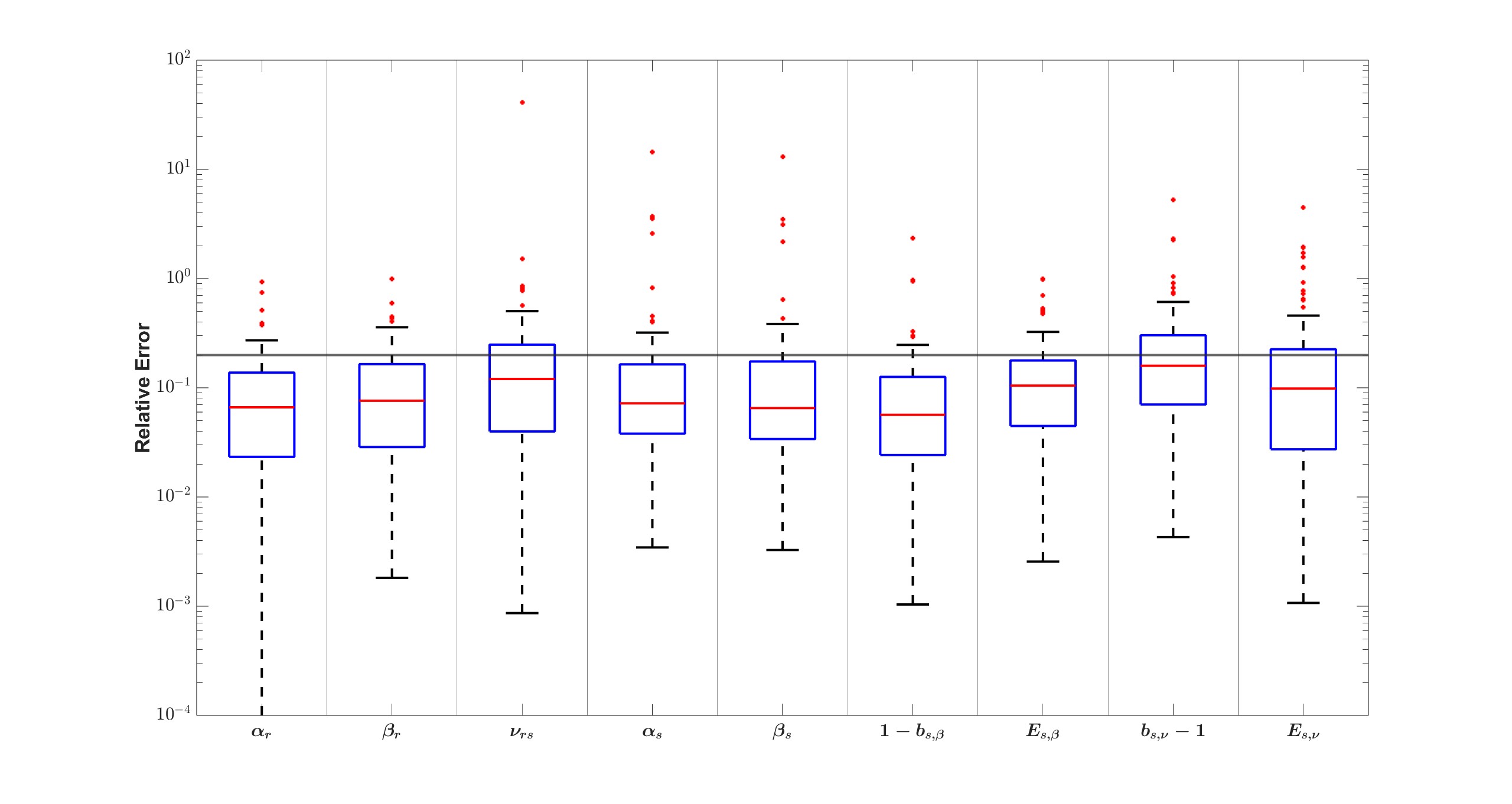}
    \caption{Relative error of the parameter estimation. The boxplot contains 100 independent experiments generated by randomly selected true parameter sets. The red line is the median of the REs. The grey horizontal line is a threshold when RE is equal to $0.2$.}
    \label{fig:RE_100}
\end{figure}

Figure \ref{fig:RE_100} shows that most point estimates are reasonable, with over three fourths of estimates having a RE below the specified threshold.
This indicates that our newly proposed framework can successfully disentangle the cell growth dynamics and drug effect dynamics. 
We note that the plasticity effect parameter $b_{s,\nu}$ is generally estimated less accurately than the cytotoxic effect parameter $b_{s,\beta}$. 
This may stem from the fact that in all parameter sets, the true plasticity effect is smaller than the cytotoxic effect, i.e. $b_{s,\nu} - 1 < 1 - b_{s,\beta}$, making it more challenging to detect the plasticity effect (see Appendix \ref{appx: data generation and optimization implementation.}).
In addition, we note that the relative error metric is more sensitive when the true value is small, since even small absolute errors $\lvert\hat{x}-x^*\rvert$ can lead to large relative errors $\lvert(\hat{x}-x^*)/x^*\rvert$.

Although our estimation framework performs well overall, there are some examples where it fails to recover the true values from the data.
One explanation for low quality in the estimation is a poor experimental design, i.e.~poor selection of concentration levels and data collection time points, resulting in reduced parameter identifiability.
For instance, when estimating $E$, if the tested concentration levels do not cover the true value, i.e. $\max(\mathcal{D}) < E$, it will be challenging for our framework to identify the true value.
Even when $E$ is within the tested concentration levels, observation noise and the stochastic nature of the data generation can still distort the estimation, particularly when $E$ is close to the maximum tested concentration level.
A detailed analysis of how experimental design impacts parameter identifiability will be explored in future work.

\subsubsection{Analysis of failed estimations}

As previously mentioned, one potential explanation for poor estimation under our framework is a poor experimental design.
To better understand potential challenges in parameter identifiability, we analyzed the true parameter sets and the simulated data from the 100 experiments described earlier. 
We specifically examined the drug effects on two theoretical metrics of long-run behavior: \textit{stable proportion} ($\bm{\pi}(d)$) and \textit{long-run growth rate} ($\lambda_1(d)$) (see Section \ref{sec: Long-run behavior}). 
For each metric, we computed its theoretical values across different concentration levels and determined the maximum discrepancy among them to quantify the ranges of drug effects observed in the experiments.

In Figure \ref{fig:Drug effect vs Average RE}, scatter plots illustrate these quantities plotted against the average RE over all estimated parameters for each experiment.
In the right panel's lower right corner, four examples show a maximum change in stable proportion below $0.1$, coinciding with a notably high average estimation error.
In the left panel, these same four examples exhibit low maximum changes in long-run growth rates.
This observation indicates that inaccurate estimations tend to occur when observed drug effects are minimal.
This insight is further supported by Figure \ref{fig:data example}, where we visualized two example datasets with high vs.~low RE, and observed that the  dataset with less variation in growth pattern across different concentration levels has higher average RE.
To summarize, poor estimation in terms of RE can in most cases be traced to a small observed drug effect, which can either arise due to a poor experimental design (drug effects do not manifest due to poor selection of concentration levels or data collection points) or a small true drug effect, in which case the RE metric is more sensitive to estimation inaccuracy as mentioned in the previous section.

\begin{figure}
    \centering
    \includegraphics[width = \textwidth]{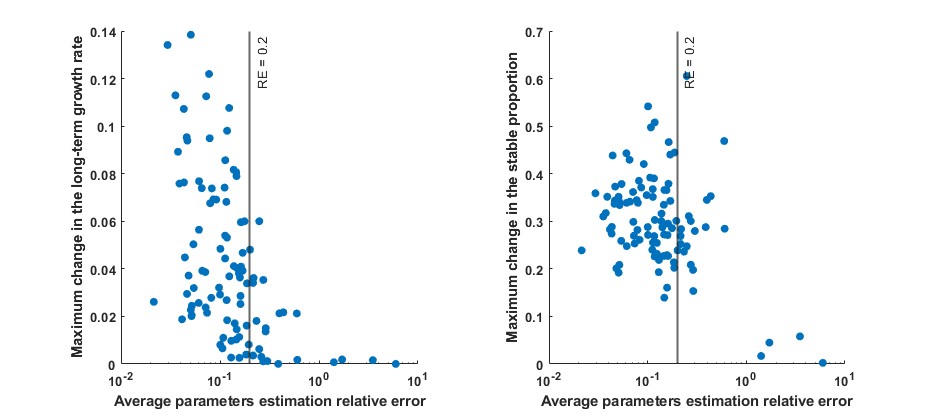}
    \caption{Scatter plots of 100 independent experiments. Both plots have average parameters estimation relative error on the x-axis. In the left panel, the y-axis represents the observable drug effect on the long-term growth rate of the CSCs and CNSCs mixture. In the right panel, the y-axis represents the observable drug effect on the \textit{stable proportion} between the CSCs and CNSCs.}
    \label{fig:Drug effect vs Average RE}
\end{figure}

\begin{figure}
    \centering
    \includegraphics[width = \textwidth]{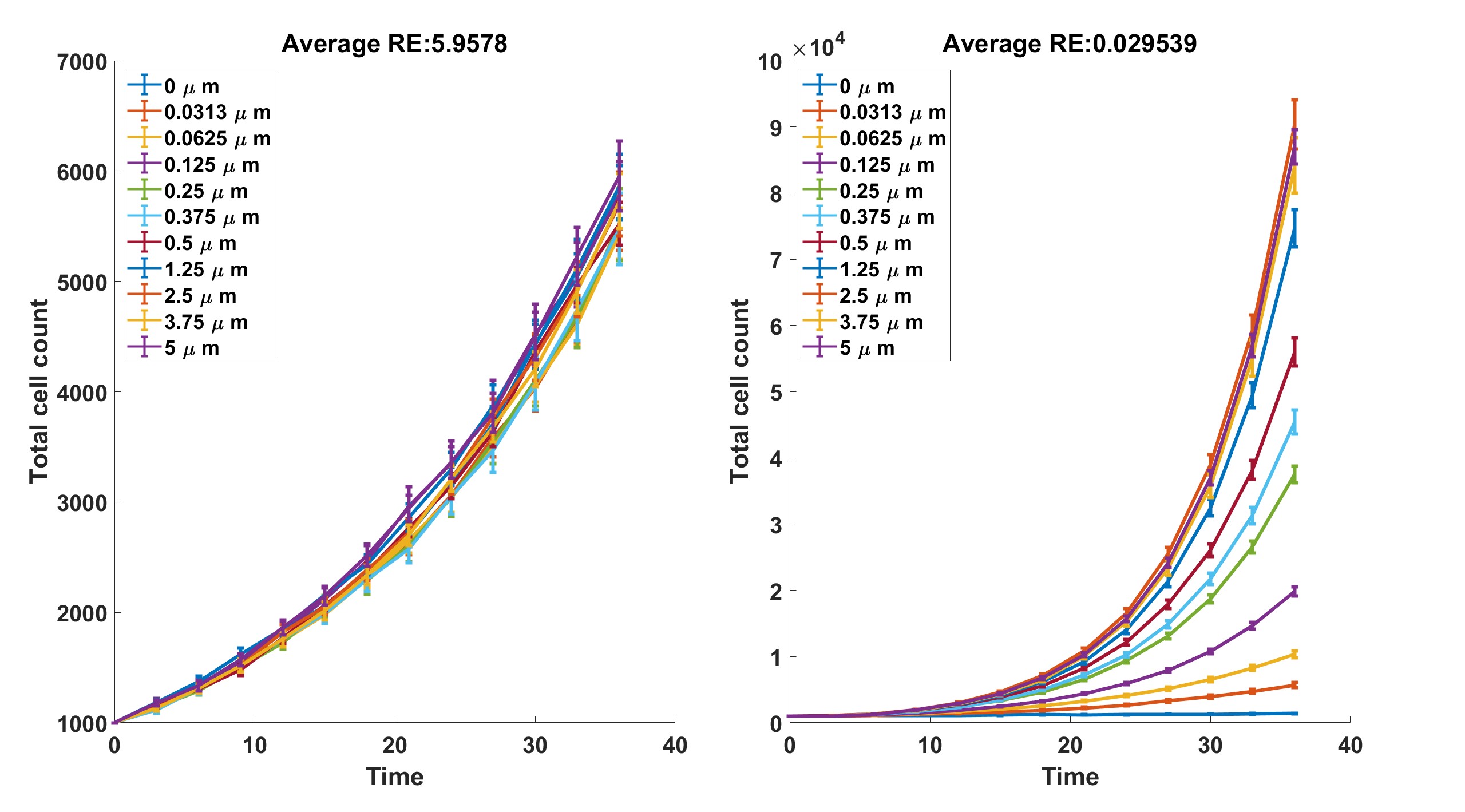}
    \caption{Visualization of the simulated data with sufficient and insufficient information for accurate estimation. The left panel demonstrates data with insufficient information, while the right panel demonstrates data with sufficient information. Each line plot illustrates total cell count data across 13 time points under a specific concentration level described in the legend. The error bar is derived from the standard deviation of 20 independent replicates.}
    \label{fig:data example}
\end{figure}





\subsection{Robustness of the framework}

To assess the robustness of our newly proposed statistical framework using \textit{in silico} data, we designed a series of experiments to evaluate the performance of the framework when relaxing some of the assumptions in Section \ref{sec:CSC assumptions}. 
Detailed experimental settings are described in Appendix \ref{appx: data generation and optimization implementation.}.

\subsubsection{Relaxed drug effect assumption}

\label{sec:relaxed drug assumption}

In Assumption \ref{assump:Drug effect} of Section \ref{sec:CSC assumptions}, we posit that the drug will not affect the CSCs. 
However, this assertion may seem too stringent for practical scenarios, where the varying microenvironment caused by different drug concentration levels could potentially affect the CSCs \cite{yang2020targeting}. 
Therefore, we conduct another set of experiments to examine the performance of our framework when the drug does influence the CSCs.
While the mechanisms underlying drug resistance in CSCs are not yet fully understood \cite{rezayatmand2022drug,li2021drug}, for testing purposes, we make the simplifying assumption that the drug affects CSCs in a similar way to CNSCs, manifesting through cytotoxic effects and increased asymmetrical divisions, albeit to a lesser extent than for CNSCs.
As noted in Section \ref{sec:drug-effect model}, our framework is also adaptable to scenarios where the drug might decrease the differentiation rate.


\begin{figure}
    \centering
    \includegraphics[width = \textwidth]{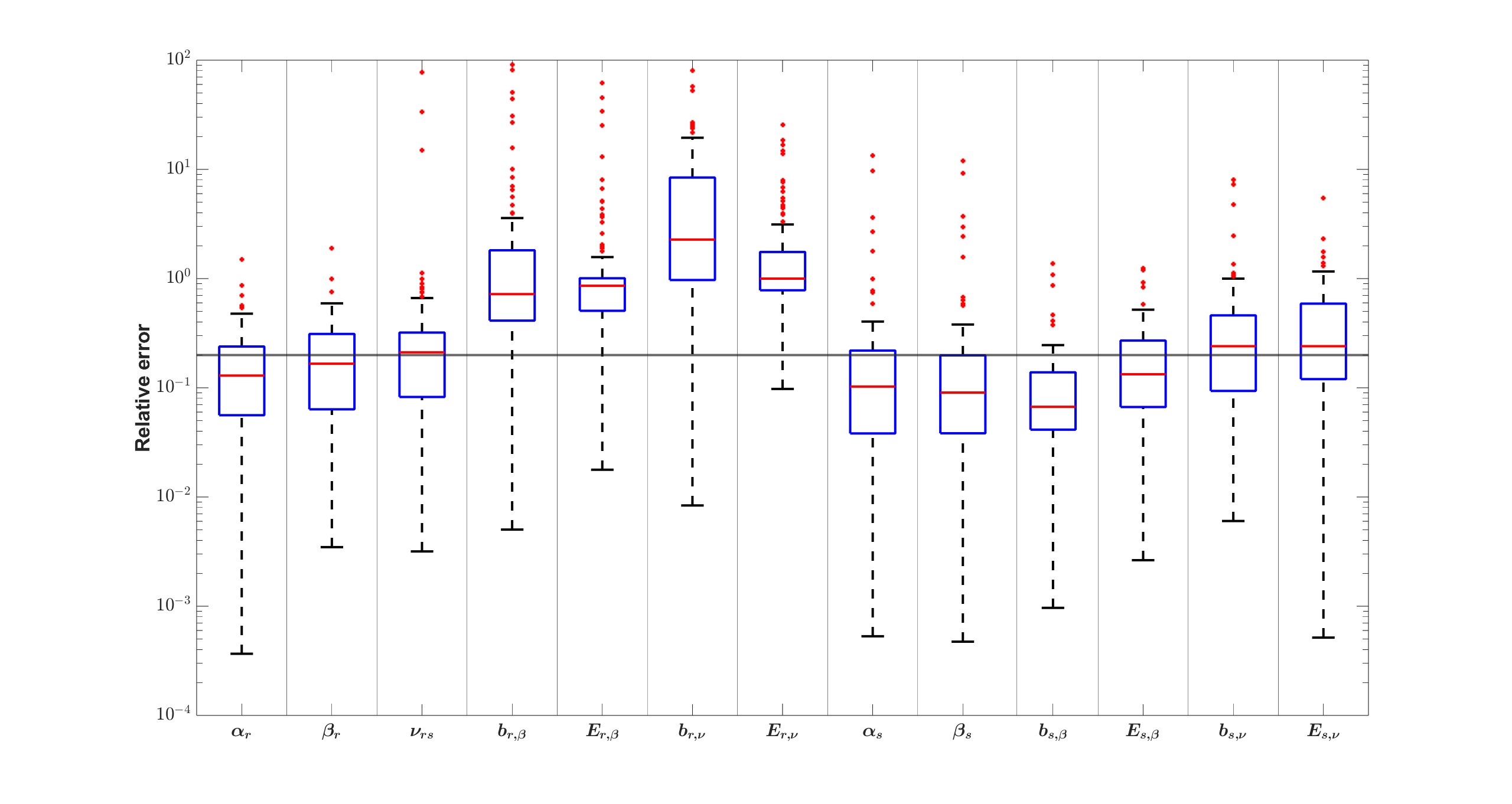}
    \caption{Relative error of the parameter estimation when relaxing Assumption \ref{assump:Drug effect}. The remaining details of the plot can be found in Figure \ref{fig:RE_100}.}
    \label{fig:Relaxed drug effect RE 100}
\end{figure}

Figure \ref{fig:Relaxed drug effect RE 100} presents the RE from 100 independent experiments. 
We observe that the estimates of certain parameters, such as  $(\nu_{rs}, b_{s,\nu}, E_{s,\nu})$, show deterioration compared to Figure \ref{fig:RE_100}, with more than $50\%$ of estimates having RE above the $0.2$ threshold value. 
However, our framework continues to provide reasonably accurate estimates for parameters like $\alpha_s,\beta_s,b_{s,\beta}$.

As mentioned in the previous section, the estimation accuracy may be correlated to the significance of the drug effect. 
In the current experiment, we find more compelling evidence to support this observation. 
In particular, the estimation accuracy for the CSC drug effect parameters $(b_{r,\beta},E_{r,\beta},b_{r,\nu},E_{r,\nu})$ is significantly lower than the accuracy in estimating the analogous parameters for CNSCs, due to our assumption of a more pronounced impact of the drug on CNSCs.

\subsubsection{Relaxed initial proportion assumption}

\label{sec:relaxed ip assumption}

Our next experiment aims to investigate the possibility of estimating the initial subpopulation structure under arbitrary initial proportions, relaxing Assumption \ref{assump:Initial structure} of Section \ref{sec:CSC assumptions}.
To accommodate this situation, we reintroduce two parameters, $p_r$ and $p_s$, with the constraint that $p_r + p_s = 1$, representing the initial proportions of CSCs and CNSCs respectively.

\begin{figure}
    \centering
    \includegraphics[width = \textwidth]{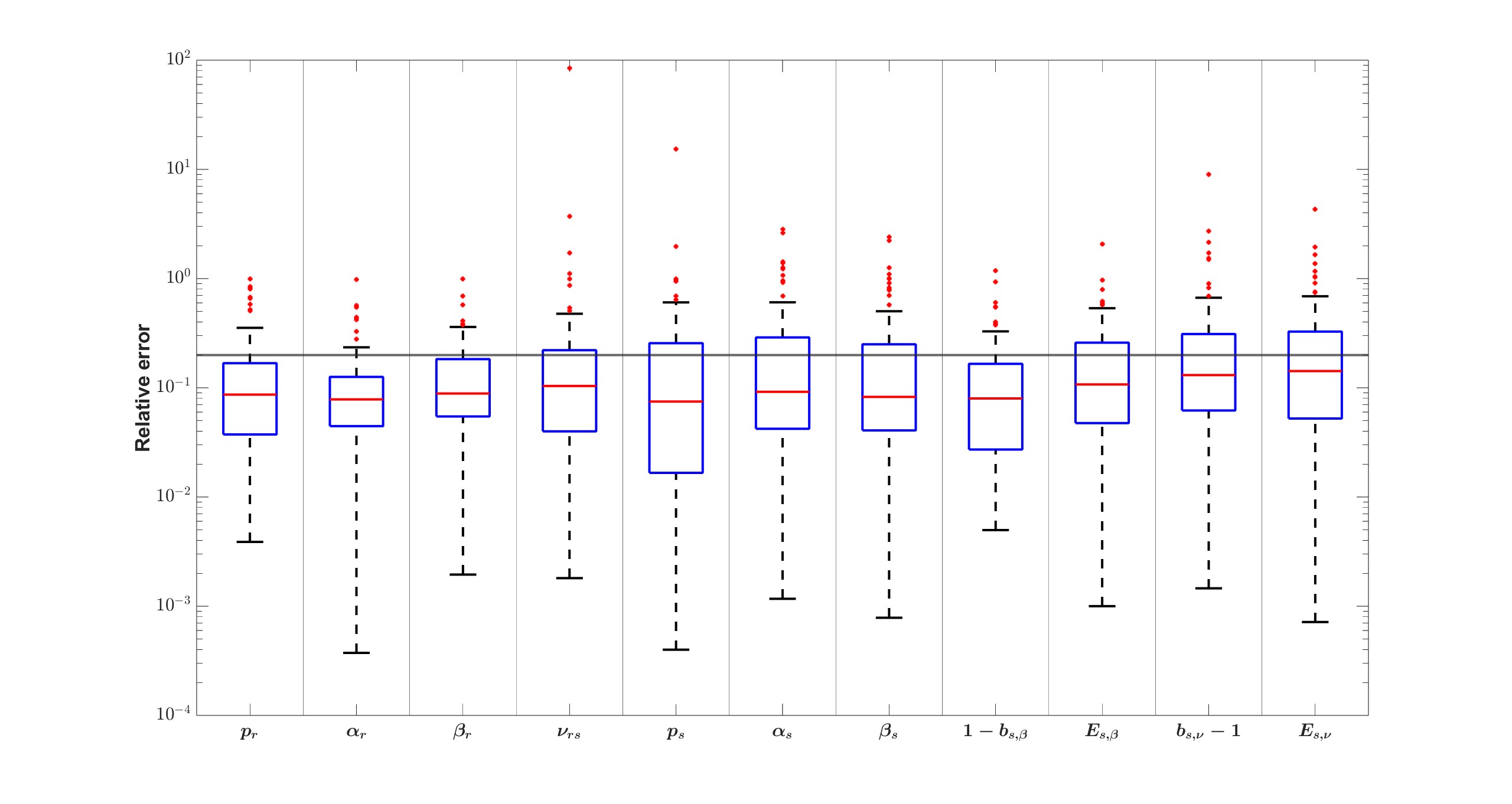}
    \caption{Relative error of the parameter estimation when relaxing Assumption \ref{assump:Initial structure}. The remaining details of the plot can be found in Figure \ref{fig:RE_100}.}
    \label{fig:Relaxed initial proportion RE 100}
\end{figure}

The results in Figure \ref{fig:Relaxed initial proportion RE 100} indicate that relaxing the initial proportion assumption does not deteriorate the accuracy of the estimation of other parameters, and the initial proportion is estimated reasonably well. 
This highlights the potential of our framework to deconvolute the phenotypic subpopulation structure with an unknown initial proportion. 
It is also worth noting that we allow the $p_r$ and $p_s$ to be uneven, for example, $p_r = 0.03$ and $p_s = 0.97$.

\subsubsection{Limited division CNSC dynamics assumption}

\label{sec:limited division cell growth dynamic assumption}

We finally consider a more general CNSC division scheme, where CNSCs can only undergo symmetric division a limited number of times, reflecting a gradual loss of proliferation potential.
We refer to CNSCs produced through asymmetric division of CSCs as first-generation CNSCs, which can only divide up to $G$ times. 
For the first $G$ generations of CNSCs, we assume that the $g$-th generation can symmetrically divide into two $(g+1)$-th generation CNSCs, asymmetrically divide into one CSC and one $g$-th generation CNSC, or die. 
The $(G+1)$-th generation CNSCs can only die or asymmetrically divide into one CSC and one $(G+1)$-th generation CNSC. 
All CNSCs share the same symmetric division rate $\alpha_s$, death rate $\beta_s$ and asymmetric division rate $\nu_{sr}$ (Figure \ref{fig:Limited division events}).
In addition, the drug effect parameters $b_{s,\beta},E_{s,\beta}, b_{s,\nu}, E_{s,\nu}$ are the same for CNSCs in the different generations.
This scheme encompasses a broader spectrum of CNSC dynamics, ranging from the scenario where CNSCs cannot self-renew ($G = 0$) to the scenario where CNSCs have an infinite capacity for self-renewal with a certain rate $\alpha_s$ ($G \rightarrow \infty$).
Theoretically, our current framework can accurately capture the extreme scenarios where $G = 0$ or $G \rightarrow \infty$.
We are interested in the practical performance of our framework in the intermediate case, i.e., $0<G <\infty$.

\begin{figure}[ht]
\begin{subfigure}{.23\textwidth}
  \centering
  \includegraphics{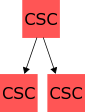}  
  \caption{Symmetric division of CSC (rate: $\alpha_r$)}
  \label{fig:CSC_sd}
\end{subfigure}
\begin{subfigure}{.23\textwidth}
  \centering
  \includegraphics{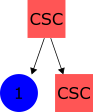}  
  \caption{Asymmetric division of CSC (rate: $\nu_{rs}$)}
  \label{fig:CSC_ad}
\end{subfigure}
\begin{subfigure}{.23\textwidth}
  \centering
  \includegraphics{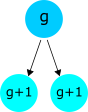}
  \caption{Symmetric division of CNSC (rate: $\alpha_s$)}
  \label{fig:CNSC_sd}
\end{subfigure}
\begin{subfigure}{.23\textwidth}
  \centering
  \includegraphics{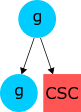}
  \caption{Asymmetric division of CNSC (rate: $\nu_{sr}$)}
  \label{fig:CNSC_ad}
\end{subfigure}
\caption{Illustration of four distinct division events assumed for the limited division scenario. Square shape represents CSC and round shape indicates CNSC, while different colors illustrate a different division potential of each generation in CNSC.}
\label{fig:Limited division events}
\end{figure}
 



We first want to determine whether our framework can reliably estimate the symmetric division rate of CNSCs, $\alpha_s$.
To streamline the experiment, we selected the same true value $\alpha_s^* > 0$ for $G\in\{0,1,2,3\}$.
Instead of using the RE, we adopt the ratio between the estimated value and the true value of $\alpha_s$ as the metric of accuracy. 
Specifically, for an estimator $\hat{x}$ and the true value $x^*$, we compute the accuracy metric as:
\begin{equation}
    \label{eq:ratio error}
    Er(\hat{x};x^*) = \frac{\hat{x}}{x^*}.
\end{equation}
This new metric effectively captures two important scenarios: (i) the framework should produce an estimation $\hat{\alpha}_s = 0$ when $G = 0$, resulting in a ratio $\hat{\alpha}_s/\alpha_s^* = 0$, and (ii) the framework should accurately estimate $\hat{\alpha}_s = \alpha_s^*$ when $G$ is sufficiently large.

\begin{figure}
    \centering
    \includegraphics[width = \textwidth]{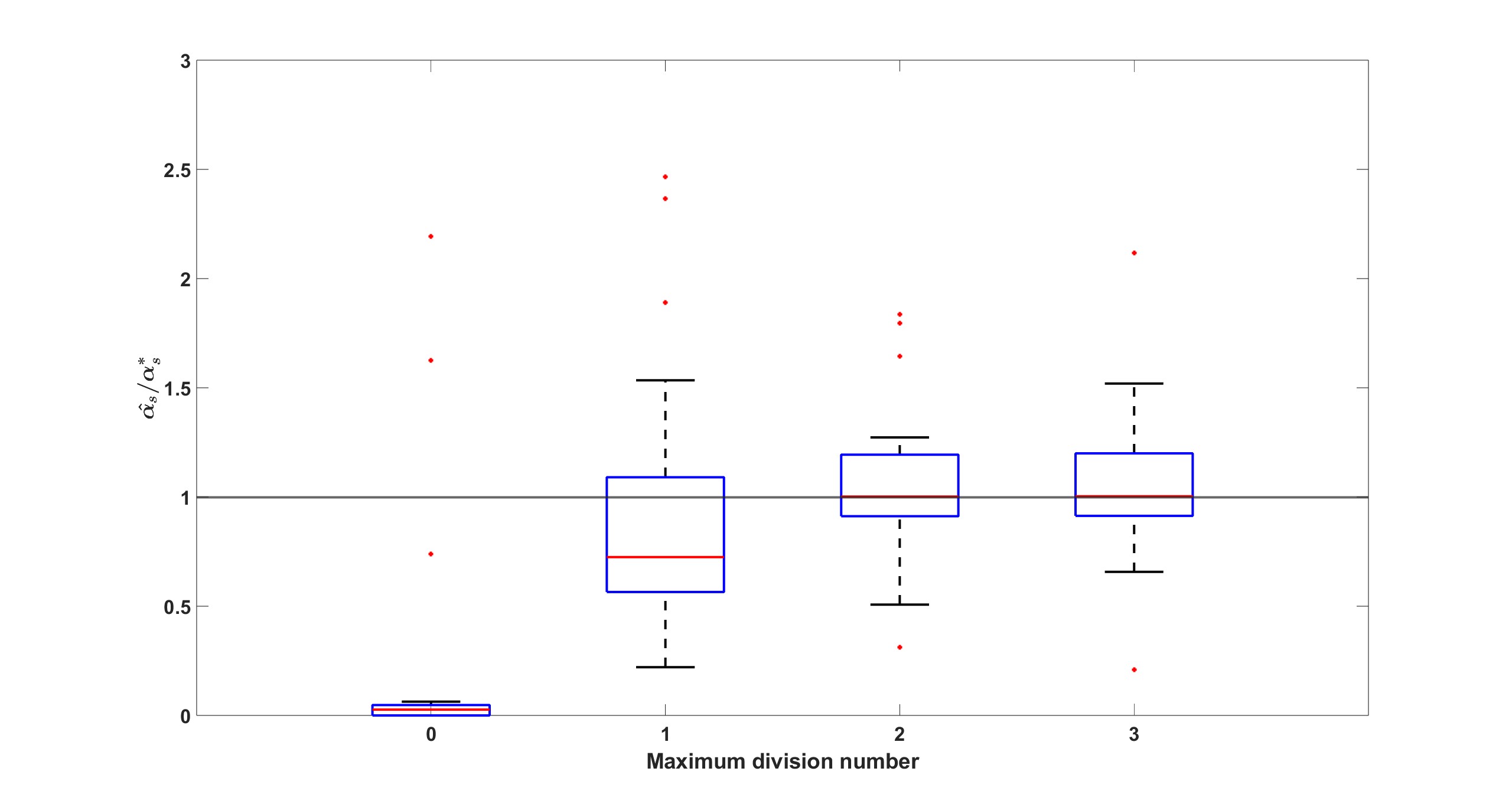}
    \caption{Estimation error of the CNSCs division rate when CNSCs can only divide a limited amount of times. The ratios between the estimated division rate and the true division rate are presented to varying maximum division numbers $G$. The boxplot is based on 30 independent experiments for each $G\in \{0,1,2,3\}$ with varying true parameter sets. The grey horizontal line illustrates the ratio when CNSCs division rate is accurately estimated.}
    \label{fig:birth rate ratio}
\end{figure}

The results of our experiment (Figure \ref{fig:birth rate ratio}) align well with these two scenarios.
Specifically,  our framework accurately infers that $\alpha_d = 0$ when the maximum division number is $G = 0$, indicating that CNSCs cannot undergo self-renewal. 
However, when $G = 1$, $75\%$ of estimates for $\alpha_s$ are below the true value.
Surprisingly, our framework accurately recovers the true value of $\alpha_s$ when $G = 2$ and $G = 3$.
For other parameters, we generally observed systematic biases in the estimation, see Appendix \ref{appx: detailed analysis of limited division CNSCs dynamic assumption}. 
However, interestingly, we found that the drug effects' inflection points, $E_{s,\nu}$ and $E_{s,\beta}$, were accurately estimated across different values of $G$. 
This observation demonstrates the robustness of our framework in estimating the drug effects' inflection points under varying differentiation dynamics.
In other words, through detecting the changes in cell population growth dynamics, our framework can estimate the concentration levels at which such changes occur, even though it may not fully explain these changes.
For the other parameters, we note that it is possible to fully capture the dynamics of CNSCs with limited proliferation potential by expanding our model beyond two subpopulations, as we further discuss in the conclusion section (Section \ref{sec:Conclusion}).




\section{Results (\textit{in vitro})}

\label{sec:In vitro}
We validate our approach using several published drug screen datasets. 
\subsection{Detecting CPX-O induced plasticity in gastric carcinoma cell lines}

\subsubsection{Data overview}

To validate our framework using \textit{in vitro} data, we utilized data from a recent study \cite{padua2023high} that investigated four human gastric carcinoma cell lines (AGS SORE6$-$, AGS SORE6+, Kato III SORE6$-$, Kato III SORE6+). 
Our focus in this section is specifically on the AGS SORE6$-$ and AGS SORE6+ cell lines. 
SORE6 serves as a reporter system indicating the stemness of these cancerous gastric cells, where SORE6$-$ and SORE6+ denote gastric CNSCs and CSCs, respectively. 
In the study \cite{padua2023high}, ciclopirox olamine (CPX-O) was found to reprogram gastric CNSCs into CSCs. 
To support these findings, the authors cultivated pure AGS SORE6$-$ cells and monitored both cell viability data and the percentage of AGS SORE6+ cells using the SORE6 reporter system.

In our study, we mainly utilized two experimental datasets provided in \cite{padua2023high}: 
\begin{enumerate}
    \item \textit{AGS Conc $(X_{Conc})$:} AGS SORE6$-$ samples were cultivated under nine concentration levels of CPX-O: 
    \[\mathcal{D}_1 = \{0,0.125,0.25,0.5,1,2,4,8,16\} \; (\mu M).\]
    Total cell count data (TC) and stem cell proportion (SC) were observed after 48 hours of treatment, i.e. $\mathcal{T}_1 = \{48\} $ (hours).

    \item \textit{AGS Time $(X_{Time})$:} AGS SORE6$-$ samples were cultivated under three concentration levels of CPX-O:
    \[\mathcal{D}_2 = \{0,4,8\} \; (\mu M).\]
    Total cell count data (TC) and stem cell proportion (SC) were observed after 2, 6, 12, 24 and 48 hours of treatment: 
    \[\mathcal{T}_{2} = \{2,6,12,24,48\} \text{ (hours)}.\]
\end{enumerate}

Unfortunately, neither of these experiments provides enough information to estimate all parameters under our framework.
Specifically, three concentration levels in the \textit{AGS Time} data are not sufficient to estimate the four parameter effects of the drug in our framework, and a single time point in the \textit{AGS Conc} data is inadequate to capture the heterogeneous dynamics of cell growth.
This limitation may arise from the need to employ the FACS technique to distinguish between the CSC and CNSC populations. 
To overcome this limitation, we propose to employ a data imputation on the \textit{AGS Conc} data based on the information from the \textit{AGS Time} data. 
Further details of imputation can be found in Appendix \ref{appx: AGS Conc estimated TC}. We provide the imputed dataset in Table \ref{tab:AGS Conc data.}.

\begin{table}[ht]
    \centering
    \begin{tabular}{|c|c|c|c|c|c|c|c|c|c|}
    \hline
       Concentration levels ($\mu M$) & 0 & 0.125 & 0.25 & 0.5 & 1 & 2& 4 & 8 & 16  \\
       \hline
       TC of replicate 1 at time 12  & {\tb 3083} & {\tb 3083} & {\tb 3083} & {\tb 3083} & {\tb 3083} & {\tb 3083} & {\tb 3083} & {\tb 3083} & {\tb 3083}\\
       \hline
       TC of replicate 1 at time 24  & {\tb 4272} & {\tb 4246} & {\tb 4222} & {\tb 4173} & {\tb 4080} & {\tb 3907} & {\tb 3608} & {\tb 3150} & {\tb 2568}\\
       \hline
       TC of replicate 1 at time 48 & 5038 & 4503 & 3242 & 2235 & 1849 & 1608 & 1140 & 1054 & 756 \\
       \hline
       TC of replicate 2 at time 12  & {\tb 3502} & {\tb 3502} & {\tb 3502} & {\tb 3502} & {\tb 3502} & {\tb 3502} & {\tb 3502} & {\tb 3502} & {\tb 3502}\\
       \hline
       TC of replicate 2 at time 24  & {\tb 4851} & {\tb 4823} & {\tb 4795} & {\tb 4739} & {\tb 4633} & {\tb 4437} & {\tb 4098} & {\tb 3578} & {\tb 2917}\\
       \hline
       TC of replicate 2 at time 48 & 5722 & 4597 & 3433 & 2881 & 2511 & 2295 & 1875 & 1309 & 694 \\
       \hline
    \end{tabular}
    \caption{\textit{In vitro} total cell count (TC) data: \textit{AGS Conc} data with augmented estimation according to \textit{AGS Time}. The colored data are estimated data according to the \textit{AGS Time} data, while the black data are actual data from \cite{padua2023high}.}
    \label{tab:AGS Conc data.}
\end{table}

\subsubsection{Candidate model description and model selection results}

One potential application of our newly proposed statistical framework in practical settings is to detect the presence of drug-induced plasticity using model selection criteria. 
Here, we employ the well-known Akaike Information Criterion (AIC) calculated by the formula:
\begin{equation}
    \label{eq: AIC}
    \text{AIC} = 2|\btheta| - 2 \log(L^*),
\end{equation}
where $|\btheta|$ is the number of free parameters in the model and $L^*$ is the maximum value of the likelihood function for the model. 
AIC, rooted in information theory, quantifies the information lost by a model, with lower AIC values indicating higher model quality. 

We consider a total of four different models. 
The first two models, based on the framework described in equation \eqref{eq: stat model}, differ in their assumptions regarding the induction of plasticity by CPX-O. 
We denote the model that assumes drug-induced plasticity as model $I$ and the alternative as model $Ia$. 
In fitting the \textit{in vitro} data, we reintroduce the Hill parameter $m$ to our drug-effect model (Section \ref{sec:drug-effect model}). 
Additionally, it is assumed that the drug does not affect the dynamics of CSCs.

Model $I$ has $12$ free parameters, as detailed in Table \ref{tab:in vitro model parameters}, to depict the cell growth dynamics of CSCs and CNSCs and the drug response of CNSCs.
In contrast, Model $Ia$ requires only $6$ free parameters, as it assumes no drug-induced plasticity for CPX-O, resulting in no parameters necessary for gastric CSCs.
Indeed, since the {\em in vitro} experiments in question are started with isolated CNSC subpopulations $(p_s=1)$, and no natural plasticity of CNSCs is assumed ($\nu_{sr} = 0$), no CSCs should emerge in the experiments according to the assumptions of Model $Ia$.

In the other two models, we introduce a time-delayed drug-effect which is present in the \textit{AGS Time} data, since similar cell growth patterns are observed within the first 12 hours across different drug concentration levels, with pronounced variations emerging thereafter. 
Further details can be found in Appendix \ref{appx: AGS time delayed}. 
Therefore, we assume that the maximum drug effect parameter $b$ varies over time following a two-parameter logistic function as described in Section \ref{sec:time-delayed model}. 
Like the previous two models, one of these time-delayed drug-effect models, denoted as $II$, assumes that CPX-O can induce plasticity, while the other, denoted as $IIa$, does not. 
Model $II$ has $14$ free parameters, whereas Model $IIa$ involves $8$ free parameters, as detailed in Table \ref{tab:in vitro model parameters}.

\begin{table}
    \centering
    \begin{tabular}{|c|c|l|}
    \hline
        Model & $|\theta|$ & Free parameters  \\
        \hline
        $I$ & 12 & $\alpha_r,\beta_r,\nu_{rs},\alpha_s,\beta_d,b_{s,\beta},E_{s,\beta},m_{s,\beta},b_{s,\nu},E_{s,\nu},m_{s,\nu},c$  \\
        \hline
        $Ia$ & 6 & $\alpha_s,\beta_s,b_{s,\beta},E_{s,\beta},m_{s,\beta},c$ \\
        \hline
        $II$ & 14 & $\alpha_r,\beta_r,\nu_{rs},\alpha_s,\beta_s,b_{s,\beta},E_{s,\beta},m_{s,\beta},b_{s,\nu},E_{s,\nu},m_{s,\nu},k,t_0,c$  \\
        \hline
        $IIa$ & 8 & $\alpha_s,\beta_s,b_{s,\beta},E_{s,\beta},m_{s,\beta},k,t_0,c$  \\
    \hline
    \end{tabular}
    \caption{Free parameters in models for AGS--CPX-O experiment.}
    \label{tab:in vitro model parameters}
\end{table}

\begin{table}
    \centering
    \begin{tabular}{|c|c|c|}
    \hline
         & Drug-induced plasticity & no Drug-induced plasticity \\
         \hline
        Without time-delay effect & \textbf{AIC($I$): 653.4866} & AIC($Ia$): 677.9985\\
         \hline
        With time-delay effect & \textbf{AIC($II$): 543.7607}& AIC($IIa$): 576.5120\\
        \hline
    \end{tabular}
    \caption{AIC value for 4 varying model assumptions in AGS--CPX-O experiment.}
    \label{tab:AIC value}
\end{table}

The AIC results, summarized in Table \ref{tab:AIC value}, indicate a preference for the models assuming drug-induced plasticity under the AIC model selection criteria. 
Furthermore, comparing the AIC values between models with and without the time-delay assumption reveals the significant role of the time-delay effect in the current data. 
While our newly proposed framework may not fully capture the experimental data in \cite{padua2023high}, it nonetheless yields conclusions consistent with the actual observations reported in \cite{padua2023high}.
It should also be stressed that our framework is able to infer the presence of drug-induced resistance from total cell count data, whereas this conclusion is drawn in \cite{padua2023high} using additional data on the CSC proportion over time.
Of course, the inference of drug-induced plasticity is in this case facilitated by the ability to start the experiments from isolated CNSCs, but our framework is applicable to a broader range of experimental setups without this ability.

\subsection{Detecting Vemurafenib induced plasticity in COLO858 cell line}

In \cite{fallahi2017adaptive}, the authors observed drug-induced de-differentiation of melanoma cells, leading to adaptive resistance.
Specifically, they monitored the response of a melanoma cell line, COLO858, following exposure to the BRAF inhibitor Vemurafenib. 
Single-cell analysis and molecular profiling revealed an up-regulation of a de-differentiated NGFR$^{\text{High}}$ state in Vemurafenib-treated cells.
The authors also collected live-cell imaging bulk cell count data for the COLO858 cell line under various Vemurafenib concentration levels. 
This live-cell imaging data was provided in the follow-up work \cite{comandante2020phenotype}.
We now wish to show that our framework can be used to detect drug-induced plasticity in this dataset.
For consistency with previous definitions, we refer to the NGFR$^{\text{High}}$ and NGFR$^{\text{Low}}$ states of the COLO858 cell line as CSCs and CNSCs, respectively, for the remainder of this subsection.

\subsubsection{Data overview}

In {\cite{comandante2020phenotype}}, the COLO858 cell line was treated with six doses of Vemurafenib:
\[\mathcal{D}_v = \{0 (\text{DMSO}),0.032,0.1,0.32,1,3.2\} (\mu M),\]
over a period of $120$ hours.
The full dataset is shown in Figure \ref{fig:COLO858 data}. 

\begin{figure}
    \centering
    \includegraphics[width=0.8\linewidth]{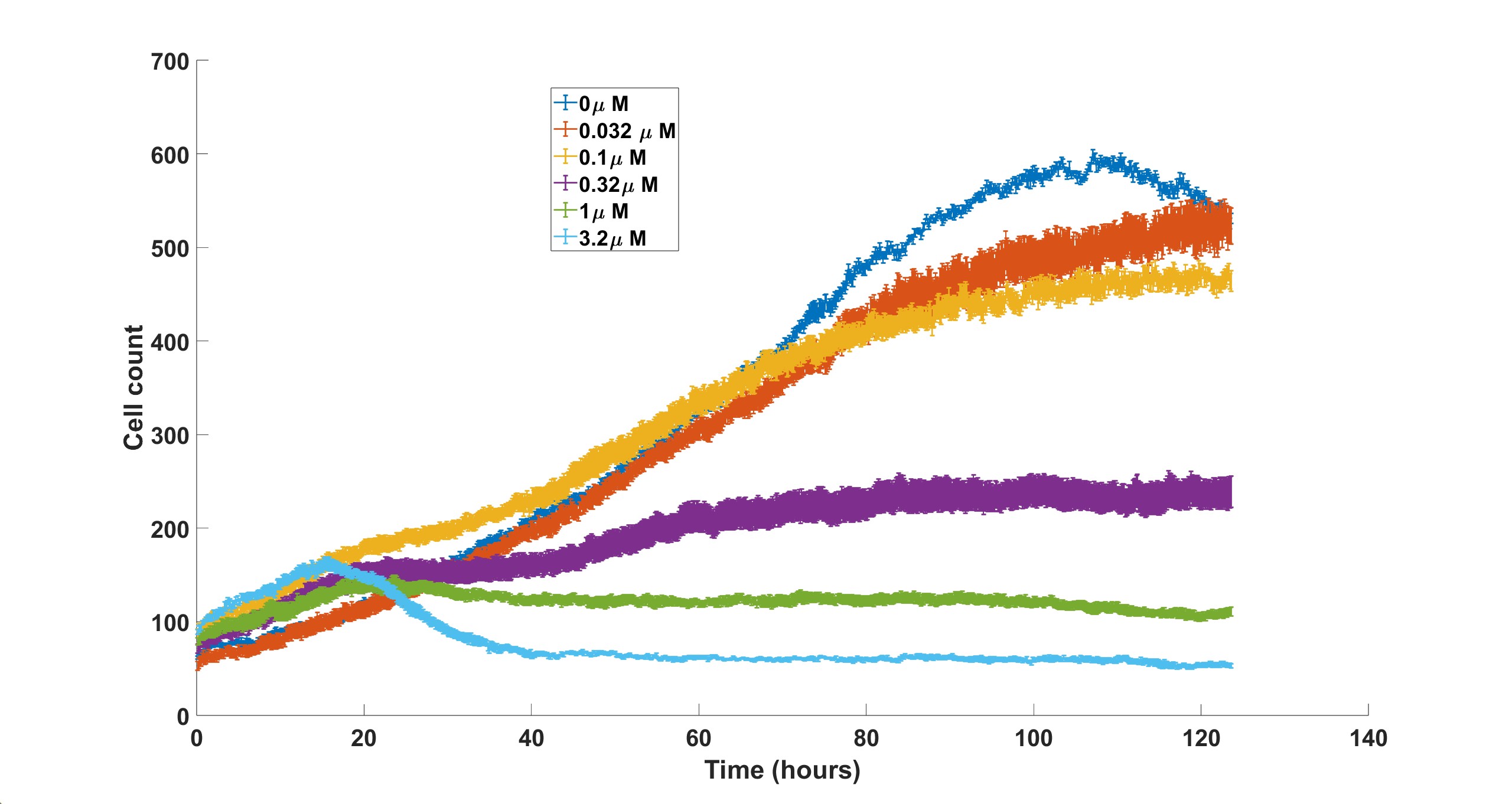}
    \caption{COLO858 bulk cell count data treated under six different concentration levels of Vemurafenib. Errorbars, color-coded for each drug concentration, represent the mean and scaled standard deviation from four replicates.}
    \label{fig:COLO858 data}
\end{figure}

Similar to the AGS--CPX-O dataset, there is a time-delayed drug effect at the beginning of these experiments. 
Fortunately, we have already proposed a time-delayed drug response model in Section {\ref{sec:time-delayed model}}.
It is worth reiterating that incorporating the time-delayed drug response model will complicate the computation of the \textit{mean matrix} and the \textit{covariance matrix}, so we decided to employ a simpler statistical model for analysis of this dataset as described in Section \ref{sec: Deterministic approximation and time-delayed drug-effect model}.

As is apparent from Figure \ref{fig:COLO858 data}, the current dataset exhibits a logistic growth dynamic under DMSO conditions, 
which is not consistent with the exponential growth assumption of our framework.
Therefore, we only use data from the first 60 hours of the experiments, which includes a total of 30 time points at 2-hour intervals, 
where
exponential growth is a reasonable assumption.

\subsubsection{Candidate model description and model selection results}

Similar to the AGS--CPX-O dataset, we propose using the AIC model selection criterion to determine whether the drug induces de-differentiation of the COLO858 cell line treated with Vemurafenib solely using live-cell imaging bulk cell count data.
Before defining various models of interest, we begin with two model assumptions derived from \cite{fallahi2017adaptive}:
\begin{itemize}
    \item Initially, the population consists entirely of melanoma CNSCs. ($p_s = 1, p_r = 0$).
    \item Vemurafenib may have cytotoxic effects on both CSCs and CNSCs; however, it promotes transitions only from CNSCs to CSCs. ($b_{r,\nu} = 1$).
\end{itemize}
These assumptions closely align with those made in Section {\ref{sec:CSC assumptions}}.
However, we would like to emphasize that we now allow the drug to have a cytotoxic effect on the CSC population, aligning with the observation in {\cite{fallahi2017adaptive}} that the COLO858 CSCs are less sensitive but still inhibited by the Vemurafenib.

We are fundamentally interested in determining whether the drug induces de-differentiation from CNSCs to CSCs, but since it is unknown whether there is natural plasticity between the two subpopulations or not, we will allow for both alternatives.
This leads to four candidate models.
The models $DIP_{nAsy}$ and $nDIP_{nAsy}$ assume no natural plasticity ($\nu_{rs}= \nu_{sr} = 0$), with the former model assuming the presence of drug-induced plasticity and the latter assuming no such drug effect.
The models $DIP_{Asy}$ and $nDIP_{Asy}$ assume the possibility of natural plasticity, where the former model again assumes the presence of drug-induced plasticity.
The free parameters for each model are provided in Table {\ref{tab:COLO858 model parameters}}.

\begin{table}
    \centering
    \begin{tabular}{|c|c|l|}
    \hline
        Model & $|\theta|$ & Free parameters  \\
        \hline
        $DIP_{nAsy}$ & 13 & $\alpha_r,\beta_r, b_{r,\beta},E_{r,\beta},\alpha_s,\beta_s,b_{s,\beta},E_{s,\beta},b_{s,\nu},E_{s,\nu},k,t_0,c$  \\
        \hline
        $nDIP_{nAsy}$ & 8 & $\alpha_s,\beta_s,b_{s,\beta},E_{s,\beta},m_{s,\beta},k,t_0,c$  \\
        \hline
        $DIP_{Asy}$ & 15 & $\alpha_r,\beta_r,\nu_{rs},b_{r,\beta},E_{r,\beta},\alpha_{s},\beta_{s},\nu_{sr},b_{s,\beta},E_{s,\beta},b_{s,\nu},E_{s,\nu},k,t_0,c$\\
        \hline
        $nDIP_{Asy}$ & 13 &     $\alpha_r,\beta_r,\nu_{rs},b_{r,\beta},E_{r,\beta},\alpha_{s},\beta_{s},\nu_{sr},b_{s,\beta},E_{s,\beta},k,t_0,c$\\
    \hline
    \end{tabular}
    \caption{Free parameters in models for COLO858--Vemurafenib experiment.}
    \label{tab:COLO858 model parameters}
\end{table}

\begin{table}
    \centering
    \begin{tabular}{|c|c|c|}
    \hline
         & Drug-induced plasticity & no Drug-induced plasticity \\
         \hline
        Without natural plasticity & \textbf{AIC($DIP_{nAsy}$): 6183} & AIC($nDIP_{nAsy}$): 6433\\
         \hline
        With natural plasticity & \textbf{AIC($DIP_{Asy}$): 6145}& AIC($nDIP_{Asy}$): 6370\\
        \hline
    \end{tabular}
    \caption{AIC value for 4 varying model assumptions in COLO858--Vemurafenib experiment.}
    \label{tab:COLO858 AIC value}
\end{table}

The AIC results are demonstrated in Table \ref{tab:COLO858 AIC value}.
See Appendix \ref{appx:In vitro details} for the details of the computation. 
We conclude that models assuming drug-induced plasticity are generally preferred, regardless of whether natural transitions between the two subpopulations are assumed.
These results show how our framework, using only data on the bulk cell population, is able to detect the presence of drug-induced resistance which has been confirmed using single-cell analysis in \cite{fallahi2017adaptive}.

In a recent paper {\cite{sontag2024understanding}}, Sontag et.al.~employed a deterministic model to analyze the same COLO858--Vemurafenib dataset.
Their estimation scheme assessed drug effects at different concentration levels with distinct parameters, estimating non-zero drug-induced plasticity.
In contrast, our statistical framework employed the Hill equation to quantify the dose-response relationship and concluded the presence of drug-induced plasticity across the entire dataset using the AIC model selection criteria.

\section{Conclusion}
\label{sec:Conclusion}

In this study, we introduced a novel statistical framework designed for analyzing HTS bulk data involving multiple subpopulations of cells.
Using an asymmetrical birth multi-type branching process,
we extended a model introduced in our previous work \cite{wu2024using} to 
accommodate transitions between the different subpopulations,
thereby enhancing its range of applications. 
One such application involves inferring drug-induced plasticity in a mixture population of cancer stem-like cells (CSCs) and cancer non-stem-like cells (CNSCs). 
The asymmetrical birth multi-type branching process naturally accounts for the differentiation of CSCs and the de-differentiation of CNSCs. 
Additionally, we incorporated the Hill equation to characterize the cytotoxic effects of anti-cancer drugs and drug-induced de-differentiation rates.
We tested our approach using both \textit{in silico} and \textit{in vitro} data.

In our \textit{in silico} experiments, we used stochastic simulation to generate datasets involving drug-treated mixtures of CSCs and CNSCs.
The datasets consisted of total cell counts collected at predetermined drug concentration levels and time points. 
Drawing inspiration from recent research on drug-induced plasticity \cite{padua2023high}, we formulated five assumptions to simulate the dynamics of CSCs and CNSCs, along with their respective drug effects. 
Under these assumptions, our newly proposed framework not only identifies the presence of drug-induced plasticity, but also accurately predicts several key features of the mixture dynamics.
Specifically, in an illustrative example, the framework accurately recovered the drug-affected \textit{stable proportions} between CSCs and CNSCs across various drug concentration levels. 
Furthermore, it determined the $GR_{50}$ values for both the drug's cytotoxic effect on CNSCs and its induced plasticity effect. 
Through further numerical experiments, we confirmed that the newly proposed framework consistently provides precise estimations for each model parameter, capturing the growth dynamics of individual subpopulations, transitions between the subpopulations, as well as drug-induced toxicity and plasticity.

We further investigated the robustness of our framework by relaxing assumptions regarding the dynamics of the CSCs and CNSCs and how the drug affects each supbopulation.
By allowing the CSC and CNSC mixture to start from an arbitrary initial proportion instead of the stable drug-free proportion, our framework accurately estimated not only the mixture's growth dynamics and drug effects but also the initial proportion. 
However, relaxing the assumption that the drug only affects CNSCs resulted in a more complex drug effect profile, which degraded the identifiability of each individual effect. 
In particular, the drug effects on CSCs were less identifiable, which may be due to our assumption that the drug has a more significant effect on CNSCs.
It should also be kept in mind that we only assume data on total cell counts, as opposed to cell counts for individual subpopulations, which is an inherent limitation for teasing apart complex evolutionary dynamics and drug effect profiles involving multiple parameters.
Using this kind of data, some simplifying assumptions must be made to ensure the identifiability of all model parameters.

Additionally, we explored the performance of our framework under the assumption of limited division of CNSCs, portraying the differentiation process as a gradual loss of cell proliferation potential. 
Our framework successfully estimated the CNSC division rate when the maximum division number $(G)$ of CNSCs equaled 0, 2 or 3.
Furthermore, it consistently provided reliable estimates of the half-maximum effects parameter ($E$) for each drug effect on the CNSCs. 
Even though parameter estimates generally degraded,
these results demonstrate that our newly proposed framework can offer valuable insights even under more complex and realistic scenarios.
To fully capture these dynamics, we propose expanding our model beyond two subpopulations, by modeling each generation of CNSCs as a distinct subpopulation.
Specifically, we can represent the symmetric division of $g$-th generation cells into two $(g+1)$-th generation cells as birth events where a cell of one type produces two cells of another type.
Currently, our asymmetric division framework does not account for this behavior, as it models subpopulation transitions through asymmetric divisions. 
However, modeling subpopulation transitions through symmetric division is straightforward using the multi-type branching process framework \cite{athreya2004branching}.
We plan to explore this in future work.

We validated our framework using data from the gastric carcinoma cell line (AGS) treated with ciclopirox olamine (CPX-O) {\cite{padua2023high}}, as well as BRAF-mutated melanoma cell line (COLO858) treated with Vemurafenib {\cite{comandante2020phenotype}}.
Based on single-cell tracking analysis, these two studies confirm the presence of drug-induced plasticity in both datasets.
Coupled with the AIC model selection scheme, our novel statistical framework is able to indicate the presence of drug-induced plasticity in both datasets using live-cell imaging bulk data alone.
Validation using data from \cite{padua2023high} posed challenges due to the limited \textit{in vitro} data available. 
This discrepancy is partly due to the experimental procedure described in \cite{padua2023high}, which involves using FASC to separate CSCs and CNSCs -- a step that our framework does not require.
Nevertheless, our framework indicates that CPX-O induces plasticity in the AGS gastric carcinoma cell line, aligning with the conclusions in \cite{padua2023high}.
In the COLO858--Vemurafenib validation experiment, an abundance of time-course data over a period of 120 hours is provided in {\cite{comandante2020phenotype}}.
However, we observed a logistic type of proliferation dynamics, potentially caused by carrying capacity, which is not assumed in our model.
Therefore, we chose to employ only the initial 60 hours data, which demonstrated a clear exponential type of proliferation.
Incorporating logistic proliferation dynamics is a potential future direction.

In general, our framework requires many parameters to model various drug effects, necessitating multiple data points at distinct concentration levels to adequately capture mixture dynamics under drug influence.
However, this requirement may be offset by the advantage of not needing a more sophisticated technique for subpopulation separation.
Moreover, the quality of the data is equally as important as the quantity of it.
As observed in our \textit{in silico} experiments, some true parameter sets became unidentifiable under the specific concentration levels used in the experiments. 
To enhance the framework's utility, one future direction involves developing techniques for experimental design, meaning optimal selection of drug concentrations and time points for accurate parameter inference.
Such analysis could markedly improve the framework's ability to extract insights from HTS data.

Another challenge identified in the \textit{in vitro} experiment was the time-inhomogeneous drug effect. 
While our drug-effect base model assumes time-homogeneous drug effects on the target tumor sample, data from \cite{padua2023high,comandante2020phenotype} show significant heterogeneity in drug effects over time. 
In Section \ref{sec:time-delayed model}, we addressed this by developing a two-parameter logistic function, resulting in a better fit to the \textit{in vitro} data.




There are several avenues for extension of the framework beyond those already mentioned.
First, while we have focused on inferring drug-induced plasticity in a CSC and CNSC mixture, the newly proposed framework is based on a versatile mathematical model (multi-type branching process) which can be adapted to a broader range of transition dynamics between two or more subpopulations. 
These subpopulations are defined by their heterogeneous responses to controllable interferences -- specifically, drugs in this case. 
Second, we have in this work focused on cellular responses to single-drug interventions.
One potential extension involves expanding our framework to encompass more complex multi-drug responses, potentially unveiling more intricate underlying systems.
Moreover, external factors such as nutrient levels, hypoxia, stromal content, and other microenvironmental factors could be integrated into our framework in future work.

\section*{Acknowledgements}
The authors would like to thank Diana P\'adua for helpful conversations regarding the analysis of the data in \cite{padua2023high}.
K.\ Leder, J.\ Foo and C.\ Wu were supported in part by  National Science Foundation, United States grant CMMI-2228034. 
J.\ Foo was supported in part by  National Science Foundation, United States grant DMS-2052465.

\section*{Author contributions}
All authors conceived of the study; C.W. and K.L. designed model; C.W. performed computational experiments; C.W. conducted data analysis; K.L. supervised the research; all authors wrote the paper.

\section*{Competing interests}
The authors declare no competing interests.

\section*{Code availability}
The code necessary to reproduce the experimental results presented in this paper are published in \url{https://github.com/chenyuwu233/Cancer-Stem-Cells-Drug-induced-Plasticity}.

\section{Appendix}
\label{sec:Appendix}

\subsection{Reducible \textit{infinitesimal generator matrix} $\mathbf{A}$ and long-run behavior of $\mathbf{B}(t,0)$}

\label{appx: Treatment under potential violation}

The results in Section \ref{sec: Long-run behavior} rely on the irreducibility assumption of \textit{infinitesimal generator matrix} $\mathbf{A}$. However, Assumption \ref{assump:Cell plasticity} in Section \ref{sec:CSC assumptions} violates this irreducibility assumption under no-drug conditions. Consequently, we explore the details of these assumptions and discuss the situation when those assumptions are violated in this section. 

The main issue arising from reducibility is that the process $\mathbf{B}(t,0)$ may not converge almost surely to a \textit{stable proportion} $\bm{\pi}$ in the long-run.
To address this, we use the results in \cite{nicholson2023sequential}. 
Given that we assume $\kappa_s > \kappa_d > 0$ (Assumption \ref{assump:Cell dynamic}) and CSCs do not go extinct (Assumption \ref{assump:no extinct CSCs}), we can directly apply Proposition 1 and Corollary 1 in \cite{nicholson2023sequential}. 
Specifically, Proposition 1 in \cite{nicholson2023sequential} states that there exist random variables $V_{s}$ and $V_d$ such that
\[\lim_{t \rightarrow \infty} e^{-\kappa_s t} B_{s}(t) = V_s \text{ and } \lim_{t\rightarrow \infty} e^{-\kappa_s t} B_d(t) = V_d\]
almost surely.
From Corollary 1 in \cite{nicholson2023sequential}, we further have 
\[V_{d} = \frac{\nu_{sd} V_s}{\kappa_s - \kappa_d}\]
almost surely. 
From this ratio, we can further derive
\[[V_s,V_d] = C \left[\frac{\kappa_s - \kappa_d}{\kappa_s - \kappa_d+\nu_{sd}}, \frac{\nu_{sd}}{\kappa_s - \kappa_d+\nu_{sd}}\right] = C \bm{\pi},\]
where $C$ is a random variable and the \textit{stable proportion} $\bm{\pi}$ is still the left eigenvector of $\mathbf{A}$ corresponding to the largest eigenvalue.
Consequently, the \textit{stable proportion} exists. By following the same procedure---obtaining the left eigenvector of $\mathbf{A}$ corresponding to the largest eigenvalue---we can determine the \textit{stable proportion} when $\mathbf{A}$ becomes reducible under the no-drug condition.

\subsection{Long-run behavior of finite moment of the multi-type branching process} 
\label{appx: finite moment}

In \cite{athreya2004branching}, the authors reviewed the \textit{long-run growth rate} of the first and the second moments of the multi-type branching process. Specifically, the \textit{mean matrix} scaled by $e^{-\lambda_1 t}$ would converge to a matrix independent of time. In other words, the long-run expected cell count will increase at rate $e^{\lambda_1 t}$. Similarly, the second moment has a growth rate $e^{2\lambda_1 t}$. To prove Proposition \ref{prop: Lindeberg CLT}, we extend these results by considering any finite moment of this process, i.e. we want to show that for any positive finite integer $z$ and combination $\mathbf{c}_z: [c_1(z),\cdots, c_K(z)] \geq 0$ such that $ \sum_{i = 1}^{K} c_{i}(z) = z$,
\[\limsup_{t\rightarrow \infty} \expect*{B^{(i)}_1(t)^{c_{1(z)}} B_2^{(i)}(t)^{c_{2(z)}}\cdots B_K^{(i)}(t)^{c_{K(z)}}} e^{-\lambda_1 z t} < \infty. \]
Consequently, we show the following Lemma \ref{lemma: Finite moment}.



\begin{lemma}
    \label{lemma: Finite moment}
    For any integer $z >0$ and combination $\mathbf{c}_z$, we have
    \[\expect*{B^{(i)}_1(t)^{c_{1(z)}} B_2^{(i)}(t)^{c_{2(z)}}\cdots B_K^{(i)}(t)^{c_{K(z)}}} = O(e^{\lambda_1 zt}).\]
\end{lemma}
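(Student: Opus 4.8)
The plan is to prove the bound by induction on the moment order $z$, using the generator of the Markov branching process to set up a closed linear ODE for the $z$-th order moments whose inhomogeneity involves only lower-order moments. Before starting, I would record that because each event produces at most two offspring, all mixed moments $\expect*{\prod_k B_k^{(i)}(t)^{c_k}}$ are finite for every $t \geq 0$ (a standard fact for Markov branching processes with bounded offspring, see \cite{athreya2004branching}), so that all expectations and the term-by-term differentiations below are legitimate. The base case $z=1$ is immediate: $\mathbf{m}^{(i)}(t) = \mathbf{e}_i \mathbf{M}(t) = \mathbf{e}_i e^{t\mathbf{A}}$, and by Perron--Frobenius (Section~\ref{sec: Long-run behavior}) each entry is $O(e^{\lambda_1 t})$.

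For the inductive step, fix a multi-index $\mathbf{c}$ with $|\mathbf{c}| = z$ and let $\mathcal{L}$ denote the generator of $\mathbf{B}(t)$ acting on functions of the state. Applying $\mathcal{L}$ to the monomial $\mathbf{n}^{\mathbf{c}} = \prod_k n_k^{c_k}$ and expanding the finite differences $(n_j \pm 1)^{c_j} - n_j^{c_j}$, one finds that $\mathcal{L}(\mathbf{n}^{\mathbf{c}})$ is a polynomial of total degree $z$ whose degree-$z$ part equals
\[
\Big(\sum_i \kappa_i c_i\Big)\mathbf{n}^{\mathbf{c}} + \sum_{i \neq j}\nu_{ij}\,c_j\,\mathbf{n}^{\mathbf{c}+\mathbf{e}_i-\mathbf{e}_j},
\]
with the remainder of degree at most $z-1$. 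Taking expectations and writing $\bm{\mu}_z(t)$ for the vector of all $z$-th order moments (entries indexed by $\mathbf{c}$), the forward (Dynkin) equation yields a linear system $\dot{\bm{\mu}}_z(t) = \mathcal{L}_z\,\bm{\mu}_z(t) + \mathbf{g}(t)$, where $\mathcal{L}_z$ is the matrix encoding the displayed top-order operator and $\mathbf{g}(t)$ is a fixed linear combination of moments of order at most $z-1$. By the inductive hypothesis and $\lambda_1 > 0$, this forcing term satisfies $\mathbf{g}(t) = O(e^{(z-1)\lambda_1 t})$.

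The main obstacle is to pin down the growth rate of the homogeneous semigroup $e^{t\mathcal{L}_z}$, i.e.\ to show that the dominant eigenvalue of $\mathcal{L}_z$ is exactly $z\lambda_1$. I would first observe that $\mathcal{L}_z$ has nonnegative off-diagonal entries (each of the form $\nu_{ij}c_j \geq 0$), so it is an essentially nonnegative (Metzler) matrix and Perron--Frobenius theory applies after shifting by a multiple of the identity; irreducibility of $\mathbf{A}$ lifts to irreducibility of $\mathcal{L}_z$ on the space of degree-$z$ monomials. I would then exhibit an explicit strictly positive right eigenvector: trying $v_{\mathbf{c}} = \prod_k \pi_k^{c_k}$, where $\bm{\pi}$ is the Perron left eigenvector of $\mathbf{A}$, the eigenvalue equation collapses (after grouping terms by $c_j$) to the requirement $\kappa_j \pi_j + \sum_{i\neq j}\nu_{ij}\pi_i = \lambda_1 \pi_j$ for every $j$, which is precisely the relation $\bm{\pi}\mathbf{A} = \lambda_1\bm{\pi}$. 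Hence $\mathbf{v}$ is a positive eigenvector of $\mathcal{L}_z$ with eigenvalue $z\lambda_1$, and by Perron--Frobenius this is the simple dominant eigenvalue, giving $\|e^{t\mathcal{L}_z}\| = O(e^{z\lambda_1 t})$.

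Finally I would close the induction by variation of parameters: $\bm{\mu}_z(t) = e^{t\mathcal{L}_z}\bm{\mu}_z(0) + \int_0^t e^{(t-s)\mathcal{L}_z}\mathbf{g}(s)\,ds$. The first term is $O(e^{z\lambda_1 t})$; for the second, bounding $\|e^{(t-s)\mathcal{L}_z}\| = O(e^{z\lambda_1 (t-s)})$ and $\|\mathbf{g}(s)\| = O(e^{(z-1)\lambda_1 s})$ gives an integral of order $e^{z\lambda_1 t}\int_0^t e^{-\lambda_1 s}\,ds = O(e^{z\lambda_1 t})$, where the crucial point is again that $\lambda_1 > 0$ so the integral converges. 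This yields $\bm{\mu}_z(t) = O(e^{z\lambda_1 t})$ and completes the induction.
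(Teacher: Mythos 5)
Your proof is correct, but it takes a genuinely different route from the paper's. The paper works at the level of the probability generating function: it differentiates the Kolmogorov forward equation \eqref{eq: KFE} $z$ times at $\mathbf{s}=\mathbf{1}$, tracks the single scalar $\ell_z(t)=\max_{\mathbf{c}_z}\,\partial^z F^{(i)}(\mathbf{s},t)/\partial\mathbf{s}^{\mathbf{c}_z}\big|_{\mathbf{s}=\mathbf{1}}$ (the largest $z$-th factorial moment), derives the differential inequality $\tfrac{d}{dt}\ell_z\le \lambda_1 z\,\ell_z+O(e^{\lambda_1(z-1)t})$, integrates it, and then relates factorial moments back to the power moments in the statement. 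You instead write a closed vector-valued ODE for the full array of ordinary $z$-th moments with coefficient matrix $\mathcal{L}_z$ and identify its growth rate exactly by exhibiting the strictly positive eigenvector $v_{\mathbf{c}}=\prod_k\pi_k^{c_k}$ with eigenvalue $z\lambda_1$. Both arguments share the same inductive skeleton (homogeneous part at rate $z\lambda_1$, forcing at rate $(z-1)\lambda_1$, convergent integral because $\lambda_1>0$). The paper's route buys economy: one scalar inequality, no moment-closure matrix over all weight-$z$ multi-indices, and reuse of the generating-function machinery of Appendix \ref{appx: Covariance}. Your route buys sharpness at the key step: the paper's bound on the $j=1$ term rests on the claim $\mathbf{1}\mathbf{A}\mathbf{e}_i^{\intercal}\le\lambda_1$ for every $i$, i.e.\ that every column sum of $\mathbf{A}$ is at most the Perron root, which does not hold for every generator satisfying the standing assumptions (the Perron root only lies between the smallest and largest column sums; e.g.\ $\kappa_1=1$, $\kappa_2=0.5$, $\nu_{12}=\nu_{21}=0.1$ gives column sum $1.1$ but $\lambda_1\approx 1.02$). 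Your eigenvector computation, which in effect weights the types by $\bm{\pi}$ before summing, is precisely the correction that makes this step airtight, and it also lets you work directly with the power moments in the statement rather than passing through factorial moments. The only points you leave implicit are the a priori finiteness of moments and the justification of Dynkin's formula for the unbounded function $\mathbf{n}^{\mathbf{c}}$, which you flag and which are standard; note also that strict positivity of $v$ alone already forces $z\lambda_1$ to be the spectral abscissa of the Metzler matrix $\mathcal{L}_z$ and yields $\|e^{t\mathcal{L}_z}\|=O(e^{z\lambda_1 t})$, so the irreducibility of $\mathcal{L}_z$ is not actually needed.
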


\begin{proof}

The probability-generating function for the multi-type branching process for $\mathbf{B}^{(i)}(t)$ is denoted as
\[F^{(i)}(\mathbf{s},t) := \expect*{\mathbf{s}^{\mathbf{B}^{(i)}(t)}} = \expect*{\prod_{k = 1}^{K}s_k^{\mathbf{B}_k^{(i)}(t)}},\]
where $\mathbf{s} = [s_1,\cdots, s_K]$. With these notations, we can write the Kolmogorov forward equation \cite{athreya2004branching} for $\mathbf{B}^{(i)}(t)$ as
\begin{equation}
\label{eq: KFE}
\frac{\partial}{\partial t} F^{(i)}(\mathbf{s},t) = \sum_{k = 1}^{K} u^{(k)}(\mathbf{s}) \frac{\partial}{\partial s_k} F^{(i)}(\mathbf{s},t),
\end{equation}
where the function $u^{(i)}(s)$ for type $i$ is defined as
\[u^{(i)}(\mathbf{s}):= \alpha_i s_i^2 + \beta_i + \sum_{j \neq i} \nu_{ij} s_is_j - \left(\alpha_i + \beta_i + \sum_{j \neq i} \nu_{ij}\right) s_i , \quad 0\leq \mathbf{s} \leq 1 \]

We denote a vector $\mathbf{c}_{z} = [c_{1(z)},c_{2(z)},\cdots,c_{K(z)}]$ as the combination index vector. Then we use $\partial \mathbf{s}^{\mathbf{c}_z}$ to represent $\partial s_1^{c_{1(z)}} \partial s_2^{c_{2(z)}}\cdots \partial s_K^{c_{K(z)}}$. 

Now we can define
\begin{equation}
\label{eq: Finite moment surrogate}
\ell_z(t) = \max_{\mathbf{c}_z}\frac{\partial^z}{\partial \mathbf{s}^{\mathbf{c}_z}} F^{(i)}(\mathbf{s},t)\Big\rvert_{\mathbf{s} =  \mathbf{1}},
\end{equation}
and we denote $\mathbf{c}_{z}^*$ as the maximum combination index of $z$-th moment. We can see from the equation \eqref{eq: Finite moment surrogate} that for each $z\geq 1$ there exists $C_z' >0$ such that for any combination $\mathbf{c}_z$ 
\begin{align*}
    \ell_{z}(t) &\geq \frac{\partial^z}{\partial \mathbf{s}^{\mathbf{c}_z}} F^{(i)}(\mathbf{s},t)\Big\rvert_{\mathbf{s} = 1} \\
    &\geq C_z' \sum_{n_1 = c_{1(z)},\cdots,n_K = c_{K(z)}}^{\infty}  n_1^{c_{1(z)}}\cdots n_K^{c_{K(z)}} P(B^{(i)}_1(t) = n_1,\cdots,B^{(i)}_K(t) = n_K) \\
    &= C_z' \expect*{\prod_{k = 1}^{K}B^{(i)}_k(t)^{c_{k(z)}}} - C_z' \expect*{\prod_{k = 1}^{K}B^{(i)}_k(t)^{c_{k(z)}}; B^{(i)}_k(t) < c_{k(z)} }.
\end{align*}
As the multi-type branching process goes to infinity or extinct with probability 1 as time goes to infinity, we have, in the supercritical case, 
\[\lim_{t\rightarrow \infty} \frac{\expect*{\prod_{k = 1}^{K}B^{(i)}_k(t)^{c_{k(z)}^{*}}; B^{(i)}_k(t) < c_{k(z)}^* }}{\expect*{\prod_{k = 1}^{K}B^{(i)}_k(t)^{c_{k(z)}^{*}}}} = 0\]
so that there exists $C_z$ such that $\ell_z(t)\geq C_z \expect*{\prod_{k = 1}^{K}B^{(i)}_k(t)^{c_{k(z)}^{*}}}$.

Thus, it suffices to establish the upper bound on $\ell_z(t)$ to prove the Lemma. We show the upper bound of $\ell_{z}(t)$ by induction on $z$. The base case was given by the fact that $e^{\lambda_1 t}$ is the largest eigenvalue of the \textit{mean matrix} $\mathbf{M}(t)$, i.e.
\[\expect*{B^{(i)}_k(t)} = [\mathbf{e}_i \mathbf{M}(t)]_{k} \leq e^{\lambda_1 t},\]
where $[\cdot]_k$ is the $k$-th element of the vector, for any $k \in \{1,\cdots,K\}$. Then we assume that for $z > 1$ and all $j \leq z-1, \ell_{j}(t) = O(e^{\lambda_1 j t})$. Then from the equation \eqref{eq: KFE} we have that
\begin{align*}
    \frac{d}{dt}\ell_{z} (t) &= \frac{\partial^z}{\partial \mathbf{s}^{\mathbf{c}_z^*}}\sum_{k = 1}^{K}  u^{(k)}(\mathbf{s}) \frac{\partial}{\partial s_{k}} F(\mathbf{s},t)\Big\rvert_{\mathbf{s} = \mathbf{1}}\\
    &= \sum_{j = 0}^{z} \sum_{\mathbf{c}_j \leq \mathbf{c}_z^*} {c_{1(z)}^{*} \choose c_{1(j)}}\cdots {c_{K(z)}^{*} \choose c_{K(j)}} \sum_{k = 1}^{K} \frac{\partial^j}{\partial \mathbf{s}^{\mathbf{c}_j}}  u^{(k)}(\mathbf{s}) \frac{\partial^{z-j+1}}{\partial \mathbf{s}^{\mathbf{c}_z^* - \mathbf{c}_{j}} \partial s_k} F(\mathbf{s},t) \Big\rvert_{\mathbf{s} = \mathbf{1}}\\
    &= \sum_{j = 1}^{z} \sum_{\mathbf{c}_j  \leq \mathbf{c}_z^*} {c_{1(z)}^{*} \choose c_{1(j)}}\cdots {c_{K(z)}^{*} \choose c_{K(j)}} \sum_{k = 1}^{K} \frac{\partial^j}{\partial \mathbf{s}^{\mathbf{c}_j}}  u^{(k)}(\mathbf{s}) \frac{\partial^{z-j+1}}{\partial \mathbf{s}^{\mathbf{c}_z^* - \mathbf{c}_{j}} \partial s_k} F(\mathbf{s},t) \Big\rvert_{\mathbf{s} = \mathbf{1}}\\
    &\leq \sum_{j = 1}^{z} \sum_{\mathbf{c}_j \leq \mathbf{c}_z^*} {c_{1(z)}^{*} \choose c_{1(j)}}\cdots {c_{K(z)}^{*} \choose c_{K(j)}} \sum_{k = 1}^{K} \frac{\partial^j}{\partial \mathbf{s}^{\mathbf{c}_j}}  u^{(k)}(\mathbf{1}) \ell_{z-j+1}(t)\\
    &\leq \lambda_1 z \ell_{z} (t) + \sum_{j = 2}^{z} \sum_{\mathbf{c}_j \leq \mathbf{c}_z^*} {c_{1(z)}^{*} \choose c_{1(j)}}\cdots {c_{K(z)}^{*} \choose c_{K(j)}} \sum_{k = 1}^{K} \frac{\partial^j}{\partial \mathbf{s}^{\mathbf{c}_j}}  u^{(k)}(\mathbf{1}) \ell_{z-j+1}(t)
\end{align*}
The third equality is due to the fact that $u^{(k)}(\mathbf{1}) = 0$ for all $k$, the first inequality follows from the definition of $\ell_z(t)$, and the last inequality is due to the following inequality
\[\sum_{k = 1}^K \frac{\partial}{\partial s_i} u^{(k)}(\mathbf{1})= \alpha_i - \beta_i + \sum_{k \neq i} \nu_{ki} = \mathbf{1} \mathbf{A} \mathbf{e}_i^{\intercal} \leq \lambda_1,\]
for any $i \in \{1,\cdots,K\}$. Since $\ell_{z - j+1}(t) = O(e^{\lambda_1 (z-j+1) t})$, for $z \geq j \geq 2$, there exists constants $\alpha_{z,\mathbf{c}_j}$ such that 
\begin{align*}
    e^{-\lambda_1 z t} \ell_z(t) &\leq \sum_{j = 2}^{z}\sum_{\mathbf{c}_j\leq \mathbf{c}_z^*}{c_{1(z)}^{*} \choose c_{1(j)}}\cdots {c_{K(z)}^{*} \choose c_{K(j)}}  \sum_{k = 1}^{K} \alpha_{z,\mathbf{c}_j} \int_{0}^{t} e^{-\lambda_1 \tau(j-1)} d\tau  \\
    &= \sum_{j = 2}^{z} \sum_{\mathbf{c}_j\leq \mathbf{c}_z^*} {c_{1(z)}^{*} \choose c_{1(j)}}\cdots {c_{K(z)}^{*} \choose c_{K(j)}}  \sum_{k = 1}^{K} \frac{\alpha_{z,\mathbf{c}_j}}{\lambda_1 (j-1)} \left(1 - e^{-\lambda_1 t (j-1)} \right) .
\end{align*}
The induction hypothesis is thus established, completing the proof.

\end{proof}

\subsection{Derivation of the covariance for asymmetric birth model:}
\label{appx: Covariance}

In this sub-section, we would like to derive the covariance matrix 
 \[\mathbf{\Xi}^{(i)}(t) = \expect*{ \left( \mathbf{B}^{(i)}(t) - \mathbf{m}^{(i)}(t)\right) ^{\intercal}  \left( \mathbf{B}^{(i)}(t) - \mathbf{m}^{(i)}(t)\right)}.\]
Rather than drive this directly, we denote a second factorial moment $\mathbf{D}_{jk}^{(i)}(t) = \expect*{\mathbf{B}^{(i)}_j(t)(\mathbf{B}^{(i)}_k(t) - \delta_{jk})}$, where $\delta_{jk}$ is the Kronecker delta. Note that
\begin{equation}
\label{eq: second fac to covariance}
\mathbf{D}^{(i)}(t) = \mathbf{\Xi}^{(i)}(t) + \mathbf{m}^{(i)}(t) ^{\intercal} \mathbf{m}^{(i)}(t)  - \diag(\mathbf{m}^{(i)}(t))
\end{equation}

From the proof of Lemma \ref{lemma: Finite moment}, we have the probability generating function for $B^{(i)}(t)$ defined as
\[F^{(i)}(\mathbf{s},t) := \expect*{\mathbf{s}^{B^{(i)}(t)}} = \expect*{\prod_{k = 1}^{K} s_k^{B^{(i)}_k(t)} }, \]
where $\mathbf{s} = [s_1,\cdots,s_K]$. Then from the Kolmogorov Forward Equation \eqref{eq: KFE}, we have the $u^{(i)}(\mathbf{s})$ for asymmetric birth model defined as:
\begin{equation}
\label{eq: function u}
u^{(i)}(\mathbf{s}) := \alpha_i s_i^2 + \beta_i + \sum_{j\neq i} \nu_{ij} s_i s_j - \left(\alpha_i + \beta_i + \sum_{j \neq i} \nu_{ij} \right)s_i, \quad 0\leq \mathbf{s}\leq 1.
\end{equation}
Then we have 
\begin{equation}
\label{eq: KFE second}
\begin{split}
    \frac{\partial}{\partial t} \left(\frac{\partial^2}{\partial s_j\partial s_k} F^{(i)}(\mathbf{s},t)  \right)  &= \sum_{\ell = 1}^{K} \left[ \frac{\partial^2}{\partial s_j \partial s_k} u^{(\ell)}(\mathbf{s}) \frac{\partial}{\partial s_{\ell}} F^{(i)}(\mathbf{s},t) + \frac{\partial}{\partial s_j} u^{(\ell)}(\mathbf{s}) \frac{\partial^2}{\partial s_\ell \partial s_k} F^{(i)}(\mathbf{s},t) \right.\\
    & \left.+ \frac{\partial }{\partial s_k} u^{(\ell)}(\mathbf{s}) \frac{\partial^2}{\partial s_{\ell} \partial s_j} F^{(i)}(\mathbf{s},t) + u^{(\ell)}(\mathbf{s}) \frac{\partial^3}{\partial s_\ell \partial s_j \partial s_k} F^{(i)}(\mathbf{s},t)\right].
\end{split}
\end{equation}
We can see from the equation \eqref{eq: function u} that for any $\ell \in \{1,\cdots,K\}$
\begin{align*}
    u^{(\ell)}(\mathbf{1}) &= 0\\
    \frac{\partial }{\partial s_j} u^{(\ell)}(\mathbf{1}) &= \begin{cases}
        \alpha_j - \beta_j & j = \ell\\
        \nu_{\ell i} & j \neq \ell
    \end{cases}\\
    \frac{\partial^2}{\partial s_j \partial s_k} u^{(\ell)}(\mathbf{1}) &= \begin{cases}
        \nu_{jk} & \ell = j \neq k\\
        2 \alpha_j & \ell = j = k.
    \end{cases}
\end{align*}
Letting $\mathbf{s} = \mathbf{1}$ and re-arranging equation \eqref{eq: KFE second}, we can see that
\[\frac{d}{dt} \mathbf{D}_{jk}^{(i)}(t) = \sum_{\ell = 1}^{K} \mathbf{A}_{\ell j} \mathbf{D}_{k\ell}^{(i)}(t) + \sum_{\ell = 1}^{K} \mathbf{A}_{\ell k} \mathbf{D}_{j\ell}^{(i)}(t) + (1-\delta_{jk})[\nu_{kj} \mathbf{M}_{ik}(t) + \nu_{jk}\mathbf{M}_{ij}(t)] + 2\delta_{jk} \alpha_j \mathbf{M}_{ij}(t),\]
where $\mathbf{A}$ is the \textit{infinitesimal generator matrix}, and we know that $\mathbf{D}^{(i)}(t)$ is a symmetric matrix. Thus, we derive the following matrix differential equation
\begin{equation}
\label{eq: KFE second matrix}
\frac{d}{dt} \mathbf{D}^{(i)}(t) = \mathbf{A}^{\intercal}\mathbf{D}^{(i)}(t) + \mathbf{D}^{(i)}(t)\mathbf{A} + \mathbf{C}^{(i)}(t),
\end{equation}
where $\mathbf{C}^{(i)}_{jk}(t) = \begin{cases}
    \nu_{kj} \mathbf{M}_{ik}(t) + \nu_{jk}\mathbf{M}_{ij}(t) & j \neq k\\
    2\alpha_j \mathbf{M}_{ij}(t) & j = k
\end{cases}$. Observe that equation \eqref{eq: KFE second matrix} is a Lyapunov matrix differential equation. With the initial condition $\mathbf{D}^{(i)}(0) = 0$, we have the solution 
\begin{align*}
    \mathbf{D}^{(i)}(t) &= \exp(t\mathbf{A}^{\intercal}) \left(\int_{0}^{t} \exp(-\tau \mathbf{A}^{\intercal}) \mathbf{C}^{(i)}(\tau)  \exp(-\tau \mathbf{A}) d \tau \right) \exp(t \mathbf{A})\\
    &= \int_{0}^{t} \left(\mathbf{M}(t - \tau) \right)^{\intercal} \mathbf{C}^{(i)}(\tau) \left(\mathbf{M}(t-\tau) \right) d\tau.
\end{align*}
Then we can derive the covariance $\mathbf{\Xi}^{(i)}(t)$ from $\mathbf{D}^{(i)}(t)$ from equation \eqref{eq: second fac to covariance}.


\subsection{Extension of Proposition \ref{prop: finite CLT} }

\label{appx: Lindeberg CLT}

In Proposition \ref{prop: finite CLT}, we assume that time points $\mathcal{T} = \{t_1,\cdots,t_{N_T}\}$ are independent of the initial cell count $n$. This assumption becomes invalid if we assign specific numerical values to $n$ and $t_1,\cdots t_{N_T}$. 
Therefore, in this subsection, we aim to relax this assumption to make our results applicable in a more realistic scenario, where there is a dependency between the set of time points and the initial cell count.

To that end, we denote
\[\mathbf{Y}(\mathcal{T}_n):= [Y(t_{1,n}),Y(t_{2,n}), \cdots,  Y(t_{N_T,n})],\]
where $\mathcal{T}_n$ depends on $n$ and $t_{i,n}$ belongs to one of the following sets of sequences:
\begin{align*}
    \mathcal{F} & := \{ (t_{n})_{n\in \mathbb{N}} ; \limsup_{n\rightarrow \infty} t_{n} <\infty\}\\
    \mathcal{I} & := \{ (t_{n})_{n\in \mathbb{N}}; \lim_{n\rightarrow \infty} t_{n} = \infty\}.
\end{align*}
The sets $\mathcal{F}$ and $\mathcal{I}$ correspond to the finite time points sequence, which remain finite as $n\rightarrow \infty$, and the infinite time points sequence, which diverge to infinite as $n\rightarrow \infty$, respectively. 




Since we introduce the dependency on $n$, the covariance between $Z(t_{i,n})$ and $Z(t_{j,n})$ no longer remains constant as $n$ approaches infinity. Therefore, directly applying the multivariate CLT is not valid in this case; instead, we should utilize the more general Lindeberg-Feller multivariate CLT.

To apply the Lindeberg-Feller multivariate CLT, we denote the normalized random vector
\[\bar{\mathbf{Y}}(\mathcal{T}_n) = \sum_{i = 1}^{K} \sum_{j = 1}^{n_i} \bar{\mathbf{y}}^{(i)}_{j}(\mathcal{T}_n) = \sum_{i = 1}^{K} \sum_{j = 1}^{n_i} [\exp(-\lambda_1 t_{1,n})\langle \mathbf{B}^{(i)}_j(t_{1,n}),\mathbf{1}\rangle,\cdots, \exp(-\lambda_1 t_{N_T,n})\langle \mathbf{B}^{(i)}_j(t_{N_T,n}),\mathbf{1}\rangle].\]
We next denote the 
\[\bar{\bm{\mu}}^{(i)}(\mathcal{T}_n) = [\exp(-\lambda_1 t_{1,n}) \langle\mathbf{m}^{(i)}(t_{1,n}), \mathbf{1}\rangle,\cdots,\exp(-\lambda_1 t_{N_T,n}) \langle \mathbf{m}^{(i)}(t_{N_T,n}), \indi \rangle]\]
as the expectation of $\bar{\mathbf{y}}^{(i)}_{j}(\mathcal{T}_n)$. Note that all elements in $\bar{\bm{\mu}}^{(i)}(\mathcal{T}_n)$ are bounded above by 1 according to the spectral property of the \textit{mean matrix}. 

Same as Proposition \ref{prop: finite CLT}, we define a centered and normalized process for a given $i \in \mathcal{K}$
\[\bar{\mathbf{W}}_{n_i}^{(i)}(\mathcal{T}_n) = \frac{1}{\sqrt{n_i}} \sum_{j=1}^{n_i}\left(\bar{\mathbf{y}}_{j}^{(i)}(\mathcal{T}_n) - \bar{\bm{\mu}}^{(i)}(\mathcal{T}_n) \right)\]
and claim Proposition \ref{prop: Lindeberg CLT} from Lindeberg-Feller multivariate CLT







\begin{proposition}
\label{prop: Lindeberg CLT}
    as $n \rightarrow \infty$
    \[\bar{\mathbf{W}}_{n_i}^{(i)}(\mathcal{T}_n) \Rightarrow N(0, \bar{\mathbf{V}}^{(i)}(\mathcal{T}_n))\]
    where for $1\leq a\leq b\leq N_T$
    \[\bar{\mathbf{V}}^{(i)}_{a,b}(\mathcal{T}_n) = \begin{cases} \mathbf{1} \exp(-2\lambda_1 t_{a,n})\mathbf{\Xi}^{(i)}(t_{a,n}) \exp(-\lambda_1(t_{b,n} - t_{a,n}))\mathbf{M}(t_{b,n} - t_{a,n}) \mathbf{1}^{\intercal} & t_{a,n}, t_{b,n} \in \mathcal{F}\\
\mathbf{1} \exp(-2\lambda_1 t_{a,n})\mathbf{\Xi}^{(i)}(t_{a,n}) \mathbf{P} \mathbf{1}^{\intercal} &  t_{b,n} - t_{a,n} \in \mathcal{I}, t_{a,n} \in \mathcal{F}\\
\mathbf{1} \mathbf{Q}^{(i)} \exp(-\lambda_1(t_{b,n} - t_{a,n}))\mathbf{M}(t_{b,n} - t_{a,n}) \mathbf{1}^{\intercal} &  t_{b,n} - t_{a,n}\in \mathcal{F}, t_{a,n} \in \mathcal{I}\\
\mathbf{1} \mathbf{Q}^{(i)}\mathbf{P} \mathbf{1}^{\intercal} & t_{b,n} - t_{a,n} \in \mathcal{I}, t_{a,n} \in \mathcal{I}.
            \end{cases}\]
    The $\mathbf{Q}^{(i)}$ and $\mathbf{P}$ are the limit of the $\exp(-2\lambda_1 t) \mathbf{\Xi}^{(i)}(t)$ and $\exp(-\lambda_1 t)\mathbf{M}(t)$ respectively as $t\rightarrow \infty$.
\end{proposition}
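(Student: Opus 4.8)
The plan is to recognize $\bar{\mathbf{W}}_{n_i}^{(i)}(\mathcal{T}_n)$ as a normalized sum of $n_i$ independent and identically distributed mean-zero vectors
\[\xi_{j,n} := \bar{\mathbf{y}}_j^{(i)}(\mathcal{T}_n) - \bar{\bm{\mu}}^{(i)}(\mathcal{T}_n),\]
which form a triangular array whose row distribution depends on $n$ only through $\mathcal{T}_n$. I would therefore establish the statement via the Lindeberg-Feller multivariate CLT, reducing the vector claim to one-dimensional projections by the Cram\'er-Wold device. Two ingredients must be supplied: (i) identification of the row covariance $\mathrm{Cov}(\xi_n)$ with the claimed four-case limit $\bar{\mathbf{V}}^{(i)}$, and (ii) the Lindeberg condition, which I would verify through the stronger Lyapunov condition. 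Because $\bar{\mathbf{V}}^{(i)}$ itself retains an $n$-dependence, I would phrase convergence through a subsequence argument: the entries of $\mathrm{Cov}(\xi_n)$ are uniformly bounded, so every subsequence has a convergent sub-subsequence, and it suffices to identify each such limit with the corresponding case of $\bar{\mathbf{V}}^{(i)}$.

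For the covariance, since the $n_i$ summands are i.i.d. and the $1/\sqrt{n_i}$ scaling cancels their count, the covariance of $\bar{\mathbf{W}}_{n_i}^{(i)}(\mathcal{T}_n)$ equals the single-clone covariance $\mathrm{Cov}(\xi_n)$. Using the Markov property and the tower rule exactly as in Appendix \ref{appx: Covariance}, for $t_{a,n}\le t_{b,n}$ the cross-time covariance of the normalized totals is
\[\mathrm{Cov}(\xi_n)_{a,b} = \mathbf{1}\, e^{-2\lambda_1 t_{a,n}}\mathbf{\Xi}^{(i)}(t_{a,n})\, e^{-\lambda_1(t_{b,n}-t_{a,n})}\mathbf{M}(t_{b,n}-t_{a,n})\,\mathbf{1}^{\intercal},\]
which is precisely the first case of $\bar{\mathbf{V}}^{(i)}$. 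The remaining three cases then follow by sending the two diverging factors to their limits: by Perron-Frobenius, $e^{-\lambda_1 t}\mathbf{M}(t)\to\mathbf{P}$ (the rank-one spectral projection onto the leading eigenspace), and by the second-moment growth established in Appendix \ref{appx: finite moment} via Lemma \ref{lemma: Finite moment} with $z=2$, $e^{-2\lambda_1 t}\mathbf{\Xi}^{(i)}(t)\to\mathbf{Q}^{(i)}$. Substituting these limits whenever $t_{a,n}\in\mathcal{I}$ or $t_{b,n}-t_{a,n}\in\mathcal{I}$ reproduces each of the four cases and shows $\mathrm{Cov}(\xi_n)-\bar{\mathbf{V}}^{(i)}(\mathcal{T}_n)\to 0$ entrywise.

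The crux, and what I expect to be the main obstacle, is the Lyapunov condition, which requires controlling $\|\xi_n\|$ uniformly in $n$. Since the summands are i.i.d. and scaled by $1/\sqrt{n_i}$, the Lyapunov sum with exponent $2+\delta$ equals $n_i^{-\delta/2}\,\mathbb{E}\|\xi_n\|^{2+\delta}$, so it suffices to bound $\mathbb{E}\|\xi_n\|^{2+\delta}$ uniformly over $n$. Taking $\delta=2$ and dominating $\|\cdot\|^{4}$ by a finite sum of coordinatewise fourth moments, this reduces to showing that $e^{-\lambda_1 t}\langle\mathbf{B}^{(i)}(t),\mathbf{1}\rangle$ has a fourth moment bounded uniformly in $t\ge 0$. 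This is exactly where Lemma \ref{lemma: Finite moment} enters: it gives $\mathbb{E}[\langle\mathbf{B}^{(i)}(t),\mathbf{1}\rangle^{4}]=O(e^{4\lambda_1 t})$, so the normalized fourth moment stays bounded as $t\to\infty$, and continuity on compact $t$-intervals supplies the bound for small $t$. The delicate point is that this bound must be genuinely uniform in $t$, covering the $\mathcal{I}$-coordinates where $t_{a,n}\to\infty$; once secured, the Lyapunov sum tends to $0$ because $n_i\to\infty$.

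Finally, I would assemble the pieces. Fixing $\mathbf{a}\in\mathbb{R}^{N_T}$, the scalars $\langle\mathbf{a},\xi_{j,n}\rangle$ are i.i.d., centered, with uniformly bounded $(2+\delta)$-moments (by Cauchy-Schwarz from the vector bound) and row variance $\mathbf{a}^{\intercal}\mathrm{Cov}(\xi_n)\mathbf{a}$. The one-dimensional Lyapunov CLT then shows the characteristic function of $\langle\mathbf{a},\bar{\mathbf{W}}_{n_i}^{(i)}(\mathcal{T}_n)\rangle$ is asymptotic to $\exp(-\tfrac12\mathbf{a}^{\intercal}\bar{\mathbf{V}}^{(i)}(\mathcal{T}_n)\mathbf{a})$; working with characteristic functions avoids dividing by a possibly vanishing variance in degenerate directions. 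Passing to a sub-subsequence along which $\bar{\mathbf{V}}^{(i)}(\mathcal{T}_n)$ converges and invoking Cram\'er-Wold identifies the limit as the claimed Gaussian, and since every subsequence yields the same conclusion, $\bar{\mathbf{W}}_{n_i}^{(i)}(\mathcal{T}_n)\Rightarrow N(0,\bar{\mathbf{V}}^{(i)}(\mathcal{T}_n))$ follows.
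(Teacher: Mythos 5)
Your proposal is correct and follows essentially the same route as the paper's proof: the same i.i.d.\ single-clone decomposition, the same Markov-property computation of the cross-time covariance, the same identification of the four limiting cases via $\mathbf{P}$ and $\mathbf{Q}^{(i)}$, and the same reliance on Lemma \ref{lemma: Finite moment} for a uniform normalized fourth-moment bound. The only cosmetic difference is that you verify the Lyapunov condition with $\delta=2$ while the paper checks Lindeberg directly via Cauchy--Schwarz and Markov's inequality, which amounts to the same fourth-moment estimate; your subsequence and characteristic-function bookkeeping is a slightly more careful write-up of the same argument.
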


\begin{proof}
    For simplicity, we denote $\mathbf{z}_{j}^{(i)}(\mathcal{T}_n) = \bar{\mathbf{y}}_{j}^{(i)}(\mathcal{T}_n) - \bar{\bm{\mu}}^{(i)}(\mathcal{T}_n)$. It is obvious that $\expect*{\mathbf{z}_j^{(i)}(\mathcal{T}_n)} = \mathbf{0}$ for all $n$. Then we need to check the behavior of its covariance matrix. Without loss of generality, we assume $1 \leq a\leq b\leq N_T$. For $a,b$-th element of the second moment of $\mathbf{z}_j^{(i)}(\mathcal{T}_n)$:
\begin{align*}
    [\mathbb{V} \mathbf{z}_j^{(i)}(\mathcal{T}_n)]_{a,b} &= 
    \exp(-\lambda_1 (t_{a,n} + t_{b,n}))\Cov\left(\langle \mathbf{B}^{(i)}(t_{a,n}), \mathbf{1}\rangle,\langle \mathbf{B}^{(i)}(t_{b,n}), \mathbf{1}\rangle \right)\\
   \Cov\left(\langle \mathbf{B}^{(i)}(t_{a,n}), \mathbf{1}\rangle,\langle \mathbf{B}^{(i)}(t_{b,n}), \mathbf{1}\rangle \right) &= \mathbf{1} \left\{\expect*{\mathbf{B}^{(i)}(t_{a,n})^{\intercal} \mathbf{B}^{(i)}(t_{b,n})} - \expect*{\mathbf{B}^{(i)}(t_{a,n})}^{\intercal} \expect*{\mathbf{B}^{(i)}(t_{b,n})}\right\} \mathbf{1}^{\intercal}  \\
    &= \mathbf{1} \left\{\expect*{\mathbf{B}^{(i)}(t_{a,n})^{\intercal} \expect*{\mathbf{B}^{(i)}(t_{b,n}) | \mathbf{B}^{(i)}(t_{a,n})}}\right. \\
    &\quad \left. - \expect*{\mathbf{B}^{(i)}(t_{a,n})}^{\intercal}\expect*{\expect*{\mathbf{B}^{(i)}(t_{b,n})|\mathbf{B}^{(i)}(t_{a,n})}}\right\}\mathbf{1}^{\intercal}\\
    &= \mathbf{1} \left\{\expect*{\mathbf{B}^{(i)}(t_{a,n})^{\intercal} \mathbf{B}^{(i)}(t_{a,n}) \mathbf{M}(t_{b,n} - t_{a,n})}\right. \\
    &\quad \left. - \expect*{\mathbf{B}^{(i)}(t_{a,n})}^{\intercal}\expect*{\mathbf{B}^{(i)}(t_{a,n}) \mathbf{M}(t_{b,n} - t_{a,n})}\right\}\mathbf{1}^{\intercal}\\
    &= \mathbf{1} \left\{ \expect*{\mathbf{B}^{(i)}(t_{a,n})^{\intercal} \mathbf{B}^{(i)}(t_{a,n})} - \expect*{\mathbf{B}^{(i)}(t_{a,n})}^{\intercal} \expect*{\mathbf{B}^{(i)}(t_{a,n})} \right\} \mathbf{M}(t_{b,n} - t_{a,n})\mathbf{1}^{\intercal}\\
    &= \mathbf{1} \mathbf{\Xi}^{(i)}(t_{a,n}) \mathbf{M}(t_{b,n} - t_{a,n})\mathbf{1}^{\intercal}\\
    [\mathbb{V} \mathbf{z}_j^{(i)}(\mathcal{T}_n)]_{a,b} &=  \mathbf{1} \exp(-2\lambda_1 t_{a,n})\mathbf{\Xi}^{(i)}(t_{a,n}) \exp(-\lambda_1 (t_{b,n} - t_{a,n}))\mathbf{M}(t_{b,n} - t_{a,n})\mathbf{1}^{\intercal}
\end{align*}
We know from Athreya and Ney that:
\begin{align*}
    \lim_{t\rightarrow \infty} \mathbf{M}(t) \exp(-\lambda_1 t) = \mathbf{P}
\end{align*}
where $\mathbf{P}$ is the outer product of the left and right eigenvector of $\mathbf{M}(t)$ corresponding to eigenvalue $\exp(\lambda_1 t)$, and there exists a finite matrix $\mathbf{Q}^{(i)}$ such that
\begin{align*}
    \lim_{t\rightarrow \infty} \mathbf{\Xi}^{(i)}(t) \exp(-2\lambda_1 t) = \mathbf{Q}^{(i)}.
\end{align*}
Therefore, we know that all the elements in $\mathbb{V}\mathbf{z}_j^{(i)}(\mathcal{T}_n)$ are finite. In particular, we claim the following condition for Lindeberg-Feller multivariate CLT holds:
\begin{align*}
    \lim_{n_i\rightarrow \infty} \frac{1}{n_i} \sum_{j = 1}^{n_i} \mathbb{V} \mathbf{z}^{(i)}_{j}(\mathcal{T}_n) &=\lim_{n\rightarrow \infty} \mathbb{V} \mathbf{z}^{(i)}_{j}(\mathcal{T}_n) = \bar{\mathbf{V}}^{(i)}(\mathcal{T}_n),
\end{align*}
where
\[\bar{\mathbf{V}}^{(i)}_{a,b}(\mathcal{T}_n) = \begin{cases} \mathbf{1} \exp(-2\lambda_1 t_{a,n})\mathbf{\Xi}^{(i)}(t_{a,n}) \exp(-\lambda_1(t_{b,n} - t_{a,n}))\mathbf{M}(t_{b,n} - t_{a,n}) \mathbf{1}^{\intercal} & t_{a,n}, t_{b,n} \in \mathcal{F}\\
\mathbf{1} \exp(-2\lambda_1 t_{a,n})\mathbf{\Xi}^{(i)}(t_{a,n}) \mathbf{P} \mathbf{1}^{\intercal} &  t_{b,n} - t_{a,n} \in \mathcal{I}_{\tau}\\
\mathbf{1} \mathbf{Q} \exp(-\lambda_1(t_{b,n} - t_{a,n}))\mathbf{M}(t_{b,n} - t_{a,n}) \mathbf{1}^{\intercal} &  t_{a,n} \in \mathcal{I}\\
\mathbf{1} \mathbf{Q}\mathbf{P} \mathbf{1}^{\intercal} & t_{b,n} - t_{a,n} \in \mathcal{I}_{\tau}, t_{a,n} \in \mathcal{I}.
\end{cases}\]
The first equality is due to the following two observations: 1. As $p_i$ independent of $n$, we know that $n_i\rightarrow \infty$ is equivalent to $n\rightarrow \infty$. 2. For any given $n$, the random vector $\mathbf{z}_j^{(i)}(\mathcal{T}_n)$ is i.i.d. copy of $\mathbf{z}^{(i)}(\mathcal{T}_n)$.

Then we claim the Lindeberg condition: 
\begin{equation}
\label{eq: Lindeberg}
    \lim_{n_i\rightarrow \infty}\frac{1}{n_i} \sum_{j = 1}^{n_i} \expect*{\|\mathbf{z}^{(i)}_{j}(\mathcal{T}_n)\|^2 \indi(\|\mathbf{z}^{(i)}_{j}(\mathcal{T}_n)\|\geq \epsilon \sqrt{n})} =  0
\end{equation}
for a given $\epsilon > 0 $, where $\indi(\cdot)$ is the indicator function, is satisfied. Due to the aforementioned two observations, it is equivalent to show that
\begin{equation}
\label{eq: Lindeberg reduced}
    \lim_{n\rightarrow \infty}\expect*{\|\mathbf{z}^{(i)}(\mathcal{T}_n)\|^2 \indi(\|\mathbf{z}^{(i)}(\mathcal{T}_n)\|\geq \epsilon \sqrt{n})} = 0.
\end{equation}
We applied the Cauchy-Schwarz inequality to \eqref{eq: Lindeberg reduced}, which simplifies the task to showing that
\begin{equation}
\label{eq: Lindeberg Cauchy-schwarz}
    \lim_{n\rightarrow \infty}\expect*{\|\mathbf{z}^{(i)}(\mathcal{T}_n)\|^4}^{1/2} \expect*{\indi(\|\mathbf{z}^{(i)}(\mathcal{T}_n)\| \geq \epsilon \sqrt{n})}^{1/2} = 0.
\end{equation}
Then we argue that
\begin{align*}
    \lim_{n\rightarrow \infty}\expect*{\|\mathbf{z}^{(i)}(\mathcal{T}_n)\|^4}  &= \lim_{n \rightarrow \infty} \expect*{\left[\sum_{\tau = 1}^{N_T} \exp(-2\lambda_1 t_{\tau,n})\left(\langle \mathbf{B}^{(i)}(t_{\tau,n}),\mathbf{1}\rangle - \langle\mathbf{m}^{(i)}(t_{\tau,n}),\mathbf{1}\rangle\right)^2\right]^2}\\
    &\leq \lim_{n \rightarrow \infty} \expect*{\left[\sum_{\tau = 1}^{N_T} \exp(-2\lambda_1 t_{\tau,n}) (\mathbf{1} \mathbf{B}^{(i)}(t_{\tau,n})^{\intercal} \mathbf{B}^{(i)}(t_{\tau,n}) \mathbf{1}^{\intercal} + \mathbf{1} \mathbf{m}^{(i)}(t_{\tau,n})^{\intercal} \mathbf{m}^{(i)}(t_{\tau,n}) \mathbf{1}^{\intercal}) \right]^2}\\
    &= \lim_{n \rightarrow \infty} \sum_{1\leq \tau\leq \ell \leq N_T} \exp(-2\lambda_1(t_{\tau,n} + t_{\ell,n}))(c^4_{\tau,\ell} + c^{2}_{\tau,\ell} + c^{0}_{\tau,\ell}),\\
\end{align*}
where 
\begin{align*}
    c^4_{\tau,\ell} &=  \expect*{\mathbf{1}\mathbf{B}^{(i)}(t_{\tau,n})^{\intercal} \mathbf{B}^{(i)}(t_{\tau,n}) \mathbf{1}^{\intercal} \mathbf{1}\mathbf{B}^{(i)}(t_{\ell,n})^{\intercal} \mathbf{B}^{(i)}(t_{\ell,n}) \mathbf{1}^{\intercal}}\\
    c^2_{\tau,\ell} &=  \expect*{\mathbf{1}\mathbf{B}^{(i)}(t_{\tau,n})^{\intercal} \mathbf{B}^{(i)}(t_{\tau,n}) \mathbf{1}^{\intercal}} \mathbf{1}\mathbf{m}^{(i)}(t_{\ell,n})^{\intercal} \mathbf{m}^{(i)}(t_{\ell,n}) \mathbf{1}^{\intercal}  \\
    c^0_{\tau,\ell} &=  \mathbf{1}\mathbf{m}^{(i)}(t_{\tau,n})^{\intercal} \mathbf{m}^{(i)}(t_{\tau,n}) \mathbf{1}^{\intercal}\mathbf{1}\mathbf{m}^{(i)}(t_{\ell,n})^{\intercal} \mathbf{m}^{(i)}(t_{\ell,n}) \mathbf{1}^{\intercal}.
\end{align*}
With Lemma \ref{lemma: Finite moment} and the fact $\lim_{t\rightarrow \infty} \exp(-2\lambda_1 t) \mathbf{1} \mathbf{m}^{(i)}(t)^{\intercal} \mathbf{m}^{(i)}(t) \mathbf{1}^{\intercal} <\infty$, we can directly see that
\begin{align*}
    \lim_{n\rightarrow \infty}& \exp(-2\lambda_1(t_{\tau,n} + t_{\ell,n})) c^0_{\tau,\ell} <\infty\\
    \lim_{n\rightarrow \infty}& \exp(-2\lambda_1(t_{\tau,n} + t_{\ell,n})) c^2_{\tau,\ell} < \infty.\\
\end{align*}
For $c^4_{\tau,\ell}$, we apply the Cauchy-Schwarz inequality to obtain
\begin{align*}
    \exp(-2\lambda_1 (t_{\tau,n} + t_{\ell,n})) c^4_{\tau,\ell} &\leq \left[\exp(-4\lambda_1 t_{\tau,n}) c^4_{\tau,n} \right]^{1/2} \left[\exp(-4\lambda_1 t_{\ell,n}) c^4_{\ell,n} \right]^{1/2},
\end{align*}
where 
\[
    c^4_{\tau,n} = \expect*{(\mathbf{1}\mathbf{B}^{(i)}(t_{\tau,n})^{\intercal}\mathbf{B}^{(i)}(t_{\tau,n})\mathbf{1}^{\intercal})^2}, \text{ for } 1\leq \tau\leq N_T.
\]
From Lemma \ref{lemma: Finite moment}, we claim that for $1\leq \tau\leq N_T$
\[\lim_{n\rightarrow \infty} \exp(-4\lambda_1 t_{\tau,n}) c^4_{\tau,n} < \infty.\]
Consequently, we have
\[\lim_{n\rightarrow \infty} \expect*{\|\mathbf{z}^{(i)}(\mathcal{T}_n)\|^4} < \infty.\]
Next, we observe that 
\[\expect*{\indi(\|\mathbf{z}^{(i)}(\mathcal{T}_n)\| \geq \epsilon \sqrt{n})} = \mathbb{P}(\|\mathbf{z}^{(i)}(\mathcal{T}_n)\|\geq \epsilon \sqrt{n}) \leq \frac{\expect*{\|\mathbf{z}^{(i)}(\mathcal{T}_n)\|}}{\epsilon \sqrt{n}}.\]
As we have already shown that $\lim_{n\rightarrow \infty}\expect*{\|\mathbf{z}^{(i)}(\mathcal{T}_n)\|^4} < \infty$, we can directly show that
\[\lim_{n \rightarrow \infty} \frac{\expect*{\|\mathbf{z}^{(i)}(\mathcal{T}_n)\|}}{\epsilon \sqrt{n}}  = 0.\]
As a result, we have equation \eqref{eq: Lindeberg Cauchy-schwarz} follows. 

We conclude this proof by applying the Lindeberg multivariate CLT:
\[\mathbf{W}_{n_i}^{(i)}(\mathcal{T}_n) = \frac{1}{\sqrt{n_i}} \sum_{j = 1}^{n_i} \mathbf{z}_{j}^{(i)}(\mathcal{T}_n) \Rightarrow N(0,\bar{\mathbf{V}}^{(i)}(\mathcal{T}_n)).\]
\end{proof}

In Proposition \ref{prop: Lindeberg CLT}, we demonstrate with an appropriate normalization factor the normalized and centered random vector $\mathbf{Y}$ is still approximately normally distributed when time depends on $n$. In practice, because we cannot employ $n\rightarrow \infty$, Proposition \ref{prop: finite CLT} is enough to obtain a good approximation of $\mathbf{Y}$.

\subsection{Details of data generation and optimization implementation in the \textit{in silico} experiments}

\label{appx: data generation and optimization implementation.}


To describe the data generation and optimization  in the \textit{in silico} experiment, we  define two distinct ranges: 
\begin{enumerate}
    \item the \textit{Generating parameters range} based on the assumption outlined in Section \ref{sec:CSC assumptions}.
    \item the \textit{Optimization feasible range} employed in all of the \textit{in silico} experiments.
\end{enumerate}
These two ranges characterize most of the differences among the \textit{in silico} experiments, while the procedure of the experiments is the same.
The detailed \textit{in silico} experiments procedure is given by the following.
We first randomly select a `true parameters set' from the \textit{Generating parameters range} to simulate the \textit{in silico} data through the Gillespie algorithm \cite{gillespie1976general}. 
Subsequently, we solve the MLE problem in equation \eqref{eq: minimize negative log likelihood} within the \textit{Optimization feasible range}. 
In particular, we employ the MATLAB Optimization Toolbox \cite{MatlabOTB} function \textit{fmincon} with sequential quadratic programming (sqp) solver. 
Given the complex likelihood equation, we solve the optimization starting from 20 initial points within the feasible region and select the estimation with the lowest negative log-likelihood. 

We denote the `true parameters sets' and the estimated parameters as $\btheta^{*}$ and $\hat{\btheta}$ respectively, while referring to the \textit{Generating parameters range} and \textit{Optimization feasible range} as $\bTheta^*$ and $\hat{\bTheta}$. 
These ranges simulate realistic scenarios and reflect our prior knowledge about nature. 
One principle in setting these two ranges is that $\bTheta^* \subseteq \hat{\bTheta}$ to simulate insufficient prior knowledge about the `true nature'.

\subsubsection{Base experiment}


In this experiment, we adhere to the assumptions outlined in Section \ref{sec:CSC assumptions}. 
In particular, the \textit{Generating parameter range} $\bTheta^*$ and the \textit{Optimization feasible range} satisfy the assumptions established in Section \ref{sec:CSC assumptions}. 
For simplicity, we implemented a special case of Assumption \ref{assump:Cell dynamic}: $\alpha_r \in (\beta_r,\beta_r + 0.1)$ and $\alpha_s = \beta_s$, (i.e. $0 = \kappa_s \leq \kappa_r \in (0,0.1)$).
Nevertheless, we also tested for the case when $\kappa_s >0$ and got a similar results.
Assumption \ref{assump:Cell plasticity} can be directly implemented in the selection of $\bTheta^*$ and $\hat{\bTheta}$, and Assumption \ref{assump:Initial structure} can be implemented in the computation of likelihood. 
In accordance with Assumption \ref{assump:Drug effect}, where $b_{r,\beta} = b_{r,\nu} = 1$, the selection of $E_{r,\beta}$ and $E_{r,\nu}$ does not impact the likelihood computation, so we set $E_{r,\beta} = E_{r,\nu} = 1$ for simplicity. 
Conversely, we carefully choose the values for $E_{s,\beta}$ and $E_{s,\nu}$ within the concentration levels implemented in the \textit{in silico} experiment to ensure the drug has an observable effect on the tumor within the given concentration levels. 


We conclude our selection of $\bTheta^*$ and $\hat{\bTheta}$ in Table \ref{tab:Illustrative range}. 
To prevent the occurrence of unreasonably large net growth rates during the optimization process, we imposed two linear constraints: $\alpha_r - \beta_r \leq 0.1$ and $\nu_{rs} + \alpha_s -\beta_s \leq 0.5$. 
In addition to the subpopulation parameters, we also selected the standard deviation parameter $c \in (0,10)$ for the observation noised added to the data.



\begin{table}[ht]
    \centering
    \begin{tabular}{|c|c|c|}
    \hline
        Parameters &  $\bTheta^*$ & $\hat{\bTheta}$ \\
    \hline
        $\alpha_r$ & $(\beta_r,\beta_r +0.1)$ & $(0,1)$\\
    \hline
        $\beta_{r}$ & $(10^{-3},0.9)$ & $(0,1)$ \\
    \hline 
        $\nu_{rs}$ & $(0,0.5)$ & $(0,1)$ \\
    \hline
        $\alpha_s$ & $\beta_s$ & $(0,1)$ \\
    \hline 
        $\beta_s$ & $(10^{-3},0.5)$ & $(0,1)$ \\
    \hline
        $b_{s,\beta}$ & $(0.8,0.9)$ & $(0.5,1)$ \\
    \hline
        $E_{s,\beta}$ & $(0.0625,2.5)$&$(0,3)$ \\
    \hline
        $b_{s,\nu}$ & $(1,1.1)$ & $(1,1.5)$ \\
    \hline
        $E_{s,\nu}$ & $(0.0625,2.5)$ & $(0,3)$ \\
    \hline
    \end{tabular}
    \caption{\textit{Generating parameter range} $\bTheta^*$ and \textit{Optimization feasible range} $\hat{\bTheta}$. According to assumptions in Section \ref{sec:CSC assumptions}, $b_{r,\beta} = E_{r,\beta} = b_{r,\nu} = E_{r,\nu} = 1$ and $\nu_{sr} = 0$ are set to both range.}
    \label{tab:Illustrative range}
\end{table}

\subsubsection{Relaxed drug effect assumption}

To model a scenario where the drug has less significant effects on the CSCs, we selected $1-b_{r,\beta} < 1 - b_{s,\beta}$ and $b_{r,\nu} - 1 < b_{s,\nu} - 1$. As in the base experiment, we present the new \textit{Generating parameters range} and \textit{Optimization feasible range} in Table \ref{tab:Relaxed drug effect range}.



\begin{table}[ht]
    \centering
    \begin{tabular}{|c|c|c|}
    \hline
        Parameters &  $\bTheta^*$ & $\hat{\bTheta}$ \\
    \hline
        $\alpha_r$ & $(\beta_r,\beta_r +0.1)$ & $(0,1)$\\
    \hline
        $\beta_{r}$ & $(10^{-3},0.9)$ & $(0,1)$ \\
    \hline 
        $\nu_{rs}$ & $(0,0.5)$ & $(0,1)$ \\
    \hline
        $b_{r,\beta}$ & $(0.97,1)$ & $(0.5,1)$ \\
    \hline
        $E_{r,\beta}$ & $(0.0625,2.5)$ & $(0,5)$\\
    \hline
        $b_{r,\nu}$ & $(1,1.03)$ & $(1,1.5)$\\
    \hline
        $E_{r,\nu}$ & $(0.0625,2.5)$ & $(0,5)$\\
    \hline
        $\alpha_s$ & $\beta_s$ & $(0,1)$ \\
    \hline 
        $\beta_s$ & $(10^{-3},0.5)$ & $(0,1)$ \\
    \hline
        $b_{s,\beta}$ & $(0.8,0.9)$ & $(0.5,1)$ \\
    \hline
        $E_{s,\beta}$ & $(0.0625,2.5)$&$(0,3)$ \\
    \hline
        $b_{s,\nu}$ & $(1,1.1)$ & $(1,1.5)$ \\
    \hline
        $E_{s,\nu}$ & $(0.0625,2.5)$ & $(0,3)$ \\
    \hline
    \end{tabular}
    \caption{\textit{Generating parameter range} $\bTheta^*$ and \textit{Optimization feasible range} $\hat{\bTheta}$ for relaxed drug effect experiments. According to assumptions in Section \ref{sec:CSC assumptions}, $\nu_{sr} = 0$ is set to both ranges.}
    \label{tab:Relaxed drug effect range}
\end{table}

\subsubsection{Relaxed initial proportion assumption}

Compared to the base experiment, we added initial proportion as variables to be estimated in these experiments.
The corresponding \textit{Generating parameters range} and \textit{Optimization feasible range} are given in Table \ref{tab:Relaxed initial proportion range}. 
It is important to note that during the optimization process, we also apply an equality constraint, $p_s = 1 - p_r$, to avoid unrealistic initial proportion. 



\begin{table}[ht]
    \centering
    \begin{tabular}{|c|c|c|}
    \hline
        Parameters &  $\bTheta^*$ & $\hat{\bTheta}$ \\
    \hline
        $p_{r}$ & $(0,1)$ & $(0,1)$\\
    \hline
        $\alpha_r$ & $\beta_r,\beta_r +0.1$ & $(0,1)$\\
    \hline
        $\beta_{r}$ & $(10^{-3},0.9)$ & $(0,1)$ \\
    \hline 
        $\nu_{rs}$ & $(0,0.5)$ & $(0,1)$ \\
    \hline
        $p_{s}$ & $1-p_r$ & $(0,1)$\\
    \hline
        $\alpha_s$ & $\beta_s$ & $(0,1)$ \\
    \hline 
        $\beta_s$ & $(10^{-3},0.5)$ & $(0,1)$ \\
    \hline
        $b_{s,\beta}$ & $(0.8,0.9)$ & $(0.5,1)$ \\
    \hline
        $E_{s,\beta}$ & $(0.0625,2.5)$&$(0,3)$ \\
    \hline
        $b_{s,\nu}$ & $(1,1.1)$ & $(1,1.5)$ \\
    \hline
        $E_{s,\nu}$ & $(0.0625,2.5)$ & $(0,3)$ \\
    \hline
    \end{tabular}
    \caption{\textit{Generating parameter range} $\bTheta^*$ and \textit{Optimization feasible range} $\hat{\bTheta}$ for relaxed initial proportion experiments. According to assumptions in Section \ref{sec:CSC assumptions}, $b_{r,\beta} = E_{r,\beta} = b_{r,\nu} = E_{r,\nu} = 1$ and $\nu_{sr} = 0$ are set to both range.}
    \label{tab:Relaxed initial proportion range}
\end{table}



\subsection{Bootstrapping technique}

\label{appx: bootstrapping}
To compute the confidence interval (CIs) of estimation, we employ the bootstrapping technique. 
This technique requires resampling the original dataset to obtain a new dataset.
In Section \ref{sec:Illustrative example}, we resampled 13 replicates from 20 generated replicates.
Then the MLE process was conducted to obtain a bootstrap estimate. 
Based on 100 bootstrap estimates, we obtained the $95\%$ CIs by taking the $2.5$ and $97.5$ percentile of those estimates.


\subsection{Parameters of interest}

\label{appx: Parameters of interest}

To evaluate the quality of the estimations, we consider the estimations of the following quantity, which can profile the phenotypical switching dynamics and the subpopulation sensitivity to the drug.

\begin{itemize}
    \item \textbf{Stable proportion}: 

    The \textit{stable proportion} of the multi-type branching process is determined by the normalized left eigenvector, $\bm{\pi}$, corresponding to the largest eigenvalue, $e^{\lambda_1 t}$, of the \textit{mean matrix}. As a result, it depends on the symmetric birth rates, death rates, and asymmetric birth rates of each subpopulation. In our drug-effect model, we assume that either the death rates or the asymmetric birth rates of each subpopulation are affected by the drug, and thus, the \textit{stable proportion} also depends on the drug concentration levels. To better understand the nature of the given tumor, we consider the \textit{stable proportion} as one important parameter of interest.

    \item $GR_{50}$:

    The $GR_{50}$, introduced in \cite{hafner2016growth}, is a newly proposed summary metric of drug-sensitivity. It is defined as the concentration at which a drug's effect on cell growth is half the observed effect. For a given maximum concentration level, $d_m$, applied during the experiment and corresponding drug effect parameters, $(b, E)$, we compute the $GR_{50}$ by
    \[GR_{50} = E\left(\frac{1-e^{r_m}}{e^{r_m} - b} \right),\]
    where $r_m = \frac{1}{2}\log\left(b + \frac{1-b}{1+\frac{d_m}{E}}\right)$ is the half-maximum effects.
    
\end{itemize}

\subsection{Detailed analysis of limited division CNSCs dynamics assumption}

\label{appx: detailed analysis of limited division CNSCs dynamic assumption}

To better understand how the our framework analyzes the data with limited CNSCs symmetric division, we monitor the relative difference (RD) between the estimation and true parameters, defined as 
\begin{equation}
    \label{eq: relative difference}
    Er(\hat{x};x^*) = \frac{\hat{x}-x^*}{x^*}.
\end{equation}
Compared to RE, RD can better depict overestimation and underestimation. 
As an analogy to the $RE = 0.2$ threshold, we consider $RD = 0.2$ and $RD = -0.2$ lines as thresholds for accurate estimation.
The RD of the estimation from 30 experiments with limited division number $G = 0,1,2,3$ are shown in Figure \ref{fig: Hierarchy G0}, Figure \ref{fig: Hierarchy G1}, Figure \ref{fig: Hierarchy G2}, and Figure \ref{fig: Hierarchy G3}, respectively.

\begin{figure}
    \centering
    \includegraphics[width = \textwidth]{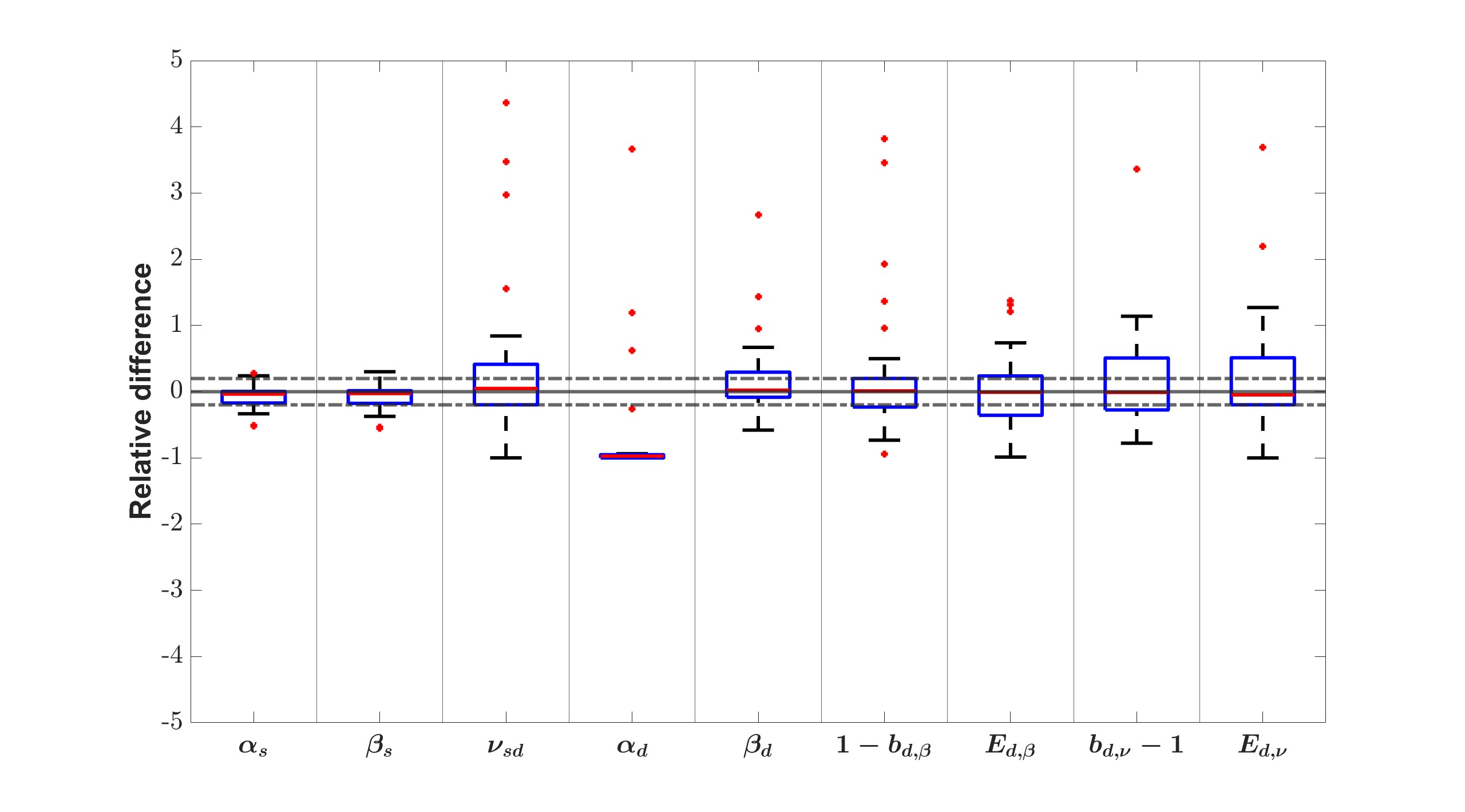}
    \caption{Relative difference of parameter estimation when CNSCs can only symmetrically divide $G = 0$ times. The solid line indicates $RD = 0$ and the dashed line indicates $RD = 0.2$ and $RD = -0.2$.}
    \label{fig: Hierarchy G0}
\end{figure}
\begin{figure}
    \centering
    \includegraphics[width = \textwidth]{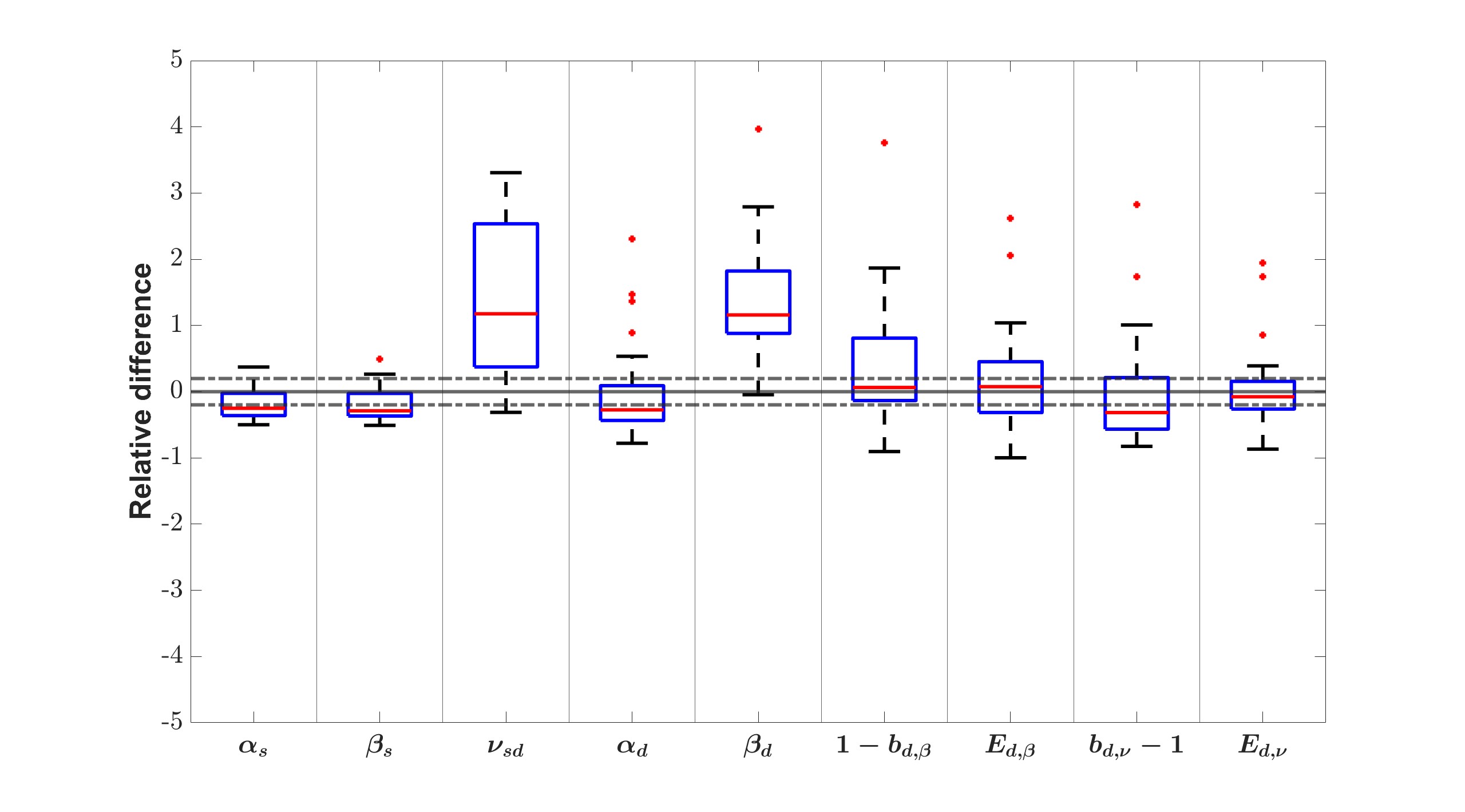}
    \caption{Relative difference of parameter estimation when CNSCs can only symmetrically divide $G = 1$ times. The solid line indicates $RD = 0$ and the dashed line indicates $RD = 0.2$ and $RD = -0.2$.}
    \label{fig: Hierarchy G1}
\end{figure}
\begin{figure}
    \centering
    \includegraphics[width = \textwidth]{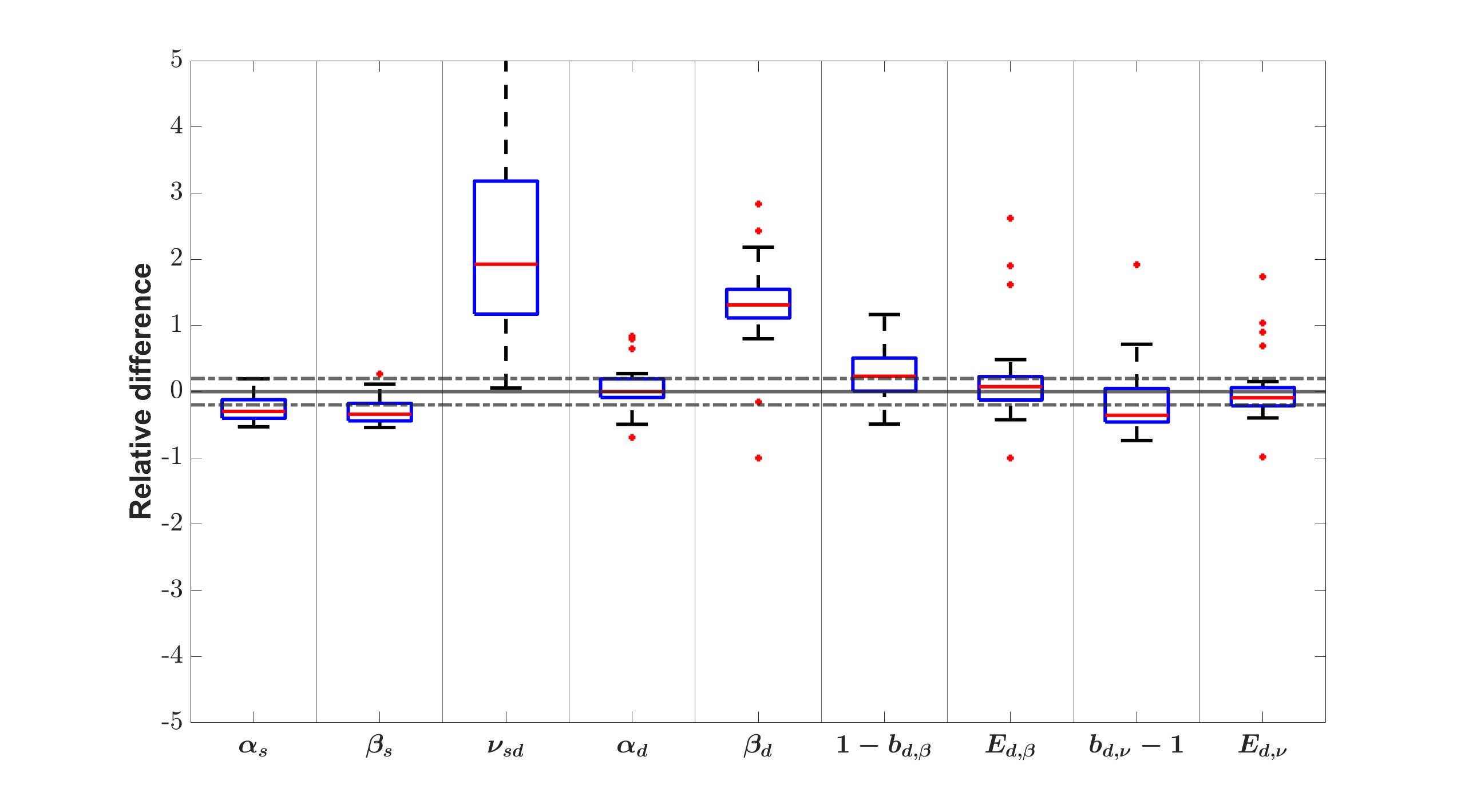}
    \caption{Relative difference of parameter estimation when CNSCs can only symmetrically divide $G = 2$ times. The solid line indicates $RD = 0$ and the dashed line indicates $RD = 0.2$ and $RD = -0.2$.}
    \label{fig: Hierarchy G2}
\end{figure}
\begin{figure}
    \centering
    \includegraphics[width = \textwidth]{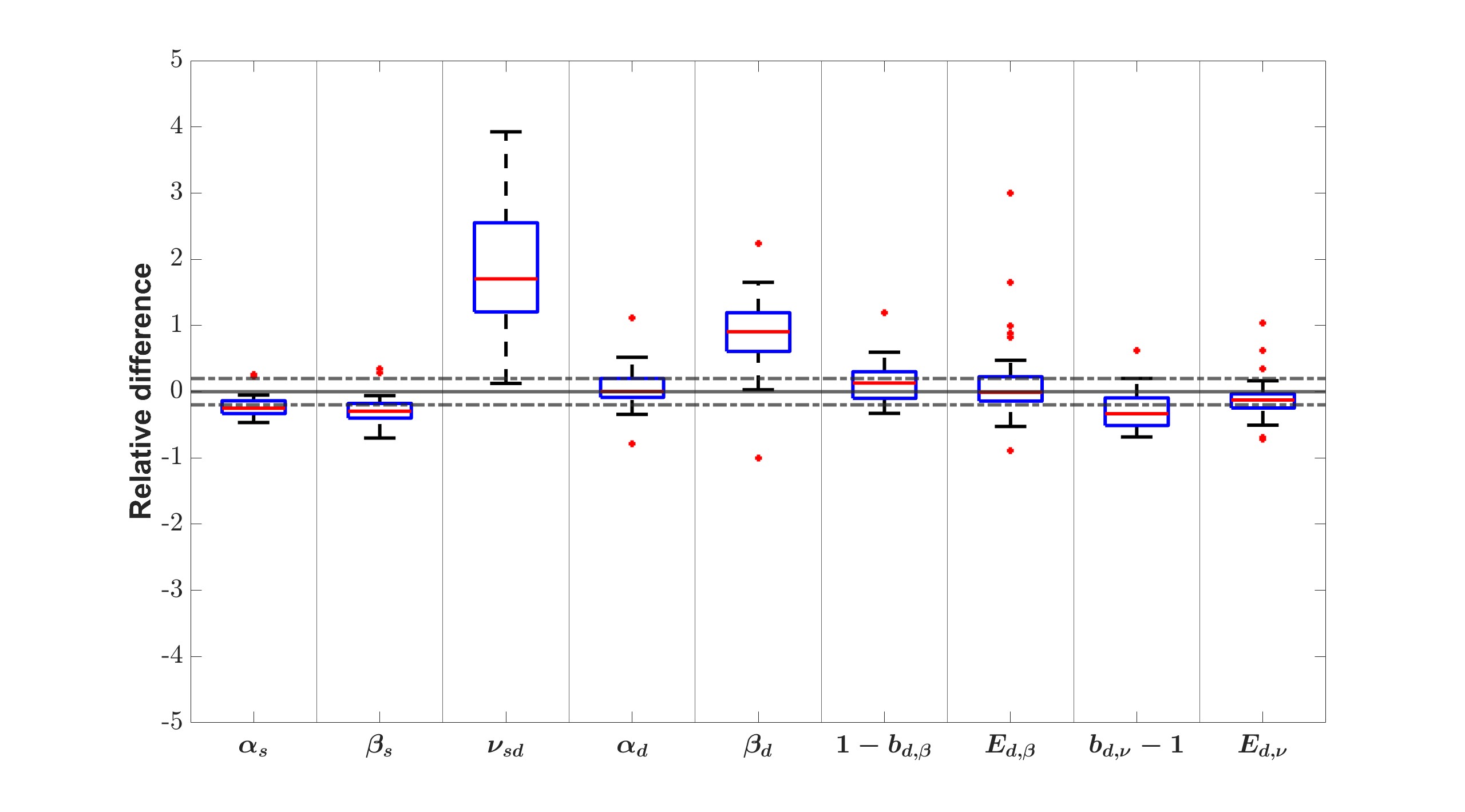}
    \caption{Relative difference of parameter estimation when CNSCs can only symmetrically divide $G = 3$ times. The solid line indicates $RD = 0$ and the dashed line indicates $RD = 0.2$ and $RD = -0.2$.}
    \label{fig: Hierarchy G3}
\end{figure}

When $G = 0$, our novel framework accurately captures the limited division cell dynamics by correctly identifying the CNSCs' symmetric division rate $\alpha_s = 0$. 
Therefore, in Figure \ref{fig: Hierarchy G0}, we observe that every parameter ehibits an $RD$ centered around $0$.  
However, there are many instances where the estimations of other parameters related to CNSCs have $RD$ value less than $-0.2$ or greater than $0.2$. 
We suggest that this could be due to the small population of CNSCs compared to CSCs when $G = 0$; hence, the estimation quality of these parameters might be influenced by the variability across distinct experiments.

In \Cref{fig: Hierarchy G1,fig: Hierarchy G2,fig: Hierarchy G3}, we observe that the estimation of $\alpha_s$ becomes more accurate with increasing $G$. 
However, the differentiation rate $\nu_{rs}$ and CNSCs' death rate $\beta_s$ are often overestimated. 
Conversely, the CSCs' symmetric division rate $\alpha_r$ and death rate $\beta_r$ tend to be slightly underestimated.  
Overall, we conclude that our framework systematically overestimates $\nu_{rs}$ and $\beta_s$, while underestimating $\alpha_r,\beta_r$ in order to better capture the $\alpha_s$.

Surprisingly, the estimation of $E_{s,\beta}$ and $E_{s,\nu}$ are not systematically overestimated or underestimated.
Additionally, these two quantities becomes more accurately estimated with increasing $G$.
Therefore, we propose two hypotheses about the estimations of $E_{s,\beta}$ and $E_{s,\nu}$:
\begin{enumerate}
    \item These estimations are not significantly affected by changes in the underlying tumor growth dynamics.
    \item The quality of these estimations might be related to the size of the targeted subpopulation, CNSCs.
\end{enumerate}
These conjectures suggest that the drug effects' inflection points $E_{s,\beta}$ and $E_{s,\nu}$ could exhibit robust estimation characteristics across different scenarios, particularly when the targeted subpopulation is abundant.



\subsection{Detailed implementation of \textit{in vitro} experiments}

In this subsection, we present a detailed implementation of the \textit{in vitro} experiment described in Section \ref{sec:In vitro}.

\subsubsection{Analysis of the \textit{AGS Time} experiment data}

\label{appx: AGS time delayed}

To understand the \textit{AGS Time} experiment data obtained from \cite{padua2023high}, we visualize the total cell count data at different time points in Figure \ref{fig:AGS Time}. It is evident that, before time point 12, the total cell count growth dynamics remains indifferent across three varying drug concentration levels. However, after 24 hours, the effect of the drug becomes significant. Therefore, we conclude that there is a time delay before CPX-O can fully activate to affect the AGS cell line. 

\begin{figure}
    \centering
    \includegraphics[width = \textwidth]{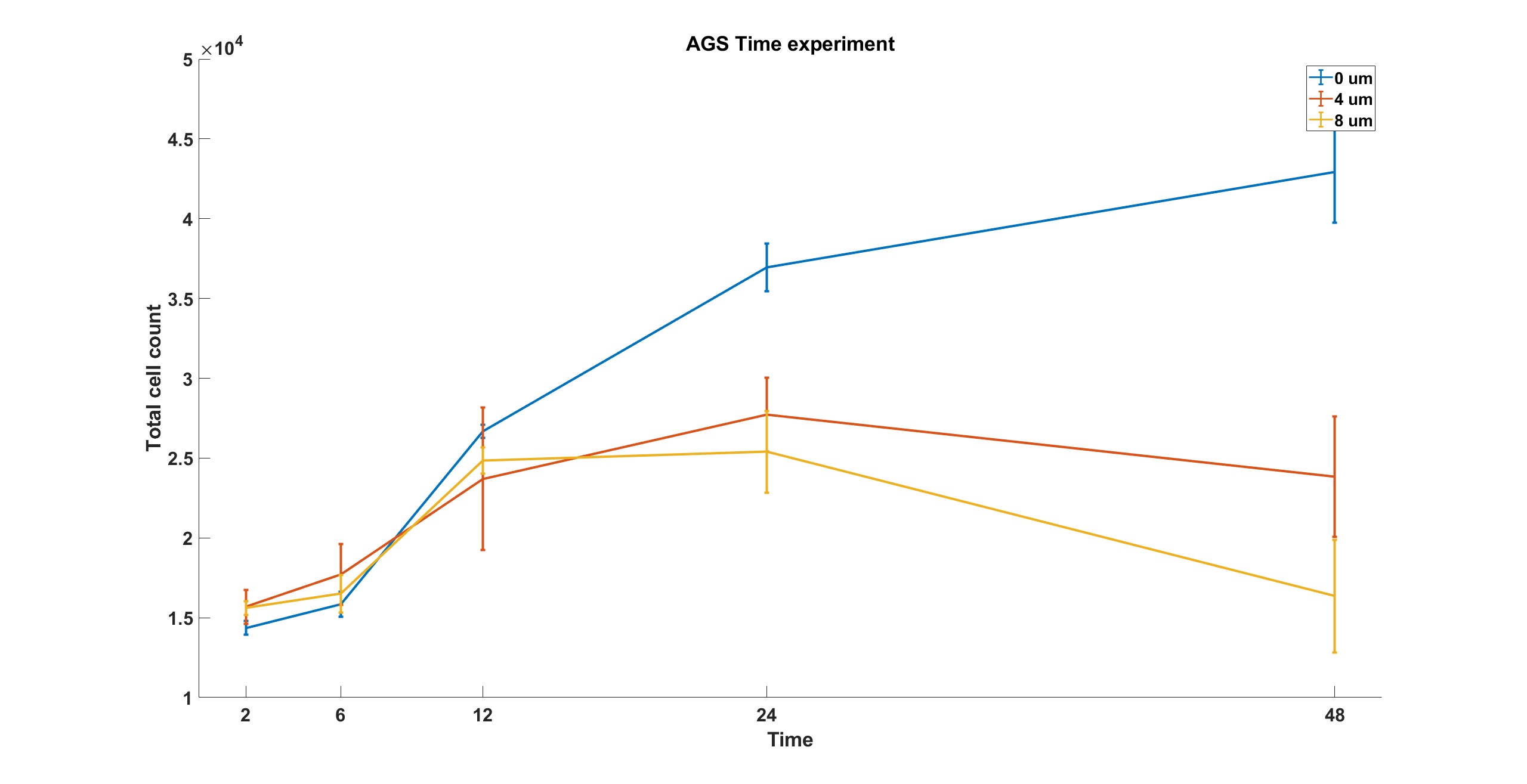}
    \caption{\textit{AGS Time} experiment data.}
    \label{fig:AGS Time}
\end{figure}

In contrast to practical observation, our statistical framework does not assume heterogeneity of the drug effect over time. As a result, our current framework struggles to adequately capture the \textit{AGS Time} experiment data without additional assumptions to incorporate the time effect on drug activation. We then consider modeling the time effect on drug activation as one of the future directions, which can further strengthen the current statistical framework. Nevertheless, we are still incorporating some aspects of the information obtained from the \textit{AGS Time} experiment.

\subsubsection{\textit{AGS Conc} TC data imputation for varying concentration levels at 12 and 24 hours.}

\label{appx: AGS Conc estimated TC}
Due to the scarcity of the \textit{AGS Time} data, we proposed to employ the simpler exponential growth model to characterize the cell growth dynamics to estimate the TC of \textit{AGS Conc} for varying concentration levels at 12 and 24 hours. In particular, this model comprises two parameters $\lambda, X_0$, representing the growth rate and initial cell count, respectively. One can mathematically formulate the cell count $X(t)$ at time $t$ by
\begin{equation}
    \label{eq: exponential growth}
    X(t,d) = X_0 \exp(\lambda(d) t),
\end{equation}
where the growth rate $\lambda$ may depend on the drug concentration level $d$.

As mentioned in the main text, the \textit{AGS Time} data comprises only 3 concentration levels, which is insufficient for estimating 4 drug-related parameters $b_{s,\beta},E_{s,\beta},b_{s,\nu},E_{s,\nu}$. Therefore, we need to utilize the \textit{AGS Conc} data, which collects data at 9 varying concentration levels. 

\textbf{Time-delayed drug effect assumption:}

In Appendix \ref{appx: AGS time delayed}, we observed that the cell growth dynamics did not change at the first 12 hours under varying doses of CPX-O. To validate our statistical framework, which does not assume a delay in the drug effect over time, we propose focusing on the data after the 12-hour checkpoint. In other words, we assume that the CPX-O does not have an effect in the first 12 hours.

\textbf{Shared underlying dynamics between \textit{AGS Time} and \textit{AGS Conc} data assumption:}

Because the \textit{AGS Time} and \textit{AGS Conc} data have different initial experimental settings, such as varying the initial total cell count and the proportion between CSCs and CNSCs, we need to assume that under no drug effect (DMSO), the underlying growth dynamics are consistent for these two experiments. Based on this assumption, we can estimate the initial TC (${\color{blue}X_{Conc}(0,0)}$) in the \textit{AGS Conc} data from TC at 48 hours ($X_{Conc}(48,0)$) based on the ratio between the initial TC (${\color{blue}X_{Time}(0,0)}$) and TC at 48 hours ($X_{Time}(48,0)$) in \textit{AGS Time} data. In other words, we assume the following equation holds
\begin{equation}
    \label{eq: Growth dynamic assumption}
    \frac{X_{Time}(48,0)}{\color{blue} X_{Time}(0,0)} = \frac{X_{Conc}(48,0)}{\color{blue} X_{Conc}(0,0)},
\end{equation}
where the colored quantity is estimated data. 

\textbf{Estimating the \textit{AGS Conc} data at 12 hour}

We estimate the ${\color{blue}X_{Time}(0,0)}$ data by fitting the exponential growth model to the DMSO \textit{AGS Time} data at $2,6,$ and $12$ time-points. After obtaining the estimation for ${\color{blue}X_{Conc}(0,0)}$, we further estimate the ${\color{blue}X_{Conc}(12,0)}$ based on the exponential growth model fitted to the DMSO \textit{AGS Time} data at $2,6,12$ time-points. Since we assume that CPX-O has no effect in the first 12 hours, we assign the same value of ${\color{blue}X_{Conc}(12,d)} = {\color{blue}X_{Conc}(12,0)} $ to all nine varying concentration levels. It is also worth mentioning that we estimate $\color{blue}X_{Conc}(12,0)$ separately for each replicate. 

\textbf{Estimating the \textit{AGS Conc} data at 24 hours}

To estimate the \textit{AGS Conc} data at $24$ hours, we first note that CPX-O affected cell growth dynamics during the time interval $(12,24$ hours). Therefore, we need to estimate the drug effect during the time interval $(12,24$ hours).
Since we only have three concentration levels, we proposed to utilize the Hill equation with two cytotoxic effects parameters $b,E$ to model the drug effect here. 
It is also worth noticing that the cytotoxic effect is the main effect that is observed. 
Based on the exponential growth model, we obtain 3 estimated growth rates $\lambda(0),\lambda(4),\lambda(8)$ for 3 concentration levels of CPX-O.
We then fit these growth rates through the 2 parameters Hill equations
\begin{equation*}
    \label{eq: exponential Hill}
    \lambda(d) = \lambda(0) + \log\left(b + \frac{1-b}{1+d/E}\right).
\end{equation*}
With the estimated parameters $b,E$, we estimate the growth rate $\lambda(d), d \in \{0,0.125,0.25,0.5,1,2,4,8,16\}.$
We then plug in these estimated growth rates to estimate the TC at 24 hours, i.e. 
$$
X(24,d) = X(12,0)\exp(12 \lambda(d)), d\in \{0,0.125,0.25,0.5,1,2,4,8,16\}.
$$

\subsubsection{Details of maximum likelihood estimation when AIC computation}

\label{appx:In vitro details}

\textbf{AGS--CPX-O AIC computation:}

To calculate the AIC value, we first need to obtain the maximum likelihood estimate for each model.
Following the experimental setup described in Appendix \ref{appx: data generation and optimization implementation.}, Table \ref{tab:AGS SORE6 range} provides an \textit{Optimization feasible range} used in the experiment. 
The same optimization range is applied to the free parameters across all four model assumptions.

\begin{table}[ht]
    \centering
    \begin{tabular}{|c|c|}
    \hline
        Parameters  & $\hat{\bTheta}$ \\
    \hline
        $\alpha_r$  & $(\beta_r,\beta_r+0.1)$\\
    \hline
        $\beta_{r}$  & $(0,1)$ \\
    \hline 
        $\nu_{rs}$  & $(0,1)$ \\
    \hline
        $\alpha_s$  & $(\beta_s,\beta_s+0.1)$ \\
    \hline 
        $\beta_s$  & $(0,1)$ \\
    \hline
        $b_{s,\beta}$  & $(0,1)$ \\
    \hline
        $E_{s,\beta}$ &$(0,16)$ \\
    \hline
        $m_{s,\beta}$ & $(0.1,5)$\\
    \hline
        $b_{s,\nu}$  & $(1,2)$ \\
    \hline
        $E_{s,\nu}$  & $(0,16)$ \\
    \hline
        $m_{s,\nu}$  & $(0.1,5)$\\
    \hline
        $k$ & $(0,1)$\\
    \hline
        $t_0$ & $(0,36)$\\
    \hline
        $c$ & $(0,300)$\\
    \hline
    \end{tabular}
    \caption{\textit{Optimization feasible range} $\hat{\bTheta}$ for \textit{in vitro} drug-induced plasticity experiment. }
    \label{tab:AGS SORE6 range}
\end{table}

\textbf{COLO858--Vemurafenib AIC computation:}

To computing the AIC value, we calculated the maximum likelihood estimate for each model based on the \textit{Optimization feasible range} described in Table {\ref{tab:COLO858--Vemurafenib range}}.

\begin{table}[ht]
    \centering
    \begin{tabular}{|c|c|}
    \hline
        Parameters  & $\hat{\bTheta}$ \\
    \hline
        $\alpha_s$  & $(0,0.1)$\\
    \hline
        $\beta_{s}$  & $(0,0.1)$ \\
    \hline 
        $\nu_{sr}$  & $(0,0.5)$ \\
    \hline
        $\alpha_r$  & $(0,0.1)$ \\
    \hline 
        $\beta_r$  & $(0,0.1)$ \\
    \hline
        $\nu_{rs}$ & $(0,0.5)$\\
    \hline
        $b_{r/s,\beta}$  & $(0,1)$ \\
    \hline
        $E_{r/s,\beta}$ &$(0,50)$ \\
    \hline
        $b_{s,\nu}$  & $(1,2)$ \\
    \hline
        $E_{s,\nu}$  & $(0,50)$ \\
    \hline
        $k$ & $(0,5)$\\
    \hline
        $t_0$ & $(0,36)$\\
    \hline
        $c$ & $(0,50)$\\
    \hline
    \end{tabular}
    \caption{\textit{Optimization feasible range} $\hat{\bTheta}$ for COLO858--Vemurafenib validation experiment. The subscripts $s$ and $r$ represent the sensitive and resistant subpopulations, respectively.}
    \label{tab:COLO858--Vemurafenib range}
\end{table}

    \bibliographystyle{plain}
    \bibliography{main.bib}

\end{document}